\definecolor{dgray}{RGB}{90,90,90}
\definecolor{gray}{RGB}{120,120,120}
\definecolor{lgray}{RGB}{150,150,150}
\definecolor{purple}{RGB}{150,0,150}
\definecolor{magenta}{RGB}{250,0,150}
\numberwithin{equation}{section}
\newtheorem{thm}{Theorem}[section]
\newtheorem{lem}[thm]{Lemma}
\newtheorem{Def}[thm]{Definition}
\newtheorem{prop}[thm]{Proposition}
\newtheorem{cor}[thm]{Corollary}
\newtheorem{conj}[thm]{Conjecture}
\renewcommand\S{\Sigma}
\newcommand\s{\sigma}
\renewcommand\d{\partial}
\newcommand\D{\nabla}
\newcommand\e{\epsilon}
\renewcommand\b{\beta}
\newcommand\ric{{\rm Ric}}
\newcommand\g{\gamma}
\newcommand\8{\infty}
\renewcommand\a{\alpha}
\newcommand{\field}[1]{\mathbb{#1}}
\DeclareFontFamily{OT1}{rsfs}{}
\DeclareFontShape{OT1}{rsfs}{m}{n}{ <-7> rsfs5 <7-10> rsfs7 <10->
rsfs10}{} \DeclareMathAlphabet{\mycal}{OT1}{rsfs}{m}{n}
\newcommand\vs{\vspace}
\newcommand\beq{\begin{equation}}
\newcommand\eeq{\end{equation}}
\newcommand\ben{\begin{enumerate}}
\newcommand\een{\end{enumerate}}
\newcommand\bit{\begin{itemize}}
\newcommand\eit{\end{itemize}}
\newcounter{mnotecount}[section]
\title{%\vspace{-.68in} 
Hausdorff closed limits and rigidity in \\ Lorentzian geometry}
\author[*]{Gregory J. Galloway}
\author[$\dag$]{Carlos Vega}
\affil[*]{\small Department of Mathematics, 

University of Miami, Coral Gables, FL }
\affil[$\dag$]{Department of Mathematics,

Saint Louis University, St. Louis, MO}
\begin{document}
\date{}

\newpage%title page
\thispagestyle{empty}
\maketitle

\begin{center}
{\it Dedicated to Robert Bartnik on the occasion of his 60th birthday}
\end{center}
\centerline{ \; \\}
\vspace{2pc}

\begin{abstract} 
We begin with a basic exploration of the (point-set topological) notion of Hausdorff closed limits in the spacetime setting. Specifically, we show that this notion of limit is well suited to sequences of achronal sets, and use this to generalize the `achronal limits' introduced in \cite{horo1}. This, in turn, allows for a broad generalization of the notion of Lorentzian horosphere introduced in \cite{horo1}. We prove a new rigidity result for such horospheres, which in a sense encodes various spacetime splitting results, including the basic Lorentzian splitting theorem. We use this to give a partial proof of the Bartnik splitting conjecture, under a new condition involving past and future Cauchy horospheres, which is weaker than those considered in \cite{GalBanach} and \cite{horo1}. We close with some observations on spacetimes with spacelike causal boundary, including a rigidity result in the positive cosmological constant case.
\end{abstract}

\newpage

%\centerline{ \; \\}
%\vspace{3pc}
%
%
%\vspace{-2pc}
\renewcommand\contentsname{}
\tableofcontents

\section{Introduction} 

In the spirit of the classical horospheres of hyperbolic geometry, the authors introduced a natural geometric and causal theoretic notion of horosphere in Lorentzian geometry in \cite{horo1}. By virtue of this approach, many of the technical analytic difficulties in dealing with conventional Lorentzian horospheres (associated to timelike rays via Lorentzian Busemann functions) is circumvented. The approach in \cite{horo1} also allowed for more general types of horospheres, including a new `Cauchy horosphere'. In the present paper we consider a very broad generalization of the definition of horosphere in \cite{horo1} based on Hausdorff closed limits.  As noted in \cite{BEE}, the important limit curve concept in Lorentzian geometry can be described in terms of such limits.  Somewhat in analogy, here we define a Lorentzian horosphere as the Hausdorff closed limit of a certain class of  Lorentzian spheres, which are in particular achronal boundaries; see Figure \ref{Minkrayhoro}. 

In Section \ref{secHausdorff} we review the definition of Hausdorff closed limits and establish some fundamental properties. In particular, we show that these limits preserve achronality and edgelessness, and further show that the Hausdorff closed limit of achronal boundaries is an achronal boundary itself. In Section~\ref{secspheresandhoros} we define a horosphere to be the Hausdorff closed limit of Lorentzian spheres, with `causally complete' centers, as the radii tend to infinity. This drops the monotonicity requirement used in \cite{horo1}, and the horospheres defined in \cite{horo1}, including the Ray horospheres and Cauchy horospheres, now become a special subclass. In Section \ref{sechorostructure} we present a very general splitting theorem for past and future horospheres that meet in a `noncrossing manner', which supersedes many known Lorentzian splitting results. In Section \ref{secapps} we discuss various applications of this horosphere splitting theorem to, e.g., the Lorentzian splitting theorem and the Bartnik splitting conjecture, as well as some rigidity results for spacetimes with spacelike (past or future) causal boundary.

\begin{figure}%[h]
\begin{center}
\def\svgwidth{12.1cm} 
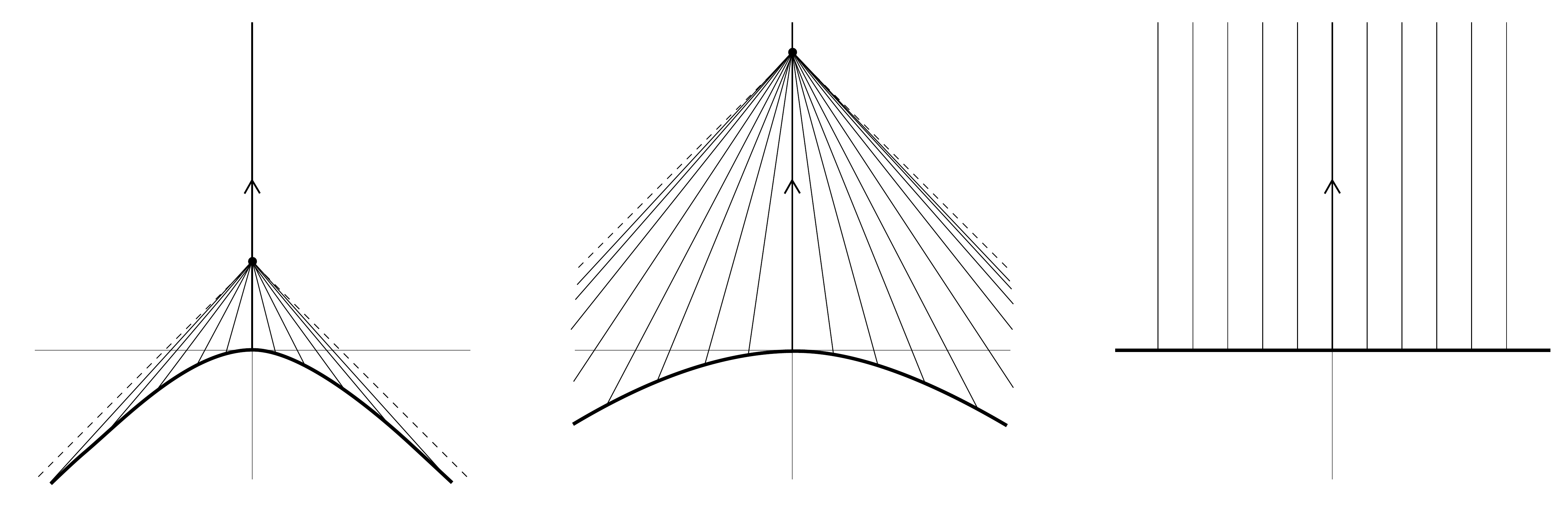
\caption[...]{The prototypical Lorentzian horosphere, from a future timelike ray $\g$, is shown here in Minkowski space. The conventional construction relies on the Busemann function associated to $\g$. The approach here is to define the horosphere directly as the Hausdorff closed limit of the sequence of past spheres from $\g$.}
\label{Minkrayhoro}
\end{center}
\end{figure}

\smallskip
\noindent
{\it Acknowledgments.}  Work on this paper was initiated during the Fall 2013 Program on Mathematical General Relativity at the Mathematical Sciences Research Institute (MSRI).  Further work was carried out at the Workshop on Geometric Analysis and General Relativity at the Banff International Research Station (BIRS) in July, 2016.  The authors would like to express their gratitude to both MSRI and BIRS for their support and for providing, in each case, a highly conducive research environment.  The work of GJG was also partially supported  by NSF grant DMS-1313724.

\section{Lorentzian Preliminaries} 

We begin with some brief Lorentzian preliminaries. 
For further background, we note the standard references \cite{ON}, \cite{BEE}, \cite{Wald}, \cite{HE}. This section also serves to set a few conventions, though we note that these remain unchanged from \cite{horo1}.

Throughout the following, $M = (M^{n+1}, g)$ will denote a spacetime, i.e., a connected, time-oriented Lorentzian manifold, of dimension $n +1 \ge 2$. We take the Lorentzian metric $g$ to be smooth. A vector $X \in TM$ will be called \emph{timelike} if $g(X, X) < 0$, \emph{null} if $g(X,X) = 0$, and \emph{spacelike} if $g(X,X) > 0$. We say $X$ is \emph{causal} if $X$ is either timelike or null.  

The time-orientation of $M$ implies that any nontrivial causal vector points either to the `future' or to the `past'. By a \emph{future causal curve}, we mean a piecewise-smooth curve $\a$, with $\a'$ always future causal, (including any one-sided tangents). Past causal curves are defined time-dually, and future/past timelike/null curves are defined analogously. By a `causal curve' we will always mean either a future causal curve or a past causal curve, and similarly for timelike and null curves.

If there is a future causal curve from $p \in M$ to $q \in M$, we write $p \le q$, or equivalently $q \in J^+(p)$, or $p \in J^-(q)$. If there is a future timelike curve from $p$ to $q$, we write $p \ll q$, or equivalently $q \in I^+(p)$, or $p \in I^-(q)$. More generally, we define the causal future $J^+(S)$ of any subset $S \subset M$ to be the set of points $q \in M$ which can be reached by a future causal curve starting from $S$. The sets $J^-(S)$, and $I^+(S)$ and $I^-(S)$, are defined analogously. 

Finally, we note that a spacetime is \emph{globally hyperbolic} if the set of all `timelike diamonds' $I^+(p) \cap I^-(q)$ forms a basis for the manifold topology, and all `causal diamonds' $J^+(p) \cap J^-(q)$ are compact. Because of its relationship to Lorentzian distance, we work exclusively in the globally hyperbolic setting from Section \ref{secspheresandhoros} on. All of Section \ref{secachlimits}, however, applies to general spacetimes. For further background on global hyperbolicity, and causal theory in general, we defer again to the references above.
 
\section{Achronal Limits} \label{secachlimits}

A subset $A \subset M$ is called \emph{achronal} if no two points in $A$ are joined by a timelike curve, i.e., $I^+(A) \cap A = \emptyset$. It is a basic causal theoretic fact that any achronal set without `edge' points is a $C^0$ hypersurface in $M$. A special case of such a set is that of an `achronal boundary', i.e., any nonempty set of the form $A = \partial I^\pm(S)$. 

In Section \ref{secABs}, we first treat some of the basic theory of achronal sets, and of achronal boundaries specifically, as studied by Penrose in \cite{Penrose}. In Section \ref{secHausdorff}, we then use the notion of Hausdorff closed limits to broadly generalize the `achronal limits' introduced in \cite{horo1}.

\subsection{Achronal Sets} \label{secABs}

To define the `edge' of an achronal set, we must first recall the notion of local or relative causality. Let $U \subset M$ be any open neighborhood, and let $p \in U$. By $I^+(p,U)$ we mean the timelike future of $p$ within the (sub)spacetime $U$. That is, $q \in I^+(p,U)$ iff there is a future timelike curve from $p$ to $q$ which lies completely within $U$. $I^-(p,U)$ is defined time-dually.

Now let $A \subset M$ be any achronal set. The \emph{edge} of $A$ is defined to be the set of points $p \in \overline{A}$ such that every neighborhood $U$ of $p$ contains a timelike curve from $I^-(p,U)$ to $I^+(p,U)$ which does not meet $A$. We say $A$ is \emph{edgeless} if $\textrm{edge}(A) = \emptyset$. The following is one of the fundamental consequences of achronality.

\begin{prop} [See \cite{ON}] Let $A$ be any nonempty achronal set. Then $A$ is a (topologically) closed $C^0$ hypersurface iff $A$ is edgeless. 
\end{prop}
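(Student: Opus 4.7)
The plan is to prove both directions of the biconditional, with ``edgeless implies closed $C^0$ hypersurface'' being the substantive one. The common thread in both directions is that achronality forces a timelike curve to meet $A$ at most once: two points of $A$ joined by a timelike curve would violate achronality directly.

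For the forward direction, suppose $A$ is a closed $C^0$ hypersurface and fix $p \in A$. I would take a chart around $p$ in which $A$ corresponds locally to $\bbR^n \times \{0\}$, so that a small neighborhood $U$ of $p$ splits into two open pieces $U_\pm$. Choosing $U$ inside a convex normal neighborhood makes $I^\pm(p, U)$ connected open sets avoiding $A$ by achronality; each therefore lies entirely in one of $U_\pm$, and inspecting a short timelike curve through $p$ shows they sit in opposite components. Any timelike curve in $U$ from $I^-(p, U)$ to $I^+(p, U)$ must then cross $A$, so $p \notin \textrm{edge}(A)$.

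For the reverse direction, first I would verify $A = \overline A$. If $p \in \overline A \setminus A$ and $U$ is an arbitrary neighborhood of $p$, pick a timelike curve $\g \colon [-1,1] \to U$ through $p = \g(0)$. Any intersection $\g(t_0) \in A$ with $t_0 \ne 0$ places $\g(t_0)$ in $I^+(p) \cup I^-(p)$; combined with a sequence $a_n \in A$ with $a_n \to p$ and openness of $I^\pm(\g(t_0))$, this produces two points of $A$ in timelike relation, contradicting achronality. So $\g$ avoids $A$, forcing $p \in \textrm{edge}(A)$, contrary to hypothesis. Next, for $p \in A$, since $p \notin \textrm{edge}(A)$ there is a neighborhood $U$ of $p$ in which every timelike curve from $I^-(p, U)$ to $I^+(p, U)$ meets $A$. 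I would introduce local coordinates $(t, x^1, \ldots, x^n)$ on a convex normal subneighborhood with $\d_t$ timelike, and shrink to a product box $(-\e, \e) \times B$ small enough that each vertical segment $\g_x(t) := (t, x)$, $x \in B$, is a timelike curve running from $I^-(p, U)$ into $I^+(p, U)$. By hypothesis each such segment meets $A$, and by achronality exactly once; set $f(x)$ to be the corresponding $t$-coordinate. If $x_n \to x$ in $B$ but $f(x_n) \to t^* \ne f(x)$, then for large $n$ the point $(f(x_n), x_n)$ would lie in the open set $I^\pm((f(x), x))$, again contradicting achronality; hence $f$ is continuous and $A$ locally coincides with its graph, giving the $C^0$ hypersurface structure.

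The main obstacle I anticipate is the graph construction in this last step: choosing the product box so that every vertical line really is a timelike curve from the local past of $p$ into its local future, and then extracting the continuity of $f$ cleanly from achronality. The remaining ingredients all reduce, essentially, to repeated application of the ``meets at most once'' principle for timelike curves through $A$.
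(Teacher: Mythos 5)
The paper itself gives no proof of this proposition --- it is quoted as standard background from O'Neill --- so the comparison is with the standard textbook argument, which is essentially what you reproduce. Your substantive direction (edgeless $\Rightarrow$ closed $C^0$ hypersurface) is correct and is the classical proof: closedness because any $p\in\overline{A}\setminus A$ would be an edge point (your argument that a short timelike curve through such a $p$ must miss $A$, via achronality and openness of $I^\pm(\gamma(t_0))$, is fine), followed by the local graph construction over a coordinate box whose vertical $\partial_t$-lines are timelike and run from $I^-(p,U)$ to $I^+(p,U)$. The obstacle you flag there is handled exactly as you anticipate: first move a small parameter distance along the $t$-axis from $p$ to get points of $I^\mp(p,U)$, then shrink the base $B$ using openness of $I^\mp(p,U)$ so the bottom and top faces of the box lie in $I^-(p,U)$ and $I^+(p,U)$; existence of the graph value then comes from the non-edge condition, uniqueness and continuity from achronality, as you say.

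The one step that does not hold up as written is in the forward direction: ``inspecting a short timelike curve through $p$ shows they sit in opposite components.'' Since $A$ is only a $C^0$ hypersurface, a continuous curve can meet it in a single point without crossing it (touch and return to the same side), and achronality only tells you the curve meets $A$ at $p$ alone; no transversality is available. So this inspection does not by itself place $I^-(p,U)$ and $I^+(p,U)$ in opposite components of $U\setminus A$. The claim is true, but you need achronality once more: choose $U$ so that $I^\pm(p,U)$ are connected (e.g.\ $U$ convex inside the chart domain), and suppose $q\in I^-(p,U)$ and $r\in I^+(p,U)$ lay in the same component $U_+$. Then $I^+(q,U)\cap I^-(r,U)$ is an open neighborhood of $p$, and since $p$ lies on the boundary of the other component $U_-$, it contains a point $x\in U_-$. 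The timelike curves in $U$ from $q$ to $x$ and from $x$ to $r$ each join the two components of $U\setminus A$, hence each meets $A$, producing $a_1,a_2\in A$ with $a_1\ll x\ll a_2$ --- contradicting achronality. With that patch, any timelike curve in $U$ from $I^-(p,U)$ to $I^+(p,U)$ avoiding $A$ would lie in a single component and is therefore impossible, so $p\notin\mathrm{edge}(A)$, and your proof is complete.
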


We now proceed to the special case of achronal boundaries. As in \cite{Penrose}, we say a subset $P \subset M$ is a \emph{past set} if it is the timelike past of a set, i.e., $P = I^-(S)$, for some $S \subset M$. It follows that $P$ is a past set iff $P = I^-(P)$. Future sets are defined time-dually. The nonempty boundary of a past or future set is called an \emph{achronal boundary}. Hence, an achronal boundary is a set of the form $\emptyset \ne A = \d I^\pm(S)$. 

\begin{prop} [\cite{Penrose}] \label{ABs} Let $A$ be an achronal boundary. Then $A$ is achronal and edgeless, and hence a closed $C^0$ hypersurface. Moreover, there is a unique past set $P$ such that $A = \partial P$, and a unique future set $F$ such that $A = \d F$, and this triple forms a disjoint partition, $M = P \cup A \cup F$. It follows that $I^-(A) \subset P$ and $I^+(A) \subset F$. 
\end{prop}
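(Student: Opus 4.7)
The plan is to fix a past set $P$ with $\partial P = A$ and derive all stated conclusions from the openness of $P$ together with the defining identity $P = I^-(P)$. Such $P$ exists: if $A = \partial I^-(S)$ take $P := I^-(S)$ directly, while if $A = \partial I^+(S')$ take $P := M \setminus \overline{I^+(S')}$, which a short check shows is a past set with $\partial P = A$.

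For achronality of $A$, I would suppose $p, q \in A$ with $p \ll q$; then $q \in \overline{P}$ and openness of $I^-(q)$ yield a point $q' \in P$ close enough to $q$ that $p \ll q'$, so $p \in I^-(P) = P$, contradicting $p \in \partial P = \overline{P} \setminus P$. For edgelessness, I would first establish, for each $p \in A$, the pointwise inclusions $I^-(p) \subset P$ and $I^+(p) \subset M \setminus \overline{P}$; these simultaneously deliver the inclusions $I^-(A) \subset P$ and $I^+(A) \subset F$ asserted at the end. The first uses the same `nearby $P$ point' trick: if $y \ll p$, the open set $I^+(y)$ meets $P$, so $y \in I^-(P) = P$. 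The second splits into $I^+(p) \cap P = \emptyset$ (else $p \in I^-(P) = P$) and $I^+(p) \cap A = \emptyset$ (just-established achronality). Then any timelike curve in a neighborhood $U$ of $p$ from $I^-(p,U) \subset P$ to $I^+(p,U) \subset M \setminus \overline{P}$ is a connected set meeting both of the disjoint open sets $P$ and $M \setminus \overline{P}$, hence meets their complement $A$. Thus $p$ is not an edge point, and Proposition~2.1 promotes $A$ to a closed $C^0$ hypersurface. Setting $F := M \setminus \overline{P}$ gives the disjoint partition $M = P \sqcup A \sqcup F$, and a quick check using $I^-(A) \subset P$ confirms $F$ is a future set.

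The main obstacle is uniqueness of $P$ (and hence of $F$). Given another past set $P'$ with $\partial P' = A$, set $F' := M \setminus \overline{P'}$ (again a future set by the argument just given). I would show $P \cap F' = \emptyset$. This intersection is open and its topological boundary lies in $\partial P \cup \partial F' = A$; since $A$ is disjoint from $P \cap F'$ and is nonempty, $P \cap F'$ is a proper open subset of the connected manifold $M$, so if nonempty its boundary is nonempty and yields some $a \in A \cap \overline{P \cap F'}$. In a convex normal neighborhood $U$ of $a$, the $C^0$ hypersurface $A$ locally separates $U$ into a `past side' (containing $I^-(a,U)$) and a `future side' (containing $I^+(a,U)$); the pointwise inclusions established above force $P \cap U$ and $P' \cap U$ both to equal the past side, and $F \cap U$ and $F' \cap U$ both to equal the future side. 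Hence $P \cap F' \cap U = \emptyset$, contradicting $a \in \overline{P \cap F'}$. Therefore $P \subset M \setminus F' = \overline{P'} = P' \cup A$, and since $P$ is open and disjoint from $A$, $P \subset P'$; by symmetry $P = P'$, and analogously $F = F'$.
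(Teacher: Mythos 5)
This proposition is not proved in the paper at all --- it is quoted from Penrose --- so there is no internal argument to compare against; I can only judge your proof on its own terms. Most of it is sound and follows the standard route: reduce to a past set $P$ with $\partial P=A$, prove achronality, establish the pointwise inclusions $I^-(p)\subset P$ and $I^+(p)\subset M\setminus\overline{P}$ for $p\in A$, get edgelessness from the connectedness/partition trick, and set $F:=M\setminus\overline{P}$. Two routine facts are used repeatedly but never recorded: for a past set $P$ one has $\mathrm{int}\,(\overline{P})=P$, hence $\partial\overline{P}=\partial P$ (this is what makes $P:=M\setminus\overline{I^+(S')}$ have boundary $A$ in your setup, what gives $\partial F=A$ so that $F$ really is a future set with boundary $A$, and what justifies the identity $\partial F'=A$ invoked in the uniqueness step). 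Also, in the achronality paragraph the relevant open set is $I^+(p)$, a neighborhood of $q$, not $I^-(q)$ --- a harmless slip.

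The one step that does not hold as literally stated is the local separation claim in the uniqueness argument. For an arbitrary convex normal neighborhood $U$ of $a$, the trace $A\cap U$ need not split $U$ into a ``past side'' and a ``future side'' with the properties you use (relative to $U$, the set $A\cap U$ can fail to separate, or the component containing $I^-(a,U)$ can contain points of $F$), and being a $C^0$ hypersurface alone does not give two-sided local separation. What you need is the adapted cylindrical neighborhood that comes out of the proof of Proposition 2.1: achronality plus edgelessness at $a$ make $A\cap U$ a graph over the base of a small cylinder with timelike fibers, so $U\setminus A$ has exactly two components; then your pointwise inclusions, applied at the graph point over each fiber (not just at $a$), together with the partition $M=P\sqcup A\sqcup F$, do force $P\cap U$ and $P'\cap U$ to be the lower component and $F\cap U$, $F'\cap U$ the upper one. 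Alternatively, you can avoid local separation altogether with tools you have already set up: if $a\in A$ lies in $\overline{P\cap F'}$, pick $x\in I^-(a)\subset P\cap P'$ and $y\in I^+(a)\subset F\cap F'$; for $d\in P\cap F'$ close enough to $a$ one has $x\ll d\ll y$, and a future timelike curve running $x\to d\to y$ must meet $A$ once between $x$ (in $P'$) and $d$ (in $F'$), and again between $d$ (in $P$) and $y$ (in $F$), by exactly the connectedness argument you used for edgelessness; two chronologically related points of $A$ contradict achronality. This yields $P\cap F'=\emptyset$ directly, and the remainder of your uniqueness argument then goes through unchanged.
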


Hence, if $A$ is an achronal boundary, then any future timelike curve from $I^-(A)$ to $I^+(A)$ must pass through $A$. While this fails in general if $A$ is only taken achronal and edgeless, the following result says, in effect, that this does hold locally.

\begin{lem} \label{diamondslicing} Let $A$ be an achronal and edgeless subset of a spacetime $(M,g)$.   Let $U$ be a convex normal neighborhood of $M$, and let $N$ be a globally hyperbolic sub-spacetime of $(U,g|_U)$.  If $A$ enters a timelike diamond $I^+(x,N) \cap I^-(y,N)$, then any future timelike curve from $x$ to $y $ in $N$ must meet $A$. 
\end{lem}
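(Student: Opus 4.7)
Plan: The idea is to reduce the claim to the separation property for achronal boundaries (Proposition~\ref{ABs}), applied inside the globally hyperbolic sub-spacetime $N$. Set $V := I^+(x, N) \cap I^-(y, N)$ and pick $p \in A \cap V$, so that $x \ll p \ll y$ in $N$. Let $\beta$ be a future timelike curve from $x$ to $y$ in $N$; the goal is to show $\beta$ meets $A$.

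The main construction is the open past set $P := I^-(A, N) \subseteq N$, which is non-empty since $x \in P$. By Proposition~\ref{ABs}, $\partial_N P$ is an achronal boundary in $N$, with a unique future set $F^* \subseteq N$ such that $\partial_N F^* = \partial_N P$ and the disjoint partition $N = P \cup \partial_N P \cup F^*$ holds, satisfying $I^-(\partial_N P, N) \subseteq P$ and $I^+(\partial_N P, N) \subseteq F^*$.

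The key claim is then $\partial_N P = A \cap N$. The inclusion $A \cap N \subseteq \partial_N P$ is easy: any $a \in A \cap N$ has $I^-(a, N) \subseteq P$ non-empty and accumulating at $a$ from below, while $a \notin P$ by achronality of $A$. For the reverse inclusion $\partial_N P \subseteq A \cap N$, the plan is to invoke the standard causal-theoretic fact that null-geodesic generators of an achronal boundary $\partial I^-(S)$ have past endpoints only on $\mathrm{edge}(S)$; a brief argument (passing $M$-neighborhoods of $a$ to $N$-neighborhoods) shows $A \cap N$ is edgeless in $N$, so no null extensions of $A \cap N$ contribute to $\partial_N P$. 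Once the claim is in hand, $p \in A \cap N = \partial_N P$ combined with $x \ll p \ll y$ gives $x \in I^-(\partial_N P, N) \subseteq P$ and $y \in I^+(\partial_N P, N) \subseteq F^*$. Since $\beta(0) = x \in P$ and $\beta(1) = y \in F^*$ with $P$ open, the first exit parameter $s_0 := \inf\{s \in [0,1] : \beta(s) \notin P\}$ satisfies $\beta(s_0) \in \overline P \setminus P = \partial_N P = A \cap N$, so $\beta$ meets $A$.

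The main obstacle is the identification $\partial_N P = A \cap N$: while morally automatic from edgelessness (no null-generator extensions of an edgeless hypersurface), rigorously ruling out spurious generators inside the open sub-spacetime $N$ requires care with limit-curve compactness arguments, leveraging the globally hyperbolic structure of $N$ and the convex normal structure of the ambient $U$.
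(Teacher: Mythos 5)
Your reduction to Proposition~\ref{ABs} is fine in outline, and the easy inclusion $A\cap N\subseteq \partial_N P$ together with the exit-time argument is correct. But the entire content of the lemma is hiding in your ``key claim'' $\partial_N P\subseteq A\cap N$, i.e.\ that $A\cap N$ is an achronal boundary \emph{in the subspacetime} $N$, and this is exactly the step you do not prove. The standard fact you invoke does not cover it: the structure theorem for $\partial I^-(A\cap N,N)$ says that a boundary point $z\notin A\cap N$ lies on a future-directed null generator contained in the boundary which \emph{either} has a future endpoint on $A\cap N$ \emph{or} is future-inextendible in $N$. Edgelessness (and the corner argument) only kills the first alternative. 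The second alternative is not touched by edgelessness at all: since you work with all of $A\cap N$ rather than a compact piece, the points $b_k\in A\cap N$ reached by timelike curves from points just below $z$ need not stay in any compact subset of $N$ (the set $A\cap J^+(w,N)$ is closed but not compact), so the limit-curve argument that would produce a generator \emph{with} endpoint on $A$ can fail, and spurious boundary components ruled by generators that run off ``through $\partial N$'' are not excluded. Indeed, when $A$ is merely achronal and edgeless this kind of failure of the global separation property is precisely what the paper warns about just before the lemma, and your proposal nowhere uses the hypotheses that would control it; in particular the convex normal neighborhood $U$ plays no role in your argument, whereas it must enter somewhere, since edgelessness is a purely local condition and the claim $\partial_N I^-(A\cap N,N)=A\cap N$ is a global statement about $N$.

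The paper's proof shows how to sidestep exactly this issue, and the difference is instructive: instead of $P=I^-(A\cap N,N)$ one works with $A_0:=A\cap J^+(x,N)\cap J^-(y,N)$, which is \emph{compact} because $A$ is closed and $N$ is globally hyperbolic. Then $J^-(A_0,N)$ is closed, so the point $z_0$ where the curve $\b$ meets $\partial I^-(A_0,N)$ is guaranteed to be joined to $A_0$ by a null generator with future endpoint $a_0\in A_0$ --- the ``inextendible generator'' loophole is closed by compactness, not by a citation. Only then does one need the corner argument, and there the convexity of $U$ is used in an essential way (the unique geodesic in $U$ from $z_0$ to $y$ is timelike, forcing a genuine corner at $a_0$) to contradict edgelessness of $A$. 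To repair your write-up you would either have to reprove something equivalent to this (in which case you may as well localize to the diamond from the start), or supply an actual proof that $A\cap N$ is an achronal boundary in $N$ under these hypotheses --- which you have not done, and which is not a consequence of the ``endpoints only on the edge'' statement you cite.
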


\begin{proof} Consider $A_0 := A \cap J^+(x, N) \cap J^-(y, N)$. Note that $x \in I^-(A_0, N)$ and $y \in I^+(A_0, N)$. Hence, letting $\b$ be any future timelike curve in $N$ from $x$ to $y$, then $\b$ must meet $\partial I^-(A_0, N)$ at some point $z_0$. Since $A$ is closed, and $N$ is globally hyperbolic, it follows that $A_0$ is compact, and hence $J^-(A_0, N)$ is closed in $N$. Thus $z_0 \in J^-(A_0, N)$. If $z_0 \in A_0$, we are done. Suppose then that $z_0 \not \in A_0$. Then by standard causal theory, there is a future null geodesic $\eta$ in $N$ from $z_0$ to $a_0 \in A_0$, with $\eta \subset \partial I^-(A_0, N)$, and $\eta \cap A_0 = \{a_0\}$. Since $a_0 \in J^-(y, N)$ and $N$ is globally hyperbolic, there is a future causal geodesic $\zeta$ in $N$ from $a_0$ to $y$. Hence, the concatenation $\eta + \zeta$ gives a future causal curve in $N$ from $z_0$ to $y$. But since $z_0 \in I^-(y, U)$ and $U$ is convex, the {\it unique} geodesic joining $z_0$ and $y$ in $U$ is timelike. It follows that $\eta$ and $\zeta$ must form a `corner' at $a_0$, and hence that every point of $\zeta \setminus \{a_0\}$ is in the timelike future of every point of $\eta \setminus \{a_0\}$. Moreover, since $z_0 \in \partial I^-(A_0, N)$, it follows that $\zeta \cap A_0 = \{a_0\}$. But then $\eta \cap A = \{a_0\} = \zeta \cap A$ implies that $a_0$ is an edge point of $A$, a contradiction. Hence $\beta$ does in fact meet $A$ at $z_0 \in A_0 \subset A$.
\end{proof}

\subsection{Achronal Limits} \label{secHausdorff}

In \cite{horo1}, a natural notion of `achronal limit' was defined for sequences of achronal boundaries exhibiting a basic kind of monotonicity. The results of this section broadly generalize such limits, using so-called `Hausdorff closed limits'. In particular, we establish the following facts:

\begin{thm} \label{achlimitsresults} Let $\{A_k\}$ be any sequence of subsets with Hausdorff closed limit, 
$$A_\infty = \lim \{A_k\}$$
If each $A_k$ is achronal, then so is $A_\infty$. If further each $A_k$ is edgeless, then so is $A_\infty$. Finally, if each $A_k$ is an achronal boundary, then so too is $A_\infty$. 
\end{thm}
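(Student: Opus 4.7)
My plan is to dispatch the three claims in order, relying throughout on the characterization $A_\infty = \liminf A_k = \limsup A_k$ coming from the definition of Hausdorff closed limit. Two features will be used repeatedly: from $\liminf$, every $p \in A_\infty$ is a limit of some sequence $p_k \in A_k$; from $\limsup$, every convergent subsequence $z_{k_j} \to z$ with $z_{k_j} \in A_{k_j}$ has $z \in A_\infty$. In particular $A_\infty$ is automatically closed.

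For achronality of $A_\infty$: if $p \ll q$ were to hold for $p, q \in A_\infty$, then openness of the chronology relation supplies neighborhoods $U_p, U_q$ with $U_p \ll U_q$, and choosing $p_k \in A_k \cap U_p$, $q_k \in A_k \cap U_q$ gives $p_k \ll q_k$ in $A_k$, contradicting achronality. For edgelessness, take $p \in A_\infty$ and a convex normal, globally hyperbolic neighborhood $U$ of $p$. Given any future timelike curve $\beta \subset U$ from $x \in I^-(p, U)$ to $y \in I^+(p, U)$, the open diamond $I^+(x, U) \cap I^-(y, U)$ contains $p$, hence meets $A_k$ for $k$ large; Lemma~\ref{diamondslicing} (with $N = U$) then forces $\beta \cap A_k \ni z_k$ for such $k$. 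Compactness of the image of $\beta$ yields a subsequence $z_{k_j} \to z \in \beta$, and $z_{k_j} \in A_{k_j}$ with $A_\infty = \limsup A_k$ gives $z \in A_\infty$, so $p$ is not an edge point.

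The achronal-boundary claim is the main obstacle. Let $P_k$ be the unique past set with $\partial P_k = A_k$ from Proposition~\ref{ABs}, and let $F_k$ be the complementary future set. The candidate past set for $A_\infty$ is
\[
P := \{\, p \in M : \text{some open } V \ni p \text{ satisfies } V \subset P_k \text{ for all sufficiently large } k \,\},
\]
with $F$ defined analogously. Then $P$ is open by construction, and a routine use of the openness of chronological pasts gives $P = I^-(P)$, so $P$ is a past set. The disjointness $P \cap F = P \cap A_\infty = \emptyset$ is immediate: any $p \in A_\infty \cap P$ would admit $p_k \to p$ with $p_k \in A_k$, and eventually $p_k \in V$, contradicting $V \subset P_k$.

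The remaining content is the identification $\partial P = A_\infty$. For $A_\infty \subset \partial P$: given $p \in A_\infty$ with $p_k \to p$ and any $q \ll p$ close to $p$, choose open $V \ni q$ with $\overline{V}$ compact and contained in $I^-(p)$. The tube lemma applied to the open set $\{(v, w) \in M \times M : v \ll w\}$ around $\overline{V} \times \{p\}$ yields an open neighborhood $W$ of $p$ with $\overline{V} \ll W$; since $p_k \in W$ eventually, $V \subset I^-(p_k) \subset P_k$ eventually, so $q \in P$. Letting $q$ range over $I^-(p)$ arbitrarily close to $p$ gives $p \in \overline{P}$, while $p \notin P$ follows from $P \cap A_\infty = \emptyset$. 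For the reverse inclusion $\partial P \subset A_\infty$, suppose $p \notin A_\infty$ and take a connected neighborhood $V$ of $p$ disjoint from $A_k$ for $k \geq k_0$. By connectedness, $V \subset P_k$ or $V \subset F_k$ for each such $k$. If $V \subset P_k$ eventually, then $p \in P$, so $p \notin \partial P$. Otherwise $V \subset F_k$ along an infinite subsequence; any putative $r \in V \cap P$ would admit a witness $W$ whose restriction $W \cap V$ both witnesses $r \in P$ (hence lies in $P_k$ eventually) and is contained in $V \subset F_k$ along the $F$-subsequence, which is impossible. Hence $V \cap P = \emptyset$, and since $V$ is open, $p \notin \overline{P}$ and so $p \notin \partial P$. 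The subtle alternation case is what I expect to require the most care; the argument just given handles it, but depends on the connectedness of $V$ and on the fact that $P_k$ and $F_k$ are open and partition $M \setminus A_k$.
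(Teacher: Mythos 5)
Your argument is correct, and while the first two claims follow the paper's own route, your treatment of the achronal-boundary claim is genuinely different. For achronality you argue exactly as in Lemma \ref{achronallower} (openness of the chronology relation), and for edgelessness exactly as in Lemma \ref{edgelessupper2}, via Lemma \ref{diamondslicing} applied in a small globally hyperbolic neighborhood (the paper phrases this with a globally hyperbolic sub-spacetime $N$ of a convex neighborhood $U$ rather than asking $U$ itself to be both, but this is cosmetic). For the third claim, the paper first proves that the associated pasts $P_k$ themselves have a Hausdorff closed limit $\Pi_\infty$ (Proposition \ref{achlimit1}, a nontrivial step involving a path that first crosses $\partial I^{\pm}(A_\infty)$), and then identifies $A_\infty = \partial\,(\mathrm{int}\,\Pi_\infty)$ with $\mathrm{int}\,\Pi_\infty$ a past set (Proposition \ref{Hausdorffpasts}). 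You bypass the convergence of the pasts entirely by directly defining the candidate past set $P$ as the set of points possessing a neighborhood eventually contained in $P_k$, checking $P = I^-(P)$, and verifying $\partial P = A_\infty$ using connected neighborhoods that miss all but finitely many $A_k$ together with the open disjoint partition $M \setminus A_k = P_k \cup F_k$ from Proposition \ref{ABs}; all of these verifications are sound, including the alternation case, which your witness-intersection argument handles correctly. What each approach buys: yours is shorter and self-contained for Theorem \ref{achlimitsresults}; the paper's detour yields the stronger equivalence of Theorem \ref{achlimitsequiv} (existence of $\lim\{P_k\}$, $\lim\{A_k\}$, $\lim\{F_k\}$ are all equivalent, with the global partition $M = P_\infty \cup A_\infty \cup F_\infty$), which is recorded as a separate result. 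One small caveat, shared with the paper's corollary: an achronal boundary is by definition nonempty, so the final claim should be read with the tacit assumption $A_\infty \ne \emptyset$, since a sequence of achronal boundaries can have empty Hausdorff closed limit.
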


Theorem \ref{achlimitsresults} thus demonstrates that all of the basic properties of achronal sets are preserved under Hausdorff closed limits. In light of these results, if $\{A_k\}$ is any sequence of achronal subsets, with Hausdorff closed limit $A_\infty = \lim \{A_k\}$, we will call $A_\infty$ the \emph{achronal limit} of $\{A_k\}$. That such limits do indeed generalize those in \cite{horo1} follows immediately from Proposition 2.5 in \cite{horo1} and Lemma \ref{Hausdorffseq} below. (Further discussion of this point is included at the end of this subsection.)

As we will see, the first statement in Theorem \ref{achlimitsresults} follows quite easily, while the next two are somewhat more subtle. The complete proof will be carried out in stages, culminating in Theorems \ref{achlimitsedgeless} and \ref{achlimitsequiv} below. As an immediate application, these results will be used in Section \ref{sechoros} to generalize the horospheres defined in \cite{horo1}.

We begin by recalling the following definitions, introduced by Hausdorff in \cite{HausdorffST}, and used, for example, in \cite{Busemann}, \cite{BEE}, \cite{Papadopoulos}.

\begin{Def} [Hausdorff Closed Limits, \cite{HausdorffST}] \label{defHausdorff} Let $\{S_k\}$ be a sequence of subsets of a topological space $\mathcal{M}$. The \emph{Hausdorff upper} and \emph{lower limits} of $\{S_k\}$ are defined, respectively, by

\vspace{1pc}
\centerline{\, $S^\mathrm{up}_\infty = \overline{\lim} \{S_k\} = \{p :$ \text{each neighborhood of $p$ meets infinitely many $S_k$'s}$\}$ \; \; }

\vspace{1pc}

\centerline{\, $S^\mathrm{low}_\infty = \underline{\lim} \{ S_k\}  = \{p  :$ \text{each neighborhood of $p$ misses only finitely many $S_k$'s}$\}$}

\vspace{1pc}
\noindent
Hence, in general, $\underline{\lim} \{ S_k\} \subset \overline{\lim} \{S_k\}$. In the case of equality, the common limit is called the \emph{Hausdorff closed limit} of $\{S_k\}$, which we denote by $S_\infty = \lim \{ S_k \}$. 
\end{Def}

\medskip
It is straightforward to check that $S^\mathrm{up}_\infty$ and $S^\mathrm{low}_\infty$ are closed. Hence, when it exists, the Hausdorff closed limit $S_\infty$ is indeed closed. In a metric space, this notion of limit is closely related to convergence of subsets with respect to the Hausdorff distance; see \cite{Papadopoulos} for some basic discussion. Moreover, the following characterizations are easily verified:

\begin{lem} \label{Hausdorffseq} Let $\{S_k\}$ be a sequence of subsets of a metric space $\mathcal{M}$.
\ben
\item [(1)] $S^\mathrm{up}_\infty $ is precisely the set of limit points of sequences $s_k \in S_k$. 
\item [(2)] $S^\mathrm{low}_\infty $ is precisely the set of limits of sequences $s_k \in S_k$. 
\een
In particular, if $S_\8$ exists, then any limit point of a sequence $x_k \in S_k$ is in $S_\8$, and every point in $S_\8$ is the limit of some (convergent) sequence $y_k \in S_k$.
\end{lem}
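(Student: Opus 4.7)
The plan is to prove each of the two characterizations as a pair of matching inclusions, exploiting the fact that in a metric space every point $p$ has the countable neighborhood basis $\{B(p, 1/j)\}_{j \in \bbN}$.

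For statement (1), the reverse inclusion is essentially immediate from the definitions: if a subsequence $s_{k_j} \in S_{k_j}$ converges to $p$, then any neighborhood $U$ of $p$ contains all but finitely many of the $s_{k_j}$, so $U$ meets infinitely many $S_k$'s and thus $p \in S^\mathrm{up}_\infty$. For the forward inclusion, I would run a diagonal construction. Given $p \in S^\mathrm{up}_\infty$, the ball $B(p, 1/j)$ meets $S_k$ for infinitely many $k$, so one may inductively choose indices $k_1 < k_2 < \cdots$ and points $s_{k_j} \in B(p, 1/j) \cap S_{k_j}$; after padding arbitrarily at the remaining indices (or restricting to the $k_j$ subsequence if some $S_k$ is empty), one has a sequence $s_k \in S_k$ for which $s_{k_j} \to p$, so $p$ is a limit point of $s_k$.

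For statement (2), the reverse inclusion is again immediate: if $s_k \to p$ with $s_k \in S_k$, then every neighborhood of $p$ contains $s_k$ eventually and so fails to meet only finitely many $S_k$'s. For the forward inclusion, take $p \in S^\mathrm{low}_\infty$ and choose, for each $j$, a threshold $K_j$ with $B(p, 1/j) \cap S_k \ne \emptyset$ whenever $k \ge K_j$; arranging the $K_j$ to be strictly increasing in $j$, I would define $j(k) = \max\{\, j : K_j \le k\,\}$ and pick $s_k \in B(p, 1/j(k)) \cap S_k$. Since $j(k) \to \8$ as $k \to \8$, we get $s_k \to p$, as desired.

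The ``in particular'' clause is then automatic: when the Hausdorff closed limit $S_\8$ exists, we have $S_\8 = \underline{\lim}\{S_k\} = \overline{\lim}\{S_k\}$, so (1) shows every subsequential limit of a sequence $x_k \in S_k$ lies in $S_\8$, and (2) shows every point of $S_\8$ arises as the limit of some sequence $y_k \in S_k$. There is no real obstacle here: the only mild subtlety is the bookkeeping in the forward directions, namely how to define the sequence at the ``leftover'' indices where the chosen ball is not yet close enough (or where $S_k$ could conceivably be empty). This is handled by arbitrary padding or by passing to a subsequence, and the metric assumption ensures the diagonal procedure produces a genuine convergent sequence.
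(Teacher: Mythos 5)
Your argument is correct, and it is exactly the routine verification the paper has in mind: the paper offers no proof, stating only that these characterizations are ``easily verified,'' and your diagonal choice from the balls $B(p,1/j)$ for the upper limit, together with the threshold construction for the lower limit, is the standard way to supply it. Your handling of the bookkeeping at leftover or empty $S_k$ indices (padding or passing to a subsequence) is also the right way to dispose of the only minor subtlety in the statement.
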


The following implies that achronality is preserved under Hausdorff closed limits.

\begin{lem} \label{achronallower} Let $\{A_k\}$ be any sequence of achronal subsets of a spacetime $M$. Then the Hausdorff lower limit, $A^\textrm{low}_\infty$, is achronal.   
\end{lem}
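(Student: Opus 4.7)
The plan is to argue by contradiction, exploiting the openness of the chronology relation together with the characterization of $A^{\text{low}}_\infty$ given in Lemma \ref{Hausdorffseq}(2). Suppose, for contradiction, that $A^{\text{low}}_\infty$ fails to be achronal. Then there exist $p, q \in A^{\text{low}}_\infty$ with $p \ll q$, i.e., there is a future timelike curve from $p$ to $q$.

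Apply part (2) of Lemma \ref{Hausdorffseq} to produce sequences $p_k \in A_k$ with $p_k \to p$ and $q_k \in A_k$ with $q_k \to q$. The key input now is the standard fact that the chronological relation is open: for any $p \ll q$ there exist open neighborhoods $U \ni p$ and $V \ni q$ such that $p' \ll q'$ whenever $p' \in U$ and $q' \in V$ (one way to see this is to take a small variation of the connecting timelike curve). Using this, for all sufficiently large $k$ we obtain $p_k \ll q_k$, with both points in the same $A_k$, which directly contradicts the achronality of $A_k$. Hence no such pair $p,q$ can exist, and $A^{\text{low}}_\infty$ is achronal.

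There is no real obstacle here; the result hinges on only two ingredients: the metric-space reformulation of Hausdorff lower limit (which gives approximating sequences inside each $A_k$) and the openness of $\ll$. What makes the statement work for $A^{\text{low}}_\infty$ rather than just for $A^{\text{up}}_\infty$ is that we need approximating sequences for \emph{both} points simultaneously drawn from the same $A_k$; for the upper limit one only gets sequences along subsequences of indices, so a priori the indices realizing $p$ and $q$ need not match. In the lower-limit case the indices automatically agree for all large $k$, which is exactly what the argument requires. Accordingly, the proof proposed above is a short contradiction argument, and I would expect it to be at most a few lines when written out in full.
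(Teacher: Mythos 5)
Your proof is correct and follows essentially the same route as the paper: a contradiction argument that extracts approximating sequences $p_k, q_k \in A_k$ for the two chronologically related limit points and then uses the openness of $\ll$ (which the paper realizes concretely by taking $U = I^-(\a(\e))$ and $V = I^+(\a(b-\e))$ along the connecting timelike curve) to force $p_k \ll q_k$ within a single $A_k$, contradicting its achronality. No gaps; your remark about why the same argument cannot work for the upper limit also matches the paper's Figure~\ref{nonachronalupper}.
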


\begin{proof} Suppose to the contrary that there is a future timelike curve $\a : [0, b] \to M$ from $\a(0) \in A^\textrm{low}_\infty$ to $\a(b) \in A^\textrm{low}_\infty$. For $0 < \e$ small, let $U := I^-(\a(\e))$ and $V := I^+(\a(b-\e))$. Hence, $U$ is an open neighborhood of $\a(0)$, and $V$ is an open neighborhood of $\a(b)$, and every point in $U$ is timelike related to every point in $V$. Note that $\a(0)$ is the limit of a sequence $x_k \in A_k$, and $\a(b)$ is the limit of a sequence $y_k \in A_k$. Hence, for all sufficiently large $k$, $A_k$ must enter both $U$ and $V$. But this violates the achronality of $A_k$. 
\end{proof}

\begin{figure} [h]
\begin{center}
\def\svgwidth{13.5cm} 
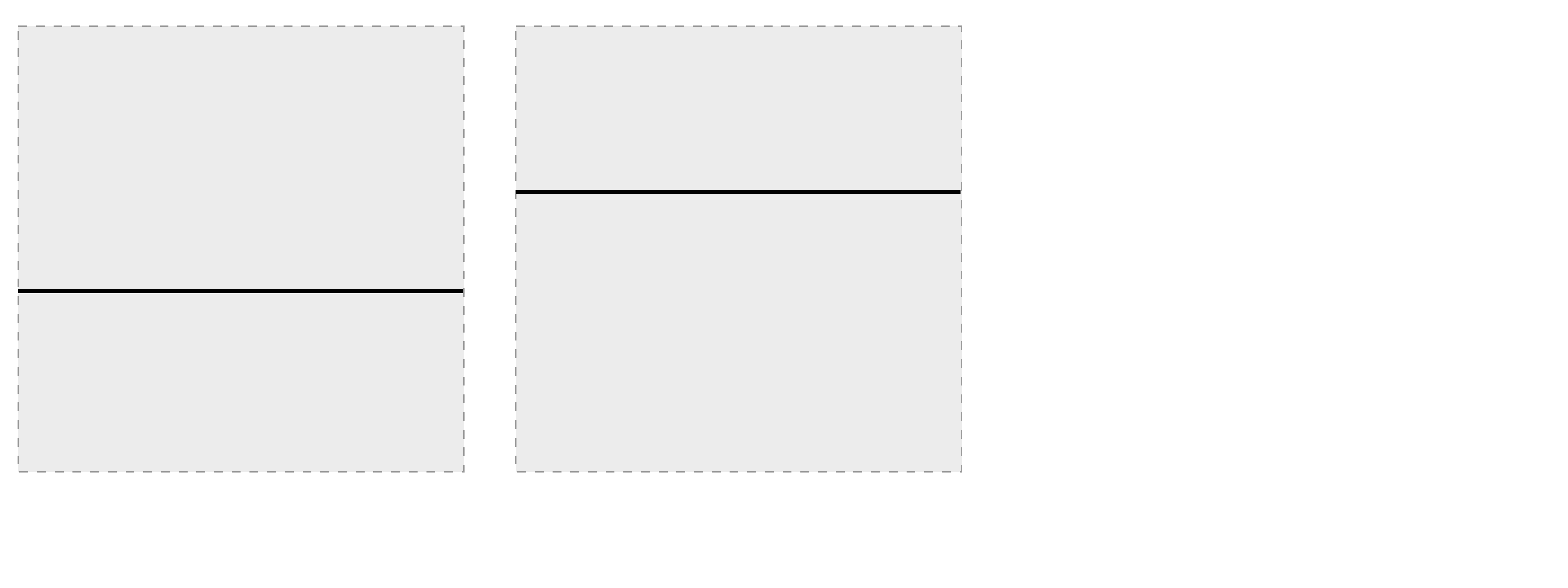
\caption[...]{The upper limit of achronal sets may not be achronal.
\label{nonachronalupper}} 
\end{center}
\end{figure}

\begin{figure} [h]
\begin{center}
\def\svgwidth{13.5cm} 
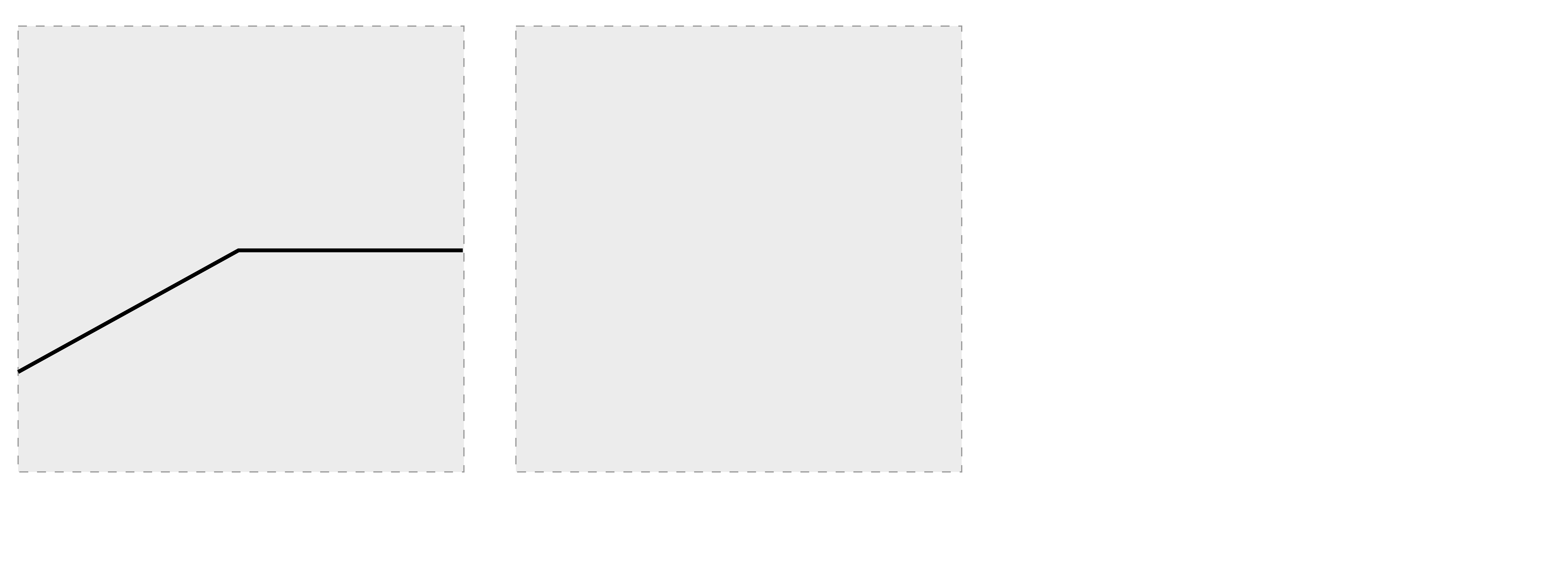
\caption[...]{The lower limit of achronal sets may develop an edge.
\label{nonedgelesslower}} 
\end{center}
\end{figure}

Figure \ref{nonachronalupper} shows how Lemma \ref{achronallower} can fail for upper limits. On the other hand, Figure \ref{nonedgelesslower} shows how a lower limit can develop an edge.
By contrast, upper limits do not develop edge points in this way, as shown in the next lemma.

\begin{lem} \label{edgelessupper2} Let $\{A_k\}$ be a sequence of edgeless achronal sets, such that the Hausdorff upper limit $A^\textrm{up}_\infty$ is achronal. Then $A^\textrm{up}_\infty$ is also edgeless.
\end{lem}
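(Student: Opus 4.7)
The plan is to argue by contradiction. Suppose $p \in A^\textrm{up}_\infty$ is an edge point. I want to use the definition of edge to produce a timelike curve $\b$ near $p$ that avoids $A^\textrm{up}_\infty$, and then, via Lemma \ref{diamondslicing}, force $\b$ to intersect each of a suitable subsequence of the $A_k$'s. Passing to a limit of these intersection points and invoking Lemma \ref{Hausdorffseq}(1) will produce a point of $A^\textrm{up}_\infty$ on $\b$, contradicting our choice.

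First I would fix a convex normal neighborhood $U$ of $p$ that is itself globally hyperbolic. (Standard causal theory guarantees that every point of a spacetime admits arbitrarily small neighborhoods with these two properties, so we may use $N = U$ when later applying Lemma \ref{diamondslicing}.) Since $p$ is an edge point of $A^\textrm{up}_\infty$, the definition then furnishes a timelike curve $\b$ inside $U$, from some $x \in I^-(p,U)$ to some $y \in I^+(p,U)$, whose image is disjoint from $A^\textrm{up}_\infty$. In particular $p$ lies in the open timelike diamond $I^+(x,U) \cap I^-(y,U)$.

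Next, since $p \in A^\textrm{up}_\infty$, Lemma \ref{Hausdorffseq}(1) produces a subsequence $\{A_{k_j}\}$ and points $x_{k_j} \in A_{k_j}$ with $x_{k_j} \to p$. Because $p$ lies in the open diamond $I^+(x,U) \cap I^-(y,U)$, for all sufficiently large $j$ the point $x_{k_j}$ lies there as well, so each such $A_{k_j}$ enters the diamond. Each $A_{k_j}$ is achronal and edgeless by hypothesis, so Lemma \ref{diamondslicing} applies and yields an intersection point
\[
z_{k_j} \in \b \cap A_{k_j}.
\]

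Finally I would extract a limit: since $\b$ has compact image in $U$, after passing to a further subsequence we may assume $z_{k_j} \to z_\infty \in \b$. By Lemma \ref{Hausdorffseq}(1) the limit $z_\infty$ lies in $A^\textrm{up}_\infty$, which contradicts $\b \cap A^\textrm{up}_\infty = \emptyset$. Hence $A^\textrm{up}_\infty$ has no edge points.

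The main subtlety is simply lining up the hypotheses so that Lemma \ref{diamondslicing} can be invoked directly on $\b$; once one chooses $U$ to be a convex normal \emph{and} globally hyperbolic neighborhood of $p$, the curve $\b$ automatically lives in a suitable $N$, and the proof reduces to the limiting argument above. The achronality of $A^\textrm{up}_\infty$ is used only implicitly, through the background assumption that edge is defined for achronal sets.
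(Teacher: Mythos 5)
Your argument is correct and is essentially the paper's own proof run contrapositively: both hinge on Lemma \ref{diamondslicing} applied to the sets $A_{k_j}$ entering a small timelike diamond around $p$, together with Lemma \ref{Hausdorffseq} to pass the intersection points on $\b$ to a limit point of $A^\textrm{up}_\infty$. The only real difference is that you insist the convex normal neighborhood itself be globally hyperbolic (true, but a nontrivial refinement of the standard statements), whereas the paper takes a globally hyperbolic $N$ inside a convex normal $U$; your contradiction argument could use that weaker setup just as well, since the edge definition quantifies over all neighborhoods and so may be applied directly to $N$.
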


\begin{proof} Since $A^\textrm{up}_\infty$ is closed, note that $\textrm{edge}(A^\textrm{up}_\infty) \subset A^\textrm{up}_\infty$. Fix any (candidate edge point) $p \in A^\textrm{up}_\infty$. Let $N$ be a neighborhood of $p$ which is globally hyperbolic as a (sub)spacetime and which is contained in a convex normal neighborhood $U$ of $p$. (For example, one may take a spacelike hypersurface through $p$ which is acausal within $U$, and take $N$ to be its domain of dependence relative to $U$.)  Fix $x \in I^-(p,N)$ and $y \in I^+(p,N)$, and let $\b$ be a future timelike curve from $x$ to $y$ in $N$. It follows that we can find a (sub)sequence $a_j \in A_{k_j}$, with $a_j \to p$, such that $a_j \in I^+(x,N) \cap I^-(y,N)$, for all $j$. But then by Lemma \ref{diamondslicing}, $\b$ must meet each $A_{k_j}$, and hence also $A^\textrm{up}_\infty$. Hence, $p \not \in \textrm{edge}(A^\textrm{up}_\infty)$.
\end{proof}

Combining Lemmas \ref{achronallower} and \ref{edgelessupper2} gives the first half of Theorem \ref{achlimitsresults}. In particular, we note the following:

\begin{thm} \label{achlimitsedgeless} The achronal limit of a sequence of edgeless achronal hypersurfaces is itself an edgeless achronal hypersurface.
\end{thm}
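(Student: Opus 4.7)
The plan is to assemble Theorem \ref{achlimitsedgeless} directly from the two preceding lemmas together with the $C^0$-hypersurface characterization quoted from \cite{ON}. Let $\{A_k\}$ be a sequence of edgeless achronal hypersurfaces whose Hausdorff closed limit $A_\infty$ exists. By definition of the Hausdorff closed limit, the upper and lower limits coincide: $A_\infty = A^{\text{low}}_\infty = A^{\text{up}}_\infty$.

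First, I would invoke Lemma \ref{achronallower}, applied to the sequence $\{A_k\}$, to conclude that the lower limit $A^{\text{low}}_\infty$ is achronal. Since $A_\infty = A^{\text{low}}_\infty$, this gives achronality of $A_\infty$. Equivalently, $A^{\text{up}}_\infty$ is achronal, which is precisely the hypothesis needed to apply the next step.

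Second, I would invoke Lemma \ref{edgelessupper2}: since each $A_k$ is edgeless and achronal, and since we have just verified that $A^{\text{up}}_\infty$ is achronal, the lemma gives that $A^{\text{up}}_\infty$ is edgeless. Because $A_\infty = A^{\text{up}}_\infty$, this shows $A_\infty$ is edgeless.

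Finally, I would cite the proposition from \cite{ON} quoted at the start of Section \ref{secABs}: a nonempty achronal set is a closed $C^0$ hypersurface if and only if it is edgeless. Having established that $A_\infty$ is achronal and edgeless (and nonempty, assuming the sequence is such that the limit is nonempty, which is the case under consideration), this immediately yields that $A_\infty$ is a closed $C^0$ hypersurface, completing the proof. There is no real obstacle here since all the substantive work has been carried out in Lemmas \ref{achronallower} and \ref{edgelessupper2}; the theorem is essentially a repackaging that isolates the edgeless-hypersurface case for later use in constructing horospheres.
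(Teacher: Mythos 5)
Your proposal is correct and follows essentially the same route as the paper: the authors also obtain this theorem by combining Lemma \ref{achronallower} (achronality of the lower, hence closed, limit) with Lemma \ref{edgelessupper2} (edgelessness of the achronal upper limit), and then invoking the standard fact from \cite{ON} that a nonempty edgeless achronal set is a closed $C^0$ hypersurface. No gaps to report.
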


We now turn to the final statement in Theorem \ref{achlimitsresults}, concerning limits of achronal boundaries. This will be established in two main steps, Propositions \ref{Hausdorffpasts} and \ref{achlimit1} below. We first note the following basic facts, (with proofs left to the reader).

\begin{lem} \label{pastconvex} Let $Q \subset M$ be an arbitrary subset. 
\ben
\item [(1)] In general, $\textrm{int} \, (Q) \subset I^-(Q)$.
\item [(2)] If $I^-(Q) \subset Q$, then $\textrm{int} \, (Q) = I^-(Q)$ and $\partial Q = \partial I^-(Q)$. 
\een
\end{lem}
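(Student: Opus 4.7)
The proof breaks into routine point-set topology augmented by two standard facts from causal theory that I would use freely: for every $S \subset M$, the chronological past $I^-(S)$ is open, and every point $p \in M$ lies in the closure of $I^-(p)$ (approach $p$ along a past-directed timelike curve terminating at $p$).

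For (1), the plan is to exploit the openness of $I^+(p)$. Given $p \in \textrm{int}(Q)$, pick an open neighborhood $U \subset Q$ of $p$. Since $I^+(p)$ is open and nonempty, I can select some $q \in I^+(p) \cap U \subset Q$, and then $p \in I^-(q) \subset I^-(Q)$, giving the inclusion.

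For (2), assume $I^-(Q) \subset Q$. Because $I^-(Q)$ is open, this inclusion automatically sharpens to $I^-(Q) \subset \textrm{int}(Q)$; combined with (1), the first equality $\textrm{int}(Q) = I^-(Q)$ follows. To obtain $\partial Q = \partial I^-(Q)$, I would first show the closure identity $\overline{Q} = \overline{I^-(Q)}$: one direction is immediate from $I^-(Q) \subset Q$, and for the other, each $q \in Q$ satisfies $q \in \overline{I^-(q)} \subset \overline{I^-(Q)}$ by the second fact above, so $Q \subset \overline{I^-(Q)}$. Since $I^-(Q)$ is open with interior equal to $\textrm{int}(Q)$, the boundaries then agree:
\[
\partial I^-(Q) \;=\; \overline{I^-(Q)} \setminus I^-(Q) \;=\; \overline{Q} \setminus \textrm{int}(Q) \;=\; \partial Q.
\]

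The only step that requires more than formal manipulation of open sets and closures is the closure identity $\overline{Q} = \overline{I^-(Q)}$, whose nontrivial direction rests on the observation that every point of $Q$ is a limit of points strictly to its past. Everything else is immediate from the definitions, so I do not anticipate any real obstacle beyond keeping the standard causal-theoretic facts in mind.
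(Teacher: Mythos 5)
Your proof is correct, and since the paper explicitly leaves this lemma to the reader, your argument is exactly the routine one intended: openness of $I^\pm$ plus the fact that every point is a limit of points chronologically related to it. One small precision in part (1): the justification you actually need is that every neighborhood $U$ of $p$ meets $I^+(p)$ (i.e.\ $p \in \overline{I^+(p)}$, the time-dual of the fact you stated), not merely that $I^+(p)$ is open and nonempty.
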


The following may itself be viewed as a generalization of the achronal limits in \cite{horo1}, where the assumption of monotonicity of $\{P_k\}$ is relaxed to the existence of $\lim\{P_k\}$.

\begin{prop} \label{Hausdorffpasts} Let $\{A_k\}$ be a sequence of achronal boundaries, with associated past sets $\{P_k\}$ as in Proposition \ref{ABs}, so that $A_k = \partial P_k$. If the pasts have a Hausdorff closed limit, $\Pi_\infty := \lim \{P_k\}$, then we have the following:
\ben
\item [(1)] $\mathrm{int} \, ( \Pi_\infty) =: P_\infty$ is a past set. In particular, $P_\infty = I^-(\Pi_\infty)$.
\item [(2)] $\partial \, (\Pi_\infty)$ is the Hausdorff closed limit of $\{A_k\}$. Hence, $A_\infty = \partial P_\infty = \partial (\Pi_\infty)$.
\een
\end{prop}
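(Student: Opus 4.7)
My plan is to establish (1) by a direct analysis of $\mathrm{int}(\Pi_\infty)$ using the sequential characterization of Hausdorff limits (Lemma \ref{Hausdorffseq}), then derive (2) from (1) via Lemma \ref{pastconvex}(2) together with a sandwich argument for $\lim\{A_k\}$. For (1), Lemma \ref{pastconvex}(1) gives $P_\infty = \mathrm{int}(\Pi_\infty) \subset I^-(\Pi_\infty)$ for free. For the reverse inclusion, I would take $q \in I^-(\Pi_\infty)$ with $q \ll p \in \Pi_\infty$ and pick $r$ with $q \ll r \ll p$; Lemma \ref{Hausdorffseq}(2) supplies $p_k \in P_k$ with $p_k \to p$, and since $I^+$ is open, $r \ll p_k$ for all large $k$, forcing $r \in P_k$ because $P_k$ is a past set. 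Then $I^-(r)$ is an open neighborhood of $q$, and every $s \in I^-(r)$ satisfies $s \ll r \in P_k$ for all large $k$, so $I^-(r) \subset \underline{\lim}\{P_k\} = \Pi_\infty$, giving $q \in \mathrm{int}(\Pi_\infty) = P_\infty$. The same construction, now applied to produce a witness $r \in P_\infty$ (using that $I^-(r') \subset \Pi_\infty$ for any $r'$ with $r \ll r' \ll p$), yields $P_\infty \subset I^-(P_\infty)$; combined with the monotone inclusion $I^-(P_\infty) \subset I^-(\Pi_\infty) = P_\infty$, this shows $P_\infty$ is a past set.

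For (2), the identity $\partial \Pi_\infty = \partial P_\infty$ is immediate from Lemma \ref{pastconvex}(2) applied to $Q = \Pi_\infty$, since (1) gives $I^-(\Pi_\infty) = P_\infty \subset \Pi_\infty$. It then remains to show $\partial \Pi_\infty = \lim\{A_k\}$, which I would establish via the two inclusions $\overline{\lim}\{A_k\} \subset \partial \Pi_\infty \subset \underline{\lim}\{A_k\}$. The upper inclusion is the easier one: if $p = \lim a_{k_j}$ with $a_{k_j} \in A_{k_j}$, then approximating each $a_{k_j}$ by a nearby point of $P_{k_j}$ (possible since $A_{k_j} = \partial P_{k_j} \subset \overline{P_{k_j}}$) and diagonalizing places $p \in \overline{\lim}\{P_k\} = \Pi_\infty$. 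If $p$ were in $P_\infty$, the argument from (1) would furnish an open neighborhood of $p$ contained in $P_k$ for all large $k$, forcing $a_{k_j} \in P_{k_j}$ for large $j$; this contradicts $a_{k_j} \in \partial P_{k_j}$, since $P_{k_j}$ is open. Hence $p \in \Pi_\infty \setminus P_\infty = \partial \Pi_\infty$.

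The main obstacle is the reverse inclusion $\partial \Pi_\infty \subset \underline{\lim}\{A_k\}$: given $p \in \partial \Pi_\infty$ and an open neighborhood $U$ of $p$, I must show $U \cap A_k \neq \emptyset$ for every sufficiently large $k$. I plan to argue by contradiction using connectedness. Taking $U$ connected, if $U \cap A_{k_j} = \emptyset$ along some subsequence, then disjointness from $A_{k_j} = \partial P_{k_j}$ together with connectedness of $U$ forces the dichotomy $U \subset P_{k_j}$ or $U \cap \overline{P_{k_j}} = \emptyset$ for each $j$. If $U \subset P_{k_j}$ occurs for infinitely many $j$, then every $q \in U$ lies in $P_{k_j}$ for infinitely many $j$, so $U \subset \overline{\lim}\{P_k\} = \Pi_\infty$ and hence $p \in \mathrm{int}(\Pi_\infty) = P_\infty$, contradicting $p \in \partial \Pi_\infty$. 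If instead $U \cap \overline{P_{k_j}} = \emptyset$ occurs for infinitely many $j$, then $U$ misses $P_{k_j}$ for those $j$, contradicting the fact that $p \in \Pi_\infty = \underline{\lim}\{P_k\}$ forces every neighborhood of $p$ to meet $P_k$ for all but finitely many $k$. Either alternative produces a contradiction, so $U$ must meet $A_k$ for all sufficiently large $k$, completing the argument.
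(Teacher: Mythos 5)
Your proposal is correct, and its overall architecture is the paper's: first identify $P_\infty = I^-(\Pi_\infty)$, then sandwich $\partial\Pi_\infty$ via the two inclusions $\overline{\lim}\{A_k\}\subset\partial\Pi_\infty\subset\underline{\lim}\{A_k\}$, with the lower inclusion proved by the same connectedness argument the paper uses (a connected neighborhood $U$ of $p\in\partial\Pi_\infty$ missing $\partial P_k$ must lie either in $P_k$ or outside $\overline{P_k}$, and each alternative contradicts $p\in\partial\Pi_\infty$). Where you genuinely diverge is in the inclusion $\overline{\lim}\{A_k\}\subset\partial\Pi_\infty$: the paper rules out $a\in\mathrm{int}(\Pi_\infty)$ by choosing $a\ll b\ll c$ inside a neighborhood $V\subset\Pi_\infty$ and playing the future sets $F_k$ against the pasts ($I^+(b)$ lies in infinitely many $F_k$, hence misses the corresponding $P_k$, so $c\notin\Pi_\infty$), whereas you observe that $p\in P_\infty=I^-(\Pi_\infty)$ would produce, via your step-(1) construction, a single open set $I^-(r)$ about $p$ contained in $P_k$ for all large $k$, incompatible with $a_{k_j}\in\partial P_{k_j}$ because past sets are open. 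Your version never needs the $F_k$'s, and, as a side benefit, both of your inclusions hold verbatim when $\partial\Pi_\infty=\emptyset$, so the degenerate case $\Pi_\infty=\emptyset$ or $\Pi_\infty=M$, which the paper treats separately and leaves to the reader, is absorbed automatically. Likewise, in part (1) you verify $I^-(\Pi_\infty)\subset\mathrm{int}(\Pi_\infty)$ directly from the sequential characterization rather than proving $I^-(\Pi_\infty)\subset\Pi_\infty$ and citing Lemma \ref{pastconvex}(2); both routes work, and your closing verification that $P_\infty=I^-(P_\infty)$ is not strictly necessary, since $P_\infty=I^-(\Pi_\infty)$ already exhibits $P_\infty$ as the timelike past of a set.
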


\begin{proof} (1) The statement holds trivially if $\Pi_\infty = \emptyset$. Otherwise, fix $x \in \Pi_\infty$. Then we can find a sequence $x_k \in P_k$, with $x_k \to x$. Fix any $y \in I^-(x)$. Then for all large $k$, we have $y \in I^-(x_k)$, and hence $y \in P_k$. So $y \in \Pi_\infty$. This shows $I^-(\Pi_\infty) \subset \Pi_\infty$. Hence, as in Lemma \ref{pastconvex}, we have $P_\infty = \textrm{int} \, (\Pi_\infty) = I^-(\Pi_\infty)$.

(2) Suppose first that $\partial \Pi_\infty \ne \emptyset$. We will show that $\partial \Pi_\infty \subset A^\textrm{low}_\infty$, and $A^\textrm{up}_\infty \subset \partial \Pi_\infty$. First fix $x \in \d \Pi_\infty$, and let $W$ be any neighborhood of $x$. Let $U$ be a connected open neighborhood of $x$, with $U \subset W$. Since Hausdorff closed limits are closed, we have $x \in \Pi_\infty$. Hence, we can find a sequence $x_k \in P_k$ with $x_k \to x$. In particular, there is an index $k_U \in \field{N}$ such that $x_k \in P_k \cap U$ for all $k \ge k_U$. Suppose that $U \subset P_j$ for infinitely many $j$. But this implies $U \subset \Pi_\infty$, and hence $x \in \mathrm{int \;} \Pi_\infty$, contradicting $x \in \d \Pi_\infty$. Thus, there is an index $j_U \in \field{N}$ such that, for all $j \ge j_U$, we can find a point $y_j \in U \cap (A_j \cup F_j)$. Let $\ell_U = \max \{k_U, j_U\}$. Then for all $\ell \ge \ell_U$, we have points $x_\ell \in U \cap P_\ell$ and $y_\ell \in U \cap (A_\ell \cup F_\ell)$, and, since we took $U$ connected, a point $a_\ell \in U \cap A_\ell$. Hence, $U$ meets all but possibly finitely many of the $A_k$'s, and since $U \subset W$, so does $W$. This shows $\d \Pi_\infty \subset A^\textrm{low}_\infty$. 

Now fix $a \in A^\textrm{up}_\infty$. Then there is a subsequence $a_{k_j} \in A_{k_j}$ with $a_{k_j} \to a$. Hence any neighborhood $W$ of $a$ meets $P_{k_j}$ and also $F_{k_j}$ for all sufficiently large $j$. Consequently, $a$ is realizable as a limit point of a sequence $p_k \in P_k$, which means $a \in \Pi_\infty$. But also $a$ is realizable as a limit point of a sequence $f_k \in F_k$. Suppose $a \in \textrm{int} (\Pi_\infty)$. Then there is a neighborhood $V$ of $a$ contained in $\Pi_\infty$. Choose points $b,c \in V$ with $a \ll b \ll c$. Since $a$ is a limit point of a sequence $f_k \in F_k$, $b$ meets infinitely many $F_k$. But then $I^+(b)$ is a neighborhood of $c$ which is contained in infinitely many $F_k$, and hence must miss the infinitely many corresponding $P_k$. Consequently, $c \not \in \Pi_\infty$, a contradiction. Hence, $a \in \d \Pi_\infty$. This shows $A^{\textrm{up}}_\infty \subset \partial \Pi_\infty$. Thus, we have $\d \Pi_\infty \subset A^\textrm{low}_\infty \subset A^\textrm{up}_\infty \subset \d \Pi_\infty$. So $A_\infty$ exists, and $A_\infty = \d \Pi_\infty = \partial P_\infty$, the last equality following as in Lemma \ref{pastconvex}.

To finish the proof, it remains to consider the case that $\partial \Pi_\infty = \emptyset$, that is, either $\Pi_\infty = \emptyset$ or $\Pi_\infty = M$. In either case, it suffices to show $A^\textrm{up}_\infty = \emptyset$. The details are left to the reader.
\end{proof}

We now establish a converse of Proposition \ref{Hausdorffpasts}.

\begin{prop} \label{achlimit1} Suppose that $\{A_k\}$ is a sequence of achronal boundaries with Hausdorff closed limit, $A_\infty = \lim \{A_k\}$. Then the sequence of associated pasts $\{P_k\}$ also has a Hausdorff closed limit, $\Pi_\infty = \lim \{P_k\}$. 
\end{prop}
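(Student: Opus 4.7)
The plan is to combine a sequential-compactness principle for Hausdorff closed limits with the uniqueness clause of Proposition \ref{ABs} and the ``pasts-to-boundaries'' direction established in Proposition \ref{Hausdorffpasts}. Concretely, to show $\Pi_\infty := \lim\{P_k\}$ exists, I will argue that every subsequence of $\{P_k\}$ admits a further subsequence Hausdorff-converging to the same closed set, then invoke the standard subsubsequence criterion.

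First I would invoke (or briefly sketch) the classical hyperspace-compactness fact: every sequence of subsets of the second-countable space $M$ has a subsequence admitting a Hausdorff closed limit. A quick derivation fixes a countable open basis $\{B_i\}$ and, by diagonal extraction, passes to a subsequence for which, for each $B_i$, either all but finitely many or only finitely many terms meet $B_i$; the limit is then determined and closed. (Since the $P_k$ are open, one may replace them by their closures $\overline{P_k}$, which share the same Hausdorff closed limits.) Applied to any subsequence of $\{P_k\}$, this yields a sub-subsequence $\{P_{k_{j_i}}\} \to \Pi^*$ for some closed $\Pi^*$. Proposition \ref{Hausdorffpasts} then says $P^* := \mathrm{int}(\Pi^*)$ is a past set with $\partial P^* = \lim_i A_{k_{j_i}} = A_\infty$, the last equality because subsequences of a Hausdorff-convergent sequence share its limit. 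Hence $P^*$ is a past set whose boundary is the achronal boundary $A_\infty$, and the uniqueness clause of Proposition \ref{ABs} forces $P^* = P_\infty$ (the unique past set with $\partial P_\infty = A_\infty$), independent of the chosen subsequence. Since $\Pi^*$ is closed with interior $P_\infty$ and boundary $A_\infty$, we conclude $\Pi^* = P_\infty \cup A_\infty = \overline{P_\infty}$. All subsequential Hausdorff limits of $\{P_k\}$ therefore coincide with $\Pi_\infty := \overline{P_\infty}$, and the subsubsequence criterion gives $\lim\{P_k\} = \Pi_\infty$.

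The main technical obstacle will be the hyperspace-compactness step: it is classical but must be stated and applied carefully in the present topological setting. A subtler point concerns the degenerate case $A_\infty = \emptyset$, in which the uniqueness of a past set with prescribed boundary fails (both $\emptyset$ and $M$ have empty boundary) and the pinning argument stalls; handling this would require either ruling out oscillation of $\{P_k\}$ between subsequences limiting to $\emptyset$ and to $M$, or making an implicit non-triviality assumption on the sequence.
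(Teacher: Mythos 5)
Your argument is correct in substance, but it takes a genuinely different route from the paper's. The paper proves Proposition \ref{achlimit1} by a direct contradiction: if $\lim\{P_k\}$ fails to exist, it extracts a point $x$ lying in infinitely many $P_k$ and in infinitely many $F_k$ but not in $A_\infty$, shows $x \notin I^{\pm}(A_\infty)$, and then constructs a continuous path from $x$ to $A_\infty$ (modified by a future timelike segment at the first contact with $\partial I^{+}(A_\infty) \cup \partial I^-(A_\infty)$) which avoids $A_\infty$ and yet, for infinitely many $k$, runs from $P_k$ to $F_k$ and so must meet $A_k$; the crossing points then accumulate on $A^{\mathrm{up}}_\infty = A_\infty$, a contradiction. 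You instead use sequential compactness of Hausdorff closed limits (the diagonal argument over a countable basis, applied to the closures $\overline{P_k}$, which have the same upper and lower limits), feed each subsequential limit $\Pi^*$ through Proposition \ref{Hausdorffpasts} to get a past set $\mathrm{int}(\Pi^*)$ with boundary $A_\infty$, pin it to the unique such past set $P_\infty$ via Proposition \ref{ABs}, conclude $\Pi^* = \overline{P_\infty}$ independently of the subsequence, and finish with the subsubsequence criterion (which does hold here, by first countability of $M$, or by viewing Hausdorff closed convergence of closed sets as convergence in a compact metrizable hyperspace). Your route is more structural and recycles the already-proved Proposition \ref{Hausdorffpasts} rather than redoing a connectedness argument; its cost is the imported machinery (hyperspace compactness and the subsequence principle), which the paper's elementary path argument avoids.

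Your caveat about the degenerate case $A_\infty = \emptyset$ is well taken, and it does not put you at a disadvantage relative to the paper. Read literally, the proposition fails in that case: in Minkowski space take $A_k = \{t = (-1)^k k\}$, so that $A_\infty = \emptyset$ exists, while the associated pasts $P_k = \{t < (-1)^k k\}$ have $\underline{\lim}\{P_k\} = \emptyset$ and $\overline{\lim}\{P_k\} = M$, so $\lim\{P_k\}$ does not exist. The paper's own proof tacitly assumes $A_\infty \ne \emptyset$ as well, since it requires a path terminating at a point of $A_\infty$. So your uniqueness-pinning step, which needs $A_\infty$ to be a (nonempty) achronal boundary, establishes the proposition in exactly the same generality as the paper's argument, namely under the intended reading that the limit $A_\infty$ is itself an achronal boundary.
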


\begin{proof} Suppose not. Then it follows that there must be a point $x \in M$ such that $x \in P_k$ for infinitely many $k$, and also $x \in F_k$ for infinitely many $k$, with $x \not \in A_\infty$. Suppose that $x \in I^+(A_\infty)$. Hence, there is some $a \in A_\infty$ with $a \in I^-(x)$. Let $a_k \in A_k$ be a sequence with $a_k \to a$. Then for all sufficiently large $k$, we have $a_k \in I^-(x) \cap A_k$, and hence $I^-(x) \cap \partial F_k \ne \emptyset$. But this implies $x \in F_k$ for all large $k$, which is a contradiction. Hence $x \not \in I^+(A_\infty)$, and similarly $x \not \in I^-(A_\infty)$. Let $\a : [0, b] \to M$ be any continuous path from $\a(0) = x$ to $\a(b) \in A_\infty$, with $\a(s) \not \in A_\infty$ for all $s \in [0,b)$. Note that $A_\infty$ is achronal and edgeless, by Theorem \ref{achlimitsedgeless}. Then there must be a first parameter time $0 \le s_0 < b$ such that $\a(s_0) \in \partial I^+(A_\infty) \cup \partial I^-(A_\infty)$. It suffices to consider the case $\a(s_0) \in \partial I^+(A_\infty)$. Let $\b : [s_0, s_1] \to M$ be a future timelike curve from $\a(s_0) = \b(s_0)$. It follows that $\b(s_1) \in I^+(A_\infty)$, and hence $\b(s_1) \in F_k$ for all large $k$. Consider the path $\s : [0,s_1] \to M$ defined by $\s(s) = \a(s)$ for $0 \le s \le s_0$, and $\s(s) = \b(s)$ for $s_0 < s \le s_1$. Then, for infinitely many $k$, $\s$ is a continuous path from $P_k$ to $F_k$. It follows that $\s$ meets $A_k$ for infinitely many $k$, and hence that $\s$ meets $A_\infty$. But this is a contradiction. 
\end{proof}

Combining Propositions \ref{Hausdorffpasts} and \ref{achlimit1}, and their time-duals, we have the following:

\begin{thm} \label{achlimitsequiv} Let $\{A_k\}$ be a sequence of achronal boundaries with associated past and future sets, $\{P_k\}$ and $\{F_k\}$. Then the following are equivalent:
\ben
\item [(1)] $\{P_k\}$ has a Hausdorff closed limit, $\Pi_\infty = \lim \{P_k\}$. 
\item [(2)] $\{A_k\}$ has a Hausdorff closed limit, $A_\infty = \lim \{A_k\}$. 
\item [(3)] $\{F_k\}$ has a Hausdorff closed limit, $\Phi_\infty = \lim \{F_k\}$. 
\een
When any of the above conditions hold, let $P_\infty := \mathrm{int} \, (\Pi_\infty)$ and $F_\infty := \mathrm{int} \, (\Phi_\infty)$. Then $P_\infty$ is a past set, $F_\infty$ is a future set, and $M = P_\infty \cup A_\infty \cup F_\infty$, with $\partial P_\infty = A_\infty = \partial F_\infty$. 
\end{thm}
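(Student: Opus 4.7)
The plan is to assemble the theorem almost entirely from Propositions \ref{Hausdorffpasts} and \ref{achlimit1}, together with their time-duals and the structural Proposition \ref{ABs} for achronal boundaries. The three equivalences fall out immediately, and the statements about the partition then follow by invoking uniqueness.

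\textbf{Equivalence of (1), (2), (3).} First I would observe that Proposition \ref{Hausdorffpasts} gives the implication (1) $\Rightarrow$ (2) (and identifies the limit as $A_\infty = \partial \Pi_\infty$), while Proposition \ref{achlimit1} gives the converse (2) $\Rightarrow$ (1). Since each $A_k$ is also an achronal boundary of the form $A_k = \partial F_k$ (using that time-reversing the spacetime turns past sets into future sets), the time-dual versions of these two propositions yield (2) $\Leftrightarrow$ (3). Thus the three conditions are equivalent.

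\textbf{Structural conclusions.} Now suppose the equivalent conditions hold. Proposition \ref{Hausdorffpasts}(1) shows $P_\infty = \mathrm{int}\,(\Pi_\infty)$ is a past set, and its time-dual shows $F_\infty = \mathrm{int}\,(\Phi_\infty)$ is a future set. Proposition \ref{Hausdorffpasts}(2) then gives $\partial P_\infty = \partial \Pi_\infty = A_\infty$, and time-dually $\partial F_\infty = A_\infty$.

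\textbf{The partition.} For the disjoint union $M = P_\infty \cup A_\infty \cup F_\infty$, I would invoke Proposition \ref{ABs} applied to the achronal boundary $A_\infty$: there exist a unique past set $P$ and a unique future set $F$ with $A_\infty = \partial P = \partial F$, and these give a disjoint partition $M = P \cup A_\infty \cup F$. Since $P_\infty$ is a past set with $\partial P_\infty = A_\infty$, uniqueness forces $P_\infty = P$; similarly $F_\infty = F$. This yields the desired partition with the stated boundary relations.

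\textbf{Anticipated difficulty.} There is no real obstacle, since all the analytic content lives in Propositions \ref{Hausdorffpasts} and \ref{achlimit1}. The only point requiring a bit of care is verifying that the time-dual of Proposition \ref{Hausdorffpasts} applies cleanly; this just requires noting that if $A_k = \partial P_k = \partial F_k$ with $P_k$ a past set and $F_k$ the complementary future set, then reversing the time-orientation exchanges the roles of $P_k$ and $F_k$ while leaving $A_k$ unchanged, so the same proof applies verbatim. The uniqueness step in Proposition \ref{ABs} is what glues $P_\infty$ and $F_\infty$ together into the partition.
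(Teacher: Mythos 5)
Your proposal is correct and matches the paper's route exactly: the paper states this theorem as an immediate consequence of combining Propositions \ref{Hausdorffpasts} and \ref{achlimit1} with their time-duals, which is precisely your assembly, and your appeal to the uniqueness clause of Proposition \ref{ABs} to pin down the partition $M = P_\infty \cup A_\infty \cup F_\infty$ is a natural (and correct) way to make that last step explicit. The only point glossed over on both sides is the degenerate case $A_\infty = \emptyset$ (where $\Pi_\infty$ is $\emptyset$ or $M$), which Proposition \ref{Hausdorffpasts} already relegates to the reader.
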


In particular, we note:

\begin{cor} The achronal limit of a sequence of achronal boundaries is itself an achronal boundary.
\end{cor}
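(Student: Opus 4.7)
The plan is to observe that this corollary is essentially a direct consequence of Theorem \ref{achlimitsequiv}, which has already done the substantive work. Given a sequence $\{A_k\}$ of achronal boundaries whose achronal limit $A_\infty = \lim\{A_k\}$ exists (so condition (2) of the theorem holds), I would simply invoke the theorem's conclusion that $A_\infty = \partial P_\infty$, where $P_\infty := \mathrm{int}\,(\Pi_\infty)$ is a past set arising from the Hausdorff closed limit $\Pi_\infty = \lim\{P_k\}$ of the associated pasts. Since an achronal boundary is by definition the (nonempty) boundary of a past or future set, this identification is exactly what needs to be shown.

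So the proof writes itself in one line once the theorem is cited: each $A_k$ being an achronal boundary means (by Proposition \ref{ABs}) that there is a past set $P_k$ with $A_k = \partial P_k$; assuming the achronal limit $A_\infty$ exists, Theorem \ref{achlimitsequiv} applies and yields a past set $P_\infty$ with $\partial P_\infty = A_\infty$, so $A_\infty$ is an achronal boundary.

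The only potential subtlety, and the reason this is stated separately as a corollary rather than being absorbed into the theorem, concerns the boundary case where $A_\infty = \emptyset$. In that situation, the strict definition in Section \ref{secABs} (which requires nonemptiness) would exclude $A_\infty$ from being called an achronal boundary. I would handle this by simply reading the corollary with the implicit assumption $A_\infty \neq \emptyset$, which is the only interesting case and is consistent with how the theorem's partition $M = P_\infty \cup A_\infty \cup F_\infty$ is typically applied; alternatively, a one-sentence remark at the end can dispatch the degenerate possibility $\Pi_\infty \in \{\emptyset, M\}$ treated at the close of the proof of Proposition \ref{Hausdorffpasts}. There is no real obstacle here, because all the hard work, namely verifying that $P_\infty$ is actually a past set and that $\partial P_\infty$ coincides with the limit of the $A_k$'s, has already been carried out in Propositions \ref{Hausdorffpasts} and \ref{achlimit1}.
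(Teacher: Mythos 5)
Your proposal is correct and matches the paper's approach exactly: the corollary is stated as an immediate consequence of Theorem \ref{achlimitsequiv} (itself obtained by combining Propositions \ref{Hausdorffpasts} and \ref{achlimit1}), with $A_\infty = \partial P_\infty$ for the past set $P_\infty = \mathrm{int}\,(\Pi_\infty)$. Your remark about the empty/degenerate case is a reasonable reading of the nonemptiness convention and does not change the substance.
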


We close this section by formalizing the observation that Hausdorff closed limits do indeed generalize the `achronal limits' originally defined in \cite{horo1}. Consider a sequence of achronal boundaries $\{A_k\}$, with associated pasts $\{P_k\}$, and futures $\{F_k\}$ (as per Proposition \ref{ABs}). As in \cite{horo1}, we say $\{P_k\}$ is \emph{increasing} if $P_k \subset P_{k+1}$ for all $k$, or \emph{decreasing} if $P_{k+1} \subset P_k$ for all $k$. It follows that the $F_k$'s are increasing iff the $P_k$'s are decreasing, and vice versa. We say a sequence of achronal boundaries $\{A_k\}$ is \emph{monotonic} if the pasts $P_k$ are monotonic, i.e., either increasing or decreasing. The following is an immediate consequence of Proposition 2.5 in \cite{horo1} and Lemma \ref{Hausdorffseq} above:

\begin{cor} \label{oldachlimits} Let $\{A_k\}$ be a sequence of achronal boundaries, with associated pasts $\{P_k\}$ and futures $\{F_k\}$, as in Proposition \ref{ABs}. 
\ben
\item [(1)] If $\{P_k\}$ is increasing, then the Hausdorff closed limit $\lim \{A_k\}$ exists and
\beq
\lim \{A_k\} = \partial \bigg( \bigcup_k P_k   \bigg) \nonumber \label{futureachlimit}
\eeq
\item [(2)] If $\{P_k\}$ is decreasing, and hence $\{F_k\}$ increasing, then the Hausdorff closed limit $\lim \{A_k\}$ exists and we have:
\beq
\lim \{A_k\} = \partial \bigg( \bigcup_k F_k   \bigg) \nonumber \label{pastachlimit}
\eeq
\een
\end{cor}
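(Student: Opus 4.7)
The plan is to reduce the corollary to Theorem \ref{achlimitsequiv} (via Proposition \ref{Hausdorffpasts}) by computing the Hausdorff closed limit of the monotonic sequence of pasts directly from Lemma \ref{Hausdorffseq}, and then identifying the open past $P_\infty = I^-(\Pi_\infty)$ with the simple union $\bigcup_k P_k$.

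First, for case (1) where $\{P_k\}$ is increasing, I would show that $\Pi_\infty := \lim\{P_k\} = \overline{\bigcup_k P_k}$ by chaining two inclusions. The containment $P^{\mathrm{up}}_\infty \subseteq \overline{\bigcup_k P_k}$ is immediate from Lemma \ref{Hausdorffseq}(1), since any limit point of a sequence $x_{k_j} \in P_{k_j}$ lies in the closure of $\bigcup_k P_k$. For the reverse, if $p \in P_m$ then monotonicity gives $p \in P_k$ for all $k \ge m$, so an eventually constant sequence shows $p \in P^{\mathrm{low}}_\infty$ via Lemma \ref{Hausdorffseq}(2); and $P^{\mathrm{low}}_\infty$ is closed (any neighborhood $W$ of a limit of its elements is eventually a neighborhood of the tail, and thus misses only finitely many $P_k$), so $\overline{\bigcup_k P_k} \subseteq P^{\mathrm{low}}_\infty$. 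The chain $\overline{\bigcup_k P_k} \subseteq P^{\mathrm{low}}_\infty \subseteq P^{\mathrm{up}}_\infty \subseteq \overline{\bigcup_k P_k}$ forces equality, giving existence and identification of $\Pi_\infty$.

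Next, Proposition \ref{Hausdorffpasts} hands us that $A_\infty = \lim\{A_k\}$ exists and equals $\partial P_\infty$ with $P_\infty = I^-(\Pi_\infty)$. To match the stated formula I would verify $P_\infty = \bigcup_k P_k$: one direction, $\bigcup_k P_k = I^-(\bigcup_k P_k) \subseteq I^-(\overline{\bigcup_k P_k})$, is automatic since each $P_k$ is a past set; for the reverse, if $y \ll x$ with $x \in \overline{\bigcup_k P_k}$ then $I^+(y)$ is an open neighborhood of $x$ and hence meets some $P_k$ at a point $z$, so $y \ll z \in P_k = I^-(P_k)$ gives $y \in \bigcup_k P_k$. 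Case (2) then follows by time-dualization: if $\{P_k\}$ is decreasing then $\{F_k\}$ is increasing, and applying the same reasoning to $\{F_k\}$ together with the time-dual of Proposition \ref{Hausdorffpasts} (packaged in Theorem \ref{achlimitsequiv}) yields $\lim\{A_k\} = \partial(\bigcup_k F_k)$.

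No serious obstacle arises; the heavy lifting has been done in Theorem \ref{achlimitsequiv}. The only point calling for mild care is verifying that $P_\infty = I^-(\Pi_\infty)$ equals the union $\bigcup_k P_k$ rather than merely its closure, which uses the specific past-set structure (openness of $I^-(\cdot)$ and $P_k = I^-(P_k)$) rather than purely topological monotonicity.
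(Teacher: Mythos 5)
Your proof is correct, but it follows a different route from the paper's. The paper disposes of this corollary in one line: it simply cites Proposition 2.5 of \cite{horo1} (which already identifies the monotone ``achronal limit'' with $\partial(\bigcup_k P_k)$, resp.\ $\partial(\bigcup_k F_k)$, and characterizes its points via convergent sequences $a_k \in A_k$) together with Lemma \ref{Hausdorffseq}, which says exactly that such sequence characterizations are the Hausdorff lower/upper limits; the corollary is then immediate. You instead give a self-contained derivation inside the present paper: you compute $\lim\{P_k\} = \overline{\bigcup_k P_k}$ directly from monotonicity (using closedness of the lower limit and the eventually-constant-sequence trick), feed this into Proposition \ref{Hausdorffpasts} to get $A_\infty = \partial P_\infty$ with $P_\infty = I^-(\Pi_\infty)$, and then verify the genuinely non-topological point that $I^-\bigl(\overline{\bigcup_k P_k}\bigr) = \bigcup_k P_k$ using openness of $I^-$ and $P_k = I^-(P_k)$ --- which is indeed the one place where past-set structure, not mere monotone set convergence, is needed. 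What the paper's citation buys is brevity and a direct link to the old definition of achronal limit; what your argument buys is independence from \cite{horo1}, at the cost of routing through the heavier Proposition \ref{Hausdorffpasts} (whose empty-boundary case also quietly covers the situation $\bigcup_k P_k = M$, where both sides of the asserted identity are empty). Both are valid proofs of the statement.
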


\section{Horospheres} \label{sechoros}

We now use the results of Section \ref{secachlimits} to generalize the notion of Lorentzian horosphere defined in \cite{horo1}.  In addition, we establish a new splitting result for such horospheres in Section \ref{sechorostructure}, (Theorem \ref{splitlem}), generalizing those in \cite{horo1}, and in a sense encoding various other spacetime splitting results, including the basic Lorentzian splitting theorem. We begin in Section \ref{secdistance} with a brief review of standard material on the Lorentzian distance function and maximal curves, as well as the notion of `causal completeness' introduced in \cite{Gcambphil}, and used throughout \cite{horo1}. From Section \ref{secspheresandhoros} on, we assume that all spacetimes are globally hyperbolic.

\subsection{Lorentzian Distance and Maximal Curves}  \label{secdistance}

The Lorentzian arc length of a causal curve $\a : [a,b] \to M$ is defined by 
\beq
L(\a) := \int_a^b \sqrt{-g(\a',\a')}ds \nonumber
\eeq
It is a basic fact that causal geodesics are locally Lorentzian arc length-\emph{maximizing}. The Lorentzian distance function of $M$ is then defined by
\beq
d(p,q) := \sup \{L(\a) : \a \in \Omega^c_{p,q} \} \nonumber
\eeq
where $\Omega^c_{p,q}$ denotes the set of future causal curves from $p \in M$ to $q \in M$, and where we take the supremum to be zero if there are no such curves, i.e., if $p \not \le q$. 

A causal curve $\a$ is \emph{maximal} if it realizes the (Lorentzian) distance between any two of its points, i.e., $d(\a(s_1), \a(s_2)) = L(\a|_{[s_1,s_2]})$. A maximal curve is necessarily a timelike or null geodesic, (up to parameterization, a distinction which we will often ignore below).

We recall that a spacetime is globally hyperbolic if the set of all `timelike diamonds' $I^+(p) \cap I^-(q)$ forms a basis for the manifold topology, and all `causal diamonds' $J^+(p) \cap J^-(q)$ are compact. By a \emph{Cauchy surface} we mean an achronal set $S$ which is met by every inextendible causal curve in $M$. It is a basic fact that a spacetime is globally hyperbolic iff it admits a Cauchy surface, and that these conditions are related to Lorentzian distance as follows.

\begin{prop} \label{GHdist} Let $M$ be a spacetime and $d$ its Lorentzian distance function. If $M$ is globally hyperbolic, then $d$ is finite and continuous, and any causally related pair of points $p \le q$ are connected by a maximal causal geodesic $\a$, $L(\a) = d(p,q)$.
\end{prop}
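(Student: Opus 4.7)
The plan is to derive all three conclusions—finiteness of $d$, continuity of $d$, and existence of a maximizing causal geodesic between any causally related pair—from a single engine: compactness of the causal diamonds $J^+(p)\cap J^-(q)$ combined with the fact that Lorentzian arc length is \emph{upper} semi-continuous under $C^0$ convergence of causal curves. This is essentially the content of the limit curve lemma, which is available under global hyperbolicity.

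First I would address finiteness. If $p \not\le q$ there is nothing to prove. If $p \le q$, then every future causal curve from $p$ to $q$ is contained in the compact set $K := J^+(p)\cap J^-(q)$. Choose an auxiliary complete Riemannian metric $h$ on $M$, reparametrize each such causal curve $\a$ by $h$-arc length, and use a compactness-plus-continuity argument on $K$ to obtain a uniform constant $C = C(K,h)$ such that $L(\a) \le C \cdot \mathrm{length}_h(\a)$. Since each reparametrized $\a$ is a causal unit-speed curve contained in $K$, inextendibility and causality force its $h$-length to be bounded, yielding $d(p,q) < \8$.

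Next I would handle continuity. Lower semi-continuity is routine: if some causal curve $\a$ from $p$ to $q$ has $L(\a) > \ell$, then pushing the endpoints slightly along $I^-(p)\times I^+(q)$ still produces causal curves of length $> \ell$, so $d(p',q') > \ell$ for $(p',q')$ near $(p,q)$. For upper semi-continuity, suppose $p_k \to p$, $q_k \to q$, and take near-maximizers $\a_k$ from $p_k$ to $q_k$ with $L(\a_k) \ge d(p_k,q_k) - 1/k$. For large $k$ these curves lie in a fixed compact set (e.g.\ $J^+(p')\cap J^-(q'')$ for chosen $p' \ll p$ and $q \ll q''$, using global hyperbolicity). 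Reparametrizing by $h$-arc length and applying Arz\`ela–Ascoli produces a $C^0$-subsequential limit $\a$ which is a future causal curve from $p$ to $q$. Upper semi-continuity of Lorentzian arc length under $C^0$ convergence (the central subtlety of the argument, and opposite to the Riemannian situation) then gives
\beq
d(p,q) \;\ge\; L(\a) \;\ge\; \limsup_k L(\a_k) \;=\; \limsup_k d(p_k,q_k), \nonumber
\eeq
completing continuity.

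The existence of a maximal causal geodesic is now just a targeted reapplication. Given $p \le q$, pick causal curves $\a_k$ from $p$ to $q$ with $L(\a_k) \to d(p,q)$; all lie in the compact diamond $K$. The same $h$-arc length reparametrization and Arz\`ela–Ascoli step yields a $C^0$-limit $\a$ from $p$ to $q$, and upper semi-continuity forces $L(\a) \ge d(p,q)$. Since $L(\a) \le d(p,q)$ by definition of $d$, equality holds. To see that $\a$ is a geodesic (up to reparametrization), use the local maximizing property of causal geodesics in a convex normal neighborhood: if $\a$ failed to be a geodesic on some subinterval, one could replace that piece by the local geodesic between its endpoints and strictly increase the length, contradicting the global maximality $L(\a) = d(p,q)$; a corner, in particular, can be rounded off to gain length, so $\a$ must be a smooth causal geodesic.

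The main obstacle is really a single one: carefully justifying the $C^0$ limit curve construction together with the upper semi-continuity of Lorentzian arc length, since both the compactness (via global hyperbolicity) and the unusual direction of semi-continuity are the nontrivial ingredients that make this result fail outside the globally hyperbolic setting. Once those are in hand, finiteness, continuity, and maximization follow by parallel uses of the same limit curve argument.
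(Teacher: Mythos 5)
The paper offers no proof of this proposition: it appears in Section 4.1 as part of a ``brief review of standard material'' and is simply quoted as a classical consequence of global hyperbolicity (cf.\ \cite{BEE}, \cite{ON}, \cite{HE}), so there is no in-paper argument to compare against. Your proposal is the standard textbook proof --- compactness of causal diamonds, reparametrization by an auxiliary complete Riemannian metric, the limit curve lemma via Arz\`ela--Ascoli, and upper semicontinuity of Lorentzian arc length under $C^0$ convergence --- and it is essentially correct, with the three conclusions derived exactly as in the references. Two points deserve tightening. First, in the finiteness step the uniform bound on $h$-length does not come from ``inextendibility'' of the curves themselves (they have endpoints $p$ and $q$); it comes from non-imprisonment: if causal curves from $p$ to $q$ inside the compact diamond $K$ had unbounded $h$-length, the limit curve lemma would produce an inextendible causal curve totally imprisoned in $K$, contradicting strong causality, which is part of global hyperbolicity. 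Second, in the final step the ``round off a corner to strictly gain length'' argument is the timelike-case intuition; when $d(p,q)=0$ and the maximizer is null, no length is gained, and one instead uses the precise convex-neighborhood statement that the radial geodesic is the unique maximal causal curve (equivalently, that a causal curve which is not a reparametrized null geodesic can be deformed to a timelike curve, which would force $d(p,q)>0$). You should also dispose explicitly of the degenerate cases in the upper semicontinuity argument ($p\not\le q$, or the limit curve collapsing to a point), where the same limit-curve reasoning yields either a contradiction with causality or the trivial bound. With these routine adjustments the argument is complete and matches the standard proofs cited by the paper.
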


\medskip
It is also natural to consider, for example, distance to the past of a subset $S \subset M$,
$$d(p, S) := \sup \{d(p,z) : z \in S\}$$
If $S$ is compact, Proposition \ref{GHdist} generalizes immediately. However, a natural, weaker compactness condition suffices, which we now review. As introduced in \cite{Gcambphil}, a subset $S \subset M$ is said to be \emph{past causally complete} if for all $p \in M$, the closure in $S$ of $J^+(p) \cap S$ is compact. It follows that such a set must be closed. Further, if $M$ is globally hyperbolic, then a closed set $S$ is past causally complete iff $J^+(p) \cap S$ is compact for all $p \in M$. (See Figure \ref{pcc}.)

\begin{figure} [h]
\begin{center}
\def\svgwidth{12cm} 
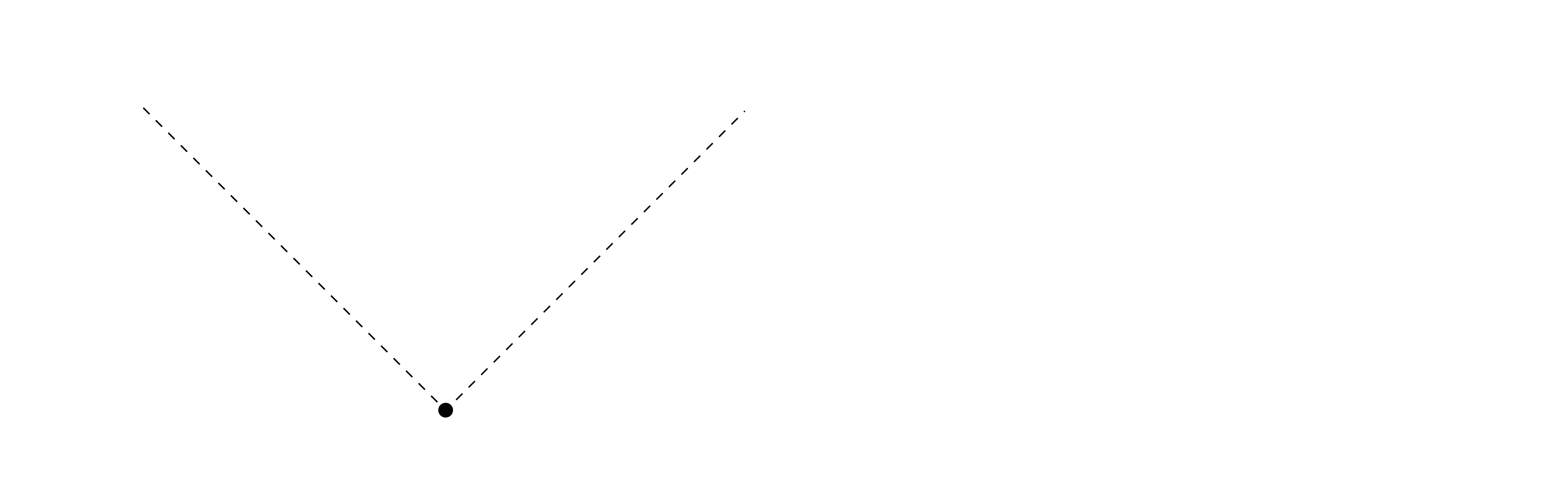
\caption[...]{Testing for past causal completeness.\label{pcc}} 
\end{center}
\end{figure}

Future causal completeness is defined time-dually. Any compact set is (both past and future) causally complete, as is any Cauchy surface. As shown in \cite{horo1}, we have the following generalization of Proposition \ref{GHdist}:

\begin{prop} \label{GHdistset} Let $C \subset M$ and consider the `past distance function'
\beq
d(x, C) = \sup \{d(x, z) : z \in C\} \nonumber
\eeq 
If $M$ is globally hyperbolic and $C$ is past causally complete, then $d(\cdot, C)$ is finite and continuous, and for every $p \in J^-(C)$, there is a maximal causal geodesic $\a$ from $p$ to $C$, with $L(\a) = d(p,C)$. Time-dual statements hold for the `future distance function', $d(C, x) := \sup \{d(z, x) : z \in C\}$, when $C$ is future causally complete.
\end{prop}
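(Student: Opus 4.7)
My plan is to leverage past causal completeness to reduce the supremum defining $d(p,C)$ to a maximization over the compact set $J^+(p) \cap C$, and then to invoke Proposition \ref{GHdist} directly. Throughout I would use the two key consequences of global hyperbolicity recorded above: $d$ is continuous on $M \times M$, and the causal relation $\le$ is closed. I would also use the equivalence noted just before the statement: in a globally hyperbolic spacetime, past causal completeness of the closed set $C$ is equivalent to $J^+(p) \cap C$ being compact for every $p \in M$.

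For finiteness and the existence of a maximal geodesic, I would fix $p \in M$ and split into two cases. If $p \notin J^-(C)$, then $d(p,z) = 0$ for all $z \in C$, so $d(p,C) = 0$ and the geodesic claim is vacuous. Otherwise $C_p := J^+(p) \cap C$ is nonempty and compact, and since $z \mapsto d(p,z)$ is continuous by Proposition \ref{GHdist}, it attains its maximum on $C_p$ at some $z_0$. Thus $d(p,C) = d(p,z_0) < \infty$. A second application of Proposition \ref{GHdist} to the pair $p \le z_0$ then produces a maximal causal geodesic $\alpha$ from $p$ to $z_0$ of length $d(p,z_0) = d(p,C)$.

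For continuity, let $p_k \to p$ and establish the two one-sided bounds. The lower bound $\liminf_k d(p_k, C) \ge d(p, C)$ follows directly from continuity of $d$ on $M \times M$: for every fixed $z \in C$, $\liminf_k d(p_k, C) \ge \lim_k d(p_k, z) = d(p, z)$, and taking the sup over $z \in C$ gives the claim. For the upper bound, after passing to a subsequence realizing $\limsup_k d(p_k, C)$, I may assume each $p_k$ lies in $J^-(C)$ (otherwise that term is zero). Picking any $r \ll p$, openness of $I^+(r)$ gives $r \ll p_k$ eventually. Taking maximizers $z_k \in C$ with $d(p_k, z_k) = d(p_k, C)$ from the previous step, we have $z_k \in J^+(p_k) \cap C \subset J^+(r) \cap C$, which is compact. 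Extract a convergent subsequence $z_{k_j} \to z_\infty$; closedness of $\le$ yields $p \le z_\infty \in C$, and continuity of $d$ then gives $\lim_j d(p_{k_j}, z_{k_j}) = d(p, z_\infty) \le d(p,C)$.

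The main obstacle is this last extraction: we need the maximizers $z_k$ to remain in a compact subset of $M$ so that a subsequential limit exists. This is precisely what past causal completeness guarantees; without it the $z_k$ could escape toward infinity and the $\limsup$ bound would be unreachable. The time-dual claim for future causal completeness follows by the same argument with past and future interchanged, using a point $q \gg p$ in place of $r \ll p$.
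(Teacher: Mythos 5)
Your proof is correct, and it is essentially the standard argument: note that the paper itself does not prove Proposition \ref{GHdistset} but quotes it from \cite{horo1}, where the proof proceeds along the same lines you propose. Reducing the supremum to the compact set $J^+(p)\cap C$ furnished by past causal completeness, invoking Proposition \ref{GHdist} for finiteness and the maximal geodesic, and getting the $\limsup$ bound by extracting a convergent subsequence of maximizers inside the compact set $J^+(r)\cap C$ (using closedness of $C$ and of the relation $\le$, plus continuity of $d$) is exactly the intended use of the causal completeness hypothesis, so no changes are needed.
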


We close this section by reviewing rays and lines. A maximal, future causal curve $\a : [a, b) \to M$, with $a < b \le \infty$, which is future-inextendible is called a \emph{future ray}. \emph{Past rays} are defined time dually. By a \emph{line} we mean a maximal curve which is inextendible to both the future and past. Hence, rays and lines are necessarily causal geodesics, though they need not be complete as such. Finally, let $S \subset M$ be an arbitrary subset, and let $\a : [a,b) \to M$ be a future-inextendible causal curve, with $a < b \le \infty$, and $\a(a) \in S$. We say that $\a$ is a \emph{future $S$-ray} if $d(S, \a(t)) = L(\a_{[a,t]})$, for all $t \in [a, b)$. Past $S$-rays are defined time-dually. Note that $S$-rays are indeed rays as defined above. The following will be used below.

\begin{prop} \label{sray} If $S$ is a closed achronal $C^0$ hypersurface, then any null $S$-ray is contained in $S$. It follows that if $S$ is a Cauchy surface, then any $S$-ray is timelike. A compact Cauchy surface $S$ necessarily admits at least one future $S$-ray, and one past $S$-ray.
\end{prop}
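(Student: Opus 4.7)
I would prove the three assertions in turn.

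For the first, let $\alpha: [a, b) \to M$ be a null $S$-ray with $p := \alpha(a) \in S$. The defining condition $d(S, \alpha(t)) = L(\alpha|_{[a,t]}) = 0$ combined with $\alpha(t) \in J^+(S) \subset \overline{I^+(S)}$ (using $I^+(\alpha(t)) \subset I^+(S)$ and $\alpha(t) \in \overline{I^+(\alpha(t))}$) shows $\alpha(t) \in \partial I^+(S)$ for every $t$. Now both $S$ and $\partial I^+(S)$ are closed edgeless achronal $C^0$ hypersurfaces --- the latter by Proposition \ref{ABs} --- and $S \subset \partial I^+(S)$, since each point of $S$ lies in $\overline{I^+(S)} \setminus I^+(S)$. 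In a small convex normal neighborhood $U$ of $p$, foliated by a transverse timelike vector field, both $S \cap U$ and $\partial I^+(S) \cap U$ can be represented as continuous graphs over a common spacelike slice; the inclusion of graphs forces them to coincide on some $U' \subset U$. Hence $\alpha(t) \in S$ for $t$ slightly greater than $a$. Since the same local argument applies at every $t_0$ with $\alpha(t_0) \in S$, the set $\{t \in [a,b) : \alpha(t) \in S\}$ is open; it is also closed by closedness of $S$, and contains $a$, so it is all of $[a, b)$.

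For the second assertion, any $S$-ray $\alpha$ is a maximal causal curve, hence a timelike or null geodesic throughout. If $\alpha$ were null, the first assertion gives $\alpha \subset S$. Extending $\alpha$ to an inextendible null geodesic $\bar\alpha$ in $M$, the Cauchy property of $S$ forces $\bar\alpha \cap S$ to consist of a single point, yet $\bar\alpha \cap S \supseteq \alpha([a, b))$ is a non-trivial arc, a contradiction. So $\alpha$ must be timelike.

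For the third, fix an auxiliary Riemannian metric on $M$ and a future-inextendible timelike curve $\sigma: [0, T) \to M$ through some point of $S$ (for instance, an integral curve of a smooth future-directed timelike vector field). Set $q_n := \sigma(t_n)$ for a sequence $t_n \nearrow T$. Since the Cauchy surface $S$ is past causally complete, Proposition \ref{GHdistset} produces, for each $n$, a maximal future causal geodesic $\gamma_n$ from some $p_n \in S$ to $q_n$ with $L(\gamma_n) = d(S, q_n)$. Compactness of $S$ yields $p_n \to p \in S$ along a subsequence; normalizing the initial tangents $\gamma_n'(0)$ in the auxiliary Riemannian metric and using compactness of the unit tangent sphere over $S$, a further subsequence converges to a unit vector $v \in T_p M$, which is causal as a limit of causal vectors. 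The geodesic $\gamma$ from $p$ with initial velocity $v$, extended to its maximal domain in $M$, is then future-inextendible. A standard limit curve argument, together with continuity of the exponential map and of $d(S, \cdot)$, shows that $\gamma$ inherits maximality from the $\gamma_n$'s, so $d(S, \gamma(t)) = L(\gamma|_{[0, t]})$ throughout its domain; hence $\gamma$ is a future $S$-ray. A past $S$-ray is produced time-dually.

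The main technical hurdle lies in the third part: verifying that the limit curve $\gamma$ is both globally maximal from $S$ and future-inextendible in $M$. This is the substance of the limit curve lemma in Lorentzian geometry, and requires some care in spacetimes of finite Lorentzian diameter (such as de Sitter), where $\gamma$ may have finite Lorentzian length while still being future-inextendible because it exits every compact subset of $M$.
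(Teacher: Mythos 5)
The paper states this proposition without proof (it is recalled from \cite{horo1} and standard causal theory), so your argument has to stand on its own. Your first assertion is proved correctly: showing $\alpha\subset\partial I^+(S)$ from $d(S,\alpha(t))=0$, noting $S\subset\partial I^+(S)$, and then using that two closed edgeless achronal sets are locally full graphs over a common cross-section (so inclusion forces local coincidence), together with the open-and-closed argument along the ray, is sound. Your third assertion is also essentially right and is exactly the construction the paper itself runs inside the proofs of Proposition \ref{r2r} and Theorem \ref{warpsplit}: maximizers from the compact $S$ to points exhausting a future-inextendible timelike curve, a limit-curve argument, continuity of $d(S,\cdot)$ and upper semicontinuity of Lorentzian length. (Minor point: the maximizers to points in $J^+(S)$ come from the time-dual part of Proposition \ref{GHdistset}, i.e.\ future causal completeness of $S$; since $S$ is compact this is immaterial.)

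The genuine gap is in the second assertion. You claim that ``the Cauchy property of $S$ forces $\bar\alpha\cap S$ to consist of a single point.'' That is true for an \emph{acausal} Cauchy surface, but the paper's Cauchy surfaces are only required to be achronal, and an inextendible null geodesic can meet an achronal Cauchy surface in a nondegenerate null segment. For example, in $2$-dimensional Minkowski space the set $\{t=0,\ x\le 0\}\cup\{t=x,\ 0\le x\le 1\}\cup\{t=1,\ x\ge 1\}$ is an achronal Cauchy surface, and the null geodesic $t=x$ meets it in the whole segment $0\le x\le 1$. So the contradiction you draw (``single point vs.\ nontrivial arc'') does not follow from the hypotheses; indeed, if it did, the distinction the paper repeatedly makes between achronal and acausal Cauchy data (e.g.\ in Theorem \ref{horosthm}) would be vacuous. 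What must actually be ruled out is a \emph{future-inextendible} null geodesic lying entirely in $S$, and the inextendibility is the point: if $\alpha$ is a null $S$-ray then by your first assertion $\alpha\subset J^+(\alpha(a))\cap S$, and since a Cauchy surface is past causally complete (as noted in Section \ref{secdistance}), this set has compact closure; but a future-inextendible causal curve in a globally hyperbolic (hence strongly causal) spacetime cannot be imprisoned in a compact set. This repairs the step; as written, however, the justification you give is false for the paper's notion of Cauchy surface.
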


\subsection{Lorentzian Spheres and Horospheres}  \label{secspheresandhoros}

Although we will continue to state this explicitly at times, we assume from now on that $M$ is globally hyperbolic.

Thinking of any subset $C \subset M$ as a `center', and fixing any radius $r > 0$, we can consider the corresponding \emph{past} and \emph{future spheres} from $C$, respectively,
$$S^-_r(C) := \{x \in M : d(x, C) = r\}$$
$$S^+_r(C) := \{x \in M : d(C, x) = r\}$$

\medskip
It is immediate that such spheres must be achronal, though edge points are possible in general. However, the following facts were established in \cite{horo1}. We recall first that a set $A \subset M$ is \emph{acausal} if no two points in $A$ are joined by a nontrivial causal curve.

\begin{prop} \label{spheresareABs} Let $M$ be globally hyperbolic, and  $C \subset M$ past causally complete. Then for any $r > 0$, the past sphere $S^-_r(C)$ is acausal and edgeless, with $S^-_r(C) = \partial I^-(S^-_r(C))$. Moreover, each point $x \in S^-_r(C)$ is connected to $C$ by a future timelike geodesic `radial' segment of length $r$. 
\end{prop}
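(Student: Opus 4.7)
The plan is to proceed in three stages: establish (i) existence of a radial future timelike geodesic from each $x \in S^-_r(C)$ to $C$, (ii) acausality of $S^-_r(C)$, and (iii) the identity $S^-_r(C) = \partial I^-(S^-_r(C))$, after which edgelessness follows automatically from Proposition \ref{ABs}. For (i), note that $d(x, C) = r > 0$ forces $x \in J^-(C)$, so Proposition \ref{GHdistset} yields a maximal causal geodesic $\alpha$ from $x$ to some $z \in C$ with $L(\alpha) = r$. Since maximal causal geodesics have constant causal character and null geodesics have zero length, $\alpha$ must be timelike.

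For (ii), suppose toward contradiction that $\eta$ is a nontrivial future causal curve between distinct points $x, y \in S^-_r(C)$. Let $\beta$ be a radial future timelike geodesic from $y$ to some $z \in C$ with $L(\beta) = r$. The concatenation $\eta + \beta$ is a causal curve from $x$ to $z$ of length $L(\eta) + r$, so $d(x, z) \ge L(\eta) + r$; but $d(x, z) \le d(x, C) = r$, forcing $L(\eta) = 0$ and hence $\eta$ null. If $\eta$ is not a null geodesic, standard causal theory gives $x \ll y$, whence $d(x, z) \ge d(x, y) + r > r$, a contradiction. Otherwise $\eta$ is a null geodesic, and since $\eta'(y)$ is null while $\beta'(y)$ is timelike, the concatenation has a genuine corner at $y$ which can be rounded to produce a strictly longer timelike curve from $x$ to $z$, again contradicting $d(x, C) = r$.

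For (iii), write $A := S^-_r(C)$. I would show $I^-(A) = \{x : d(x, C) > r\}$. The inclusion $\subset$ follows from the reverse triangle inequality: $x \ll a \in A$ gives $d(x, C) \ge d(x, a) + r > r$. For $\supset$, given $x$ with $d(x, C) = r' > r$, take a unit-speed maximal timelike geodesic $\gamma : [0, r'] \to M$ from $x$ to some $z^* \in C$ and set $a^* := \gamma(r' - r)$. Then $d(a^*, z^*) = r$ by maximality of the subsegment $\gamma|_{[r' - r, r']}$, so $d(a^*, C) \ge r$; strict inequality would let us prepend $\gamma|_{[0, r' - r]}$ to a maximal geodesic from $a^*$ to $C$ and conclude $d(x, C) > r'$, a contradiction. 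Hence $a^* \in A$ with $x \ll a^*$, so $x \in I^-(A)$. Combining the two inclusions with the continuity of $d(\cdot, C)$ (Proposition \ref{GHdistset}) yields $\partial I^-(A) = \{d(\cdot, C) = r\} = A$, exhibiting $A$ as an achronal boundary and thus edgeless by Proposition \ref{ABs}.

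The main subtle point is the null case in step (ii): the corner-rounding argument is morally the same device appearing in the proof of Lemma \ref{diamondslicing}, and uses global hyperbolicity in an essential way (to ensure the maximizing segment $\beta$ exists and that deformations produce genuinely longer timelike curves). Everything else is a direct consequence of the reverse triangle inequality together with the existence of maximizing geodesics provided by past causal completeness.
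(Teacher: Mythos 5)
The paper itself does not reprove Proposition \ref{spheresareABs} --- it is imported from \cite{horo1} --- so I can only judge your argument on its own terms, and it is correct and is essentially the standard route: maximizers from Proposition \ref{GHdistset} give the radial timelike segments (timelike because a maximal causal geodesic of positive length cannot be null), acausality follows from the reverse triangle inequality plus the corner-rounding fact in a convex neighborhood (the same device as in Lemma \ref{diamondslicing}), and the achronal-boundary property comes from identifying $I^-(S^-_r(C))$ with the open superlevel set $\{d(\cdot,C) > r\}$. One small step in (iii) deserves to be made explicit: continuity of $d(\cdot,C)$ only gives $\partial I^-(A) \subset \{d(\cdot,C) = r\}$; for the reverse inclusion $A \subset \partial I^-(A)$ you must check that each $x$ with $d(x,C)=r$ lies in $\overline{I^-(A)}$, which follows from your own computation (any $w \ll x$ has $d(w,C) \ge d(w,x) + r > r$, so $I^-(x) \subset I^-(A)$, and $x \in \overline{I^-(x)}$). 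Equivalently, and slightly more in the spirit of Section \ref{secABs}, one can observe that $\{d(\cdot,C) > r\}$ is an open past set (it is closed under $I^-$ by the reverse triangle inequality, hence equals its own chronological past by Lemma \ref{pastconvex}), so its boundary is an achronal boundary, and the existence of maximizers shows this boundary is exactly $\{d(\cdot,C)=r\}$; Proposition \ref{ABs} then delivers closedness and edgelessness at once, with your corner argument needed only to upgrade achronality to acausality.
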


Now consider a sequence of past spheres $\{S^-_k := S^-_{r_k}(C_k)\}$, with each $C_k$ past causally complete, and $r_k \to \infty$. If $\{S^-_k\}$ has an achronal limit as in Section \ref{secachlimits},
$$S^-_\infty := \lim \{S^-_k\}$$
then we call $S^-_\infty$ the \emph{past horosphere} associated to $\{S^-_k\}$. Time-dually, an analogous sequence of future spheres, $\{S^+_k\}$, leads to a \emph{future horosphere}, $S^+_\infty$.

We first note that the above represents a broad generalization of the horospheres in \cite{horo1}, by dropping the requirement that the sequence of `prehorospheres' be monotonic. At the same time, all of the basic properties of horospheres established in \cite{horo1} continue to hold, and indeed most of the proofs that remain after Section \ref{secachlimits} above carry over unchanged. In particular, combining Proposition \ref{spheresareABs}, Theorem \ref{achlimitsresults}, and Lemma \ref{Hausdorffseq} above, with Lemma 3.19 in \cite{horo1}, we have:

\begin{thm} \label{horosthm} Let $M$ be a globally hyperbolic spacetime, and let $S^-_\infty \ne \emptyset$ be a past horosphere as above. Then $S^-_\infty$ is an achronal boundary, and hence a closed, achronal $C^0$ hypersurface. Further, there is a `radial' future $S^-_\infty$-ray emanating from each point $x \in S^-_\infty$. If $S^-_\infty$ lies in the past of a Cauchy surface, then $S^-_\infty$ is acausal, and every future $S^-_\infty$-ray is timelike. Time-dual statements hold for future horospheres.
\end{thm}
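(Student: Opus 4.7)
The plan is to assemble the statement from the structural results of Section~\ref{secachlimits}, the standard limit curve lemma in a globally hyperbolic spacetime, and the null generator argument of Lemma~3.19 in \cite{horo1}. The delicate step is the passage to the limit in the maximality of the radial segments.

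By Proposition~\ref{spheresareABs}, each prehorosphere $S^-_k = S^-_{r_k}(C_k)$ is an achronal boundary. The final statement of Theorem~\ref{achlimitsresults} (packaged in Theorem~\ref{achlimitsequiv}) immediately yields that the Hausdorff closed limit $S^-_\infty$ is itself an achronal boundary, and hence, by Proposition~\ref{ABs}, a closed achronal $C^0$ hypersurface. The time-dual statement follows by symmetry.

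For the radial ray, fix $x \in S^-_\infty$. By Lemma~\ref{Hausdorffseq}(2), there is a sequence $x_k \in S^-_k$ with $x_k \to x$, and Proposition~\ref{spheresareABs} supplies a future timelike radial segment $\g_k$ from $x_k$ to some $c_k \in C_k$ with $L(\g_k) = r_k$. A standard limit curve argument in the globally hyperbolic setting extracts a future causal limit curve $\g$ emanating from $x$, and because $r_k \to \infty$ while causal diamonds in $M$ are compact, $\g$ must be future-inextendible. One next observes that each $\g_k$ is in fact $S^-_k$-maximizing on any initial subsegment: if there were a strictly longer causal curve from some $y_k \in S^-_k$ to $\g_k(t)$, concatenating with $\g_k|_{[t, r_k]}$ would give $d(y_k, C_k) > r_k$, contradicting $y_k \in S^-_k$. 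Combining this with continuity of Lorentzian distance (Proposition~\ref{GHdist}), upper semi-continuity of length under limit curve convergence, and the Hausdorff convergence $S^-_k \to S^-_\infty$ from Lemma~\ref{Hausdorffseq}, one concludes $L(\g|_{[0,t]}) = d(S^-_\infty, \g(t))$ for all $t$, so that $\g$ is the desired future $S^-_\infty$-ray.

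Now suppose $S^-_\infty$ lies in the past of a Cauchy surface $\S$. If some future $S^-_\infty$-ray $\g$ were null, Proposition~\ref{sray} would force $\g \subset S^-_\infty \subset I^-(\S)$; but $\g$ is a future-inextendible future causal curve starting in $I^-(\S)$ and so must cross $\S$, which is impossible since $\S \cap I^-(\S) = \emptyset$. Hence every future $S^-_\infty$-ray is timelike. Acausality of $S^-_\infty$ then follows from the null generator argument of Lemma~3.19 in \cite{horo1}: any null causal relation between two points of $S^-_\infty$ would, by the null generator structure of edgeless achronal boundaries, extend to a null $S^-_\infty$-ray, contradicting what was just shown. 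The main obstacle, as flagged above, is the limit curve step: extracting a future-inextendible limit curve is routine, but transferring the $S^-_k$-maximality of the radial segments to genuine $S^-_\infty$-maximality of $\g$ is where the interplay between continuity of $d$ and Hausdorff convergence of the $S^-_k$ is essential.
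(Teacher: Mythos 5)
Your proposal is correct and takes essentially the same route as the paper, whose proof is simply the assembly of Proposition \ref{spheresareABs}, Theorem \ref{achlimitsresults}, and Lemma \ref{Hausdorffseq} with Lemma 3.19 of \cite{horo1}; your limit-curve construction of the radial ray (with the maximality transfer via continuity of $d$ and maximality of limit curves) and your null-ray exclusion are precisely the content of that cited lemma. One small nitpick: ``lies in the past of a Cauchy surface'' means $S^-_\infty \subset J^-(\Sigma)$ (cf.\ Theorem \ref{horosplit}), not $I^-(\Sigma)$, so to exclude a null $S^-_\infty$-ray you should appeal to the standard fact that an inextendible causal curve in a globally hyperbolic spacetime must meet $I^+(\Sigma)$, while any causal curve remaining in $J^-(\Sigma)$ cannot.
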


We think of a past horosphere $S^-_\infty$ as `a past sphere centered at future infinity'. The future $S^-_\infty$-rays from each point are, roughly, the radial segments connecting $S^-_\infty$ to its `center', and arise precisely as limits of the radial segments of the corresponding sequence of spheres. Time-dually, a future horosphere $S^+_\infty$ can be thought of as a sphere centered at past infinity, with each point in $S^+_\infty$ connected to its `center' by a past $S^+_\infty$-ray. 

\medskip
\noindent
{\it Ray and Cauchy horospheres.}
We now briefly recall the two main horosphere constructions defined in \cite{horo1}. Although each remains unchanged from \cite{horo1}, this will serve both as concrete illustrations of the above, and also as preparation for Section \ref{secapps}, where both constructions will be used prominently. We begin with the construction of a `ray horosphere', which is closely related to the conventional Lorentzian Busemann horosphere. Let $\g : [0, \infty) \to M$ be a future complete timelike unit-speed ray. Taking the points $C_k := \g(k)$ as centers, it follows that the  past spheres $S^-_k(\g(k))$ form a monotonic sequence of achronal boundaries, with increasing pasts $P_k = I^-(S^-_k(\g(k)))$. Hence, as in Corollary \ref{oldachlimits}, we have a well-defined achronal limit $S^-_\infty(\g) := \lim \{S^-_k(\g(k))\}$, which we call the \emph{past ray horosphere} from $\g$. See \cite{horo1} for further details, and proofs of the following:

\begin{prop} \label{rayhoroprop} Suppose that $M$ is globally hyperbolic, and let $S^-_\infty(\g)$ be the past ray horosphere associated to a future complete timelike ray $\gamma$. Then $\g(0) \in S^-_\infty(\g) \subset \overline{I^-(\g)}$. In particular, $S^-_\infty(\g)$ is a nonempty, closed, achronal $C^0$ hypersurface, with future $S^-_\infty(\g)$-rays emanating from each point. In general, $\g$ is itself an $S^-_\infty(\g)$-ray. If $\g$ is a future $S$-ray for some Cauchy surface $S$, then $S^-_\infty(\g) \subset J^-(S)$, and hence $S^-_\infty(\g)$ is acausal and all future $S^-_\infty(\g)$-rays are timelike. 
\end{prop}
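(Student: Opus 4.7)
The plan is to verify each claim in turn, with most of the structure following from the general horosphere theory of Theorem \ref{horosthm} applied to the specific sequence from $\gamma$. For (1), observe that since $\gamma$ is maximal (being a ray), $d(\gamma(0), \gamma(k)) = L(\gamma|_{[0,k]}) = k$ for every $k$, hence $\gamma(0) \in S^-_k(\gamma(k))$ for all $k$, so $\gamma(0) \in S^-_\infty(\gamma)$ by Lemma \ref{Hausdorffseq}. Proposition \ref{spheresareABs} gives $S^-_k(\gamma(k)) \subset I^-(\gamma(k)) \subset I^-(\gamma)$, and any point of $S^-_\infty(\gamma)$ is a limit of points from the $S^-_k(\gamma(k))$, so $S^-_\infty(\gamma) \subset \overline{I^-(\gamma)}$. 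For (2), the text preceding the proposition already notes that the pasts $P_k := I^-(S^-_k(\gamma(k)))$ form an increasing sequence, so Corollary \ref{oldachlimits}(1) secures the achronal limit, and the remaining assertions --- closed achronal $C^0$ hypersurface with future $S^-_\infty(\gamma)$-rays emanating from each point --- are direct applications of Theorem \ref{horosthm}.

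For (3), I would show $d(S^-_\infty(\gamma), \gamma(t)) = t$ for all $t \geq 0$; combined with $L(\gamma|_{[0,t]}) = t$ and the future-inextendibility of $\gamma$, this makes $\gamma$ a future $S^-_\infty(\gamma)$-ray. The lower bound $d(S^-_\infty(\gamma), \gamma(t)) \geq d(\gamma(0), \gamma(t)) = t$ is immediate from $\gamma(0) \in S^-_\infty(\gamma)$. For the upper bound, fix any $y \in S^-_\infty(\gamma)$ and write $y = \lim y_k$ with $y_k \in S^-_k(\gamma(k))$ (Lemma \ref{Hausdorffseq}). If $y_k \leq \gamma(t)$, the Lorentzian reverse triangle inequality gives
\beq
k = d(y_k, \gamma(k)) \geq d(y_k, \gamma(t)) + d(\gamma(t), \gamma(k)) = d(y_k, \gamma(t)) + (k - t), \nonumber
\eeq
so $d(y_k, \gamma(t)) \leq t$; otherwise $d(y_k, \gamma(t)) = 0 \leq t$ by convention. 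Continuity of Lorentzian distance in the globally hyperbolic setting (Proposition \ref{GHdist}) then yields $d(y, \gamma(t)) \leq t$, and taking the supremum over $y \in S^-_\infty(\gamma)$ closes the argument.

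For (4), assume $\gamma$ is a future $S$-ray for some Cauchy surface $S$, so $\gamma(0) \in S$ and $d(S, \gamma(k)) = k$. To see $S^-_k(\gamma(k)) \subset J^-(S)$, suppose otherwise that some $y \in S^-_k(\gamma(k))$ satisfies $y \in I^+(S)$; picking $s \in S$ with $s \ll y$, the reverse triangle inequality gives $d(S, \gamma(k)) \geq d(s, y) + d(y, \gamma(k)) > 0 + k$, a contradiction. Since $J^-(S)$ is closed, Hausdorff limits preserve this containment, so $S^-_\infty(\gamma) \subset J^-(S)$. If some future $S^-_\infty(\gamma)$-ray $\lambda$ were null, Proposition \ref{sray} (applied to $S^-_\infty(\gamma)$, which is a closed achronal $C^0$ hypersurface by Theorem \ref{horosthm}) would force $\lambda \subset S^-_\infty(\gamma) \subset J^-(S)$; but $\lambda$ is future-inextendible in globally hyperbolic $M$, so it must cross $S$ and enter $I^+(S)$, a contradiction. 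Finally, for acausality, suppose $p, q \in S^-_\infty(\gamma)$ with $p < q$ via a null geodesic $\eta$, and let $\alpha$ be a future $S^-_\infty(\gamma)$-ray from $q$, necessarily timelike. The concatenation $\eta \cup \alpha|_{[0,t]}$ is a causal curve from $p$ to $\alpha(t)$ of length $t$ with a genuine corner at $q$; since maximal causal curves are smooth geodesics, this broken curve cannot be distance-realizing, giving $d(p, \alpha(t)) > t$. But $p \in S^-_\infty(\gamma)$ together with $\alpha$ being an $S^-_\infty(\gamma)$-ray yields $d(p, \alpha(t)) \leq d(S^-_\infty(\gamma), \alpha(t)) = t$, the desired contradiction. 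The main obstacle is this final acausality step, since it requires carefully invoking the fact that maximal causal curves are smooth geodesics to promote the length comparison to a strict inequality, and then leveraging the $S^-_\infty(\gamma)$-ray property of $\alpha$ to reach a contradiction.
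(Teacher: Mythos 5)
Your proof is correct, but note that the paper itself does not prove this proposition: it is quoted from \cite{horo1} (``See \cite{horo1} for further details, and proofs of the following''), so what you have produced is a self-contained derivation from the general machinery of the present paper, which is a perfectly legitimate route. Parts (1)--(3) are exactly the expected arguments: maximality of $\g$ gives $\g(0)\in S^-_k(\g(k))$ and the reverse triangle inequality gives $d(y,\g(t))\le t$ for every $y\in S^-_\infty(\g)$, with continuity of $d$ (Proposition \ref{GHdist}) handling the passage to the limit. For part (4), your reverse-triangle argument for $S^-_k(\g(k))\cap I^+(S)=\emptyset$ and hence $S^-_\infty(\g)\subset J^-(S)$ (using that $J^-(S)=M\setminus I^+(S)$ is closed for a Cauchy surface) is fine; but at that point the acausality of $S^-_\infty(\g)$ and the timelike character of all its future rays are already the last sentence of Theorem \ref{horosthm}, so your re-derivation, while sound, is redundant. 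Within that re-derivation, two steps deserve an explicit word: the claim that a future-inextendible causal curve lying in $J^-(S)$ must eventually enter $I^+(S)$ is true but is a genuine causal-theoretic fact about Cauchy surfaces (every future-inextendible causal curve meets $I^+(S)$) and should be cited or proved rather than asserted as automatic; and the corner argument for acausality implicitly uses that a causal curve realizing the distance between its endpoints is an unbroken geodesic, which the paper does state in Section \ref{secdistance}, so a pointer there would close that step cleanly.
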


A `Cauchy horosphere', on the other hand, is built instead from a Cauchy surface $S$. We assume that $S$ is compact, and that $M$ is future timelike geodesically complete, so that the future spheres $C_k := S^+_k(S)$ are compact Cauchy surfaces as well, (see \cite{horo1}). Taking these as our sequence of centers, it follows that the sequence of past spheres $\{S^-_k(C_k)\}$ is monotonic, with decreasing pasts $P_k = I^-(S^-_k(C_k))$, and hence again we have a well-defined achronal limit $S^-_\infty(S) := \lim\{S^-_k(S^+_k(S))\}$, which we call the \emph{past Cauchy horosphere} from $S$.
The following is also established in  \cite{horo1}.

\begin{prop} \label{Cauchyhoroprop} Suppose that $M$ is future timelike geodesically complete, with compact Cauchy surface $S$, and let $S^-_\infty(S)$ be the past Cauchy horosphere from $S$. Then $\emptyset \ne S^-_\infty(S) \subset J^-(S)$. In particular, $S^-_\infty(S)$ is a nonempty, closed, acausal $C^0$ hypersurface, with future timelike $S^-_\infty(S)$-rays emanating from each point. In fact, letting $\g$ be any future $S$-ray, we have $\g(0) \in S^-_\infty(S)$, with $S^-_\infty(\g) \subset J^-(S^-_\infty(S))$.
\end{prop}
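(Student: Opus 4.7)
The plan is to produce an explicit point of $S^-_\infty(S)$ using a future $S$-ray, derive the inclusion in $J^-(S)$ from a reverse triangle inequality, invoke Theorem \ref{horosthm} for the structural conclusions, and then compare the two horosphere constructions to obtain the final containment.

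First, by Proposition \ref{sray} the compact Cauchy surface $S$ admits a future $S$-ray $\g$, which is necessarily timelike. Parametrize $\g$ by arclength so $\g(0) \in S$ and $d(S, \g(t)) = t$ for all $t \geq 0$; in particular $\g(k) \in C_k = S^+_k(S)$. I claim $\g(0) \in S^-_k(C_k)$ for every $k$: maximality of $\g|_{[0,k]}$ gives $d(\g(0), \g(k)) = k$, so $d(\g(0), C_k) \geq k$; and for any $z \in C_k$, $d(\g(0), z) \leq d(S, z) = k$ gives the reverse inequality. Applying Lemma \ref{Hausdorffseq} to the constant sequence $\g(0) \in S^-_k(C_k)$ then yields $\g(0) \in S^-_\infty(S)$, simultaneously establishing nonemptiness and the final claim $\g(0) \in S^-_\infty(S)$ for any future $S$-ray.

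For $S^-_\infty(S) \subset J^-(S)$, I first show each $S^-_k(C_k) \subset J^-(S)$. Suppose $x \in S^-_k(C_k) \cap I^+(S)$, and choose $s \in S$ with $s \ll x$. Compactness of $C_k$ provides $z \in C_k$ with $d(x, z) = d(x, C_k) = k$, and the reverse triangle inequality gives $d(s, z) \geq d(s, x) + d(x, z) > k$, contradicting $d(S, z) = k$. Since $J^-(S)$ is closed, Lemma \ref{Hausdorffseq} delivers $S^-_\infty(S) \subset J^-(S)$. The remaining structural assertions --- closed, acausal $C^0$ hypersurface with timelike future $S^-_\infty(S)$-rays at every point --- then follow from Theorem \ref{horosthm}, using that global hyperbolicity produces a Cauchy surface $S'$ with $J^-(S) \subset I^-(S')$.

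The main obstacle is the final assertion $S^-_\infty(\g) \subset J^-(S^-_\infty(S))$, which couples the ray and Cauchy constructions. My approach is to compare pasts level-by-level. For any $x \in S^-_k(\g(k))$, $\g(k) \in C_k$ gives $d(x, C_k) \geq d(x, \g(k)) = k$; if equality holds then $x \in S^-_k(C_k)$, and otherwise, taking a maximal segment of length $\ell := d(x, C_k) > k$ from $x$ to some $z^* \in C_k$ and letting $y$ be the point on it at distance $k$ before $z^*$, a second reverse-triangle-inequality argument shows $d(y, C_k) = k$, so $x \ll y \in S^-_k(C_k)$. Hence $I^-(S^-_k(\g(k))) \subset I^-(S^-_k(C_k))$ for every $k$. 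Since Hausdorff closed limits preserve inclusions, Proposition \ref{Hausdorffpasts} gives $S^-_\infty(\g) \subset \Pi_\infty$, where $\Pi_\infty = \lim \{I^-(S^-_k(C_k))\}$. Monotonicity of this decreasing sequence yields $\Pi_\infty = \bigcap_k \overline{I^-(S^-_k(C_k))}$, so by Proposition \ref{spheresareABs}, for any $x \in \Pi_\infty$ and each $k$ there is $y_k \in S^-_k(C_k)$ with $x \leq y_k$. These $y_k$ lie in the compact set $J^+(x) \cap J^-(S)$, so a convergent subsequence has a limit $y \in S^-_\infty(S) \cap J^+(x)$ by Lemma \ref{Hausdorffseq}, yielding $x \in J^-(S^-_\infty(S))$ and completing the proof.
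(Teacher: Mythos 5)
Your proposal is correct. Note first that the paper itself offers no proof of Proposition \ref{Cauchyhoroprop} --- it is quoted from \cite{horo1} --- so there is no in-paper argument to compare against; what you have written is a self-contained derivation using the present paper's machinery, and it holds up. The computation $d(\g(0),C_k)=k$ (maximality of $\g|_{[0,k]}$ for the lower bound, $d(\g(0),z)\le d(S,z)=k$ for the upper) correctly places $\g(0)$ in every prehorosphere, and Lemma \ref{Hausdorffseq} then gives $\g(0)\in S^-_\infty(S)$ for an arbitrary future $S$-ray, which settles nonemptiness. The reverse-triangle argument for $S^-_k(C_k)\cap I^+(S)=\emptyset$, together with closedness of $J^-(S)$ and $M\setminus I^+(S)=J^-(S)$ for a Cauchy surface, gives the future bound, after which Theorem \ref{horosthm} legitimately supplies acausality, the $C^0$ hypersurface structure, and timelike future rays. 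For the last containment, your level-by-level comparison (each point of $S^-_k(\g(k))$ lies in $S^-_k(C_k)\cup I^-(S^-_k(C_k))$ via the interior point $y$ on a maximal segment to $C_k$), the inclusion-monotonicity of Hausdorff limits, Proposition \ref{Hausdorffpasts}, and the compactness of $J^+(x)\cap J^-(S)$ with closedness of the causal relation all fit together correctly. Two housekeeping remarks: you are implicitly importing from the paper's setup (itself cited to \cite{horo1}) that the $C_k=S^+_k(S)$ are compact Cauchy surfaces and that the pasts $P_k=I^-(S^-_k(C_k))$ are decreasing --- the latter is what licenses $\Pi_\infty=\bigcap_k\overline{P_k}$; it would be worth flagging these as inputs rather than consequences of your argument. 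Also, the last step could be marginally streamlined by taking $p_k\in P_k$ with $p_k\to x$ (Lemma \ref{Hausdorffseq}) and pushing each $p_k$ up to $S^-_k(C_k)$, which avoids any appeal to monotonicity, but your version as written is sound.
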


Finally, we note the time-dual constructions of the above. Namely, a past complete timelike ray $\b$ gives rise to a \emph{future ray horosphere}, $S^+_\infty(\b) : = \lim \{S^+_k(\b(k))\}$, and if $M$ is past timelike complete, we can similarly construct the \emph{future Cauchy horosphere}, $S^+_\infty(S) := \lim \{S^+_k(S^-_k(S))\}$, from any compact Cauchy surface $S$.

\subsection{Horosphere Structure and Rigidity} \label{sechorostructure}
We now present a new splitting result for general horospheres as defined in  Section \ref{secspheresandhoros}.  We begin with several key lemmas. We continue to assume throughout that $M$ is globally hyperbolic.

\begin{lem} \label{nullhoro} Let $S^-_\8$ be any past horosphere. Then $p \in S^-_\8$ is causally related to another point of $S^-_\8$ iff there is a null future $S^-_\8$-ray based at $p$.
\end{lem}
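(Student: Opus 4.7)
The plan is to handle the two directions separately. For the easy direction $(\Leftarrow)$, if $\eta$ is a null future $S^-_\infty$-ray based at $p$, then since $S^-_\infty$ is a closed achronal $C^0$ hypersurface by Theorem \ref{horosthm}, Proposition \ref{sray} forces $\eta \subset S^-_\infty$; any $\eta(t)$ with $t > 0$ supplies the required second point of $S^-_\infty$ causally related to $p$.

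For the substantive direction $(\Rightarrow)$, I would take distinct $p, q \in S^-_\infty$ causally related and, by relabeling, assume $p \le q$. Achronality of $S^-_\infty$ rules out $p \ll q$, so standard causal theory joins $p$ and $q$ by an achronal null geodesic $\sigma$. Theorem \ref{horosthm} also supplies a future $S^-_\infty$-ray $\widetilde\eta$ based at $q$, and my aim is to show that the concatenation $\tau := \sigma * \widetilde\eta$ is itself a null future $S^-_\infty$-ray from $p$. For any $\tau(t)$ lying on the $\widetilde\eta$-portion, the chain
\[
L(\tau|_{[0,t]}) \;=\; L(\widetilde\eta|_{[0,t]}) \;=\; d(S^-_\infty, \widetilde\eta(t)) \;\ge\; d(p, \widetilde\eta(t)) \;\ge\; L(\tau|_{[0,t]})
\]
(using $p \in S^-_\infty$ in the middle, and the sup definition of $d$ on the right) forces equality throughout, so $\tau$ is maximal from $p$ to $\widetilde\eta(t)$.

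Since maximal causal curves are unbroken geodesics, this is where the action sits: a corner argument at $q$ (in the spirit of Lemma \ref{diamondslicing}) precludes a null-to-timelike transition, forcing $\widetilde\eta'(q)$ to be proportional to $\sigma'(q)$ and hence null. Thus $\widetilde\eta$ is null and $\tau$ is a single null geodesic. Checking that $\tau$ is an $S^-_\infty$-ray is then routine: it has length zero throughout, while $d(S^-_\infty, \tau(t)) = 0$ because for $\tau(t)$ on $\sigma$ any $z \in S^-_\infty$ with $z \ll \tau(t) \le q$ would violate achronality of $S^-_\infty$, and for $\tau(t)$ on $\widetilde\eta$ this is built into $\widetilde\eta$; future-inextendibility is inherited from $\widetilde\eta$. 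The genuine obstacle is thus the corner argument, whose job is to upgrade the abstract maximality into the null character of $\tau$; everything else is bookkeeping with the definitions of horosphere and $S^-_\infty$-ray.
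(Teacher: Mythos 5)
Your proof is correct and follows essentially the same route as the paper: join the two points by a null geodesic, take the future $S^-_\infty$-ray from the future endpoint, and force that ray to be null and to continue the segment via a maximality/corner argument at the junction, then concatenate and check the ray property from achronality --- your inequality chain is precisely the quantitative form of the paper's corner-cutting contradiction. The only cosmetic remark is that your `relabeling' so that $p \le q$ is harmless because your argument also shows $\widetilde\eta$ itself is a null $S^-_\infty$-ray based at the future point, so a null ray exists at whichever of the two points the lemma's $p$ happens to be, exactly as the paper notes with its parenthetical ``with either $p=x$ or $p=y$''.
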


\begin{proof} Since a null $S^-_\8$-ray is necessarily contained in $S^-_\8$ (cf. Proposition \ref{sray}), one direction is trivial. Suppose then that $x, y \in S^-_\8$ with $x \le y$. Since $S^-_\8$ is achronal, $x$ and $y$ are necessarily joined by a (maximal) null geodesic segment (contained in $S^-_\8$). Being a past horosphere, $S^-_\8$ admits a future $S^-_\8$-ray $\b_y$ from $y$. If this were timelike, then by cutting the corner at $y$, we could produce a `longer' curve from $x$ to $\b_y$. Hence, $\b_y$ is null with $\b_y \subset S^-_\8$, and in fact, $\b_y$ must extend the null geodesic segment joining $x$ to $y$. Joining these then gives a future inextendible null $S^-_\8$-ray $\b_x$ from $x$, with $\b_y \subset \b_x \subset S^-_\8$. The statement of the lemma follows (with either $p = x$ or $p = y$).
\end{proof}

\begin{lem} \label{acausalhoro} Let $S^-_\8$ be any past horosphere and $p \in S^-_\8$. Then $S^-_\8$ fails to be acausal near $p$ iff there is a future null $S^-_\8$-ray from $p$.
\end{lem}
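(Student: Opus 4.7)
The plan is to handle the two directions separately, with the ``only if'' direction being the substantive one.

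The ``if'' direction is immediate. If $\b$ is a future null $S^-_\8$-ray from $p$, then Proposition \ref{sray} says $\b \sbst S^-_\8$. Any open neighborhood $U$ of $p$ meets $\b$ in a nontrivial segment, so $U \cap S^-_\8$ contains $p$ together with points of the form $\b(t) > p$, $t > 0$ small. Thus $S^-_\8$ is not acausal near $p$.

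For the ``only if'' direction, suppose $S^-_\8$ fails to be acausal near $p$. Then inside every neighborhood of $p$ we find distinct points of $S^-_\8$ joined by a causal curve. Choose a neighborhood basis $\{U_k\}$ shrinking to $p$ and pairs $x_k, y_k \in S^-_\8 \cap U_k$ with $x_k \le y_k$ and $x_k \ne y_k$; by passing to a subsequence we may assume $x_k, y_k \to p$. Applying Lemma \ref{nullhoro} (more precisely, the proof of it) to each pair gives a future inextendible null $S^-_\8$-ray $\b_k$ based at $x_k$ which extends the null segment from $x_k$ through $y_k$; in particular $\b_k \sbst S^-_\8$ and each $\b_k$ is a maximal null geodesic.

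Next I would invoke the standard limit curve lemma in the globally hyperbolic setting (see \cite{BEE}): since $x_k \to p$, a subsequence of $\{\b_k\}$ converges uniformly on compact sets to a future inextendible causal curve $\b$ starting at $p$. Because each $\b_k$ lies in the closed set $S^-_\8$, so does $\b$; because each $\b_k$ is a maximal causal geodesic and maximality is preserved under such limits, $\b$ is a maximal causal geodesic as well. Now the achronality of $S^-_\8$ forces any two points of $\b$ to be null-related, so $\b$ has zero Lorentzian length, i.e., $\b$ is a future inextendible null geodesic contained in $S^-_\8$. Finally, for each $t$, achronality of $S^-_\8$ gives $d(z,\b(t)) = 0$ for every $z \in S^-_\8$ (else $z \ll \b(t)$, contradicting $\b(t) \in S^-_\8$), so $d(S^-_\8, \b(t)) = 0 = L(\b|_{[0,t]})$. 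Hence $\b$ is a null $S^-_\8$-ray based at $p$, as required.

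The main (and really only) obstacle is the limit curve step: one must invoke the appropriate Lorentzian limit curve lemma to extract a future inextendible causal limit of the rays $\b_k$ starting at the moving basepoints $x_k$, and argue that maximality and the null character survive in the limit. Once that is in hand, the $S^-_\8$-ray property follows automatically from achronality of $S^-_\8$, which is a notable simplification compared to the timelike case.
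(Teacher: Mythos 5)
Your proof is correct and follows essentially the same route as the paper: extract null $S^-_\8$-rays $\b_k$ from the points $x_k \to p$ via Lemma \ref{nullhoro}, pass to a limit curve, and show the limit is a null $S^-_\8$-ray from $p$ (the converse being immediate from Proposition \ref{sray}). Your verification that the limit is a null $S^-_\8$-ray—via containment in the closed set $S^-_\8$ and achronality—is a clean packaging of what the paper dispatches with ``standard arguments'' together with the continuity of $d$ to get nullity.
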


\begin{proof} To be precise, we say $S^-_\8$ is acausal near $p \in S^-_\8$ if there is some neighborhood $U$ of $p$ in $M$ such that $S^-_\8 \cap U$ is acausal in $M$. Suppose that $S^-_\8$ fails to be acausal near $p$. Fix a complete Riemannian metric $h$ on $M$ and let $U_k$ be the $h$-ball of radius $1/k$ around $p$. That $S^-_\8$ fails to be acausal near $p$ means that for each $k$, there are distinct points $x_k, y_k \in S^-_\8 \cap U_k$ with $x_k \le y_k$. Then, as in Lemma \ref{nullhoro}, there is a null $S^-_\8$-ray $\b_k : [0, \infty) \to M$ from $x_k$, parameterized with respect to $h$ arc length. Since $x_k \to p$, it follows, by standard arguments, that any limit curve $\b : [0, \8) \to M$ of $\{\b_k\}$ is a future $S^-_\8$-ray  from $p$. To see that $\b$ is null, one observes, for example, that for each $t > 0$, we have $d(\b(0), \b(t)) = \lim_{j \to \8} d(\b_{k_j}(0), \b_{k_j}(t)) = 0$. 
\end{proof}

Let $S^-_\infty$ be a past horosphere. Motivated by Lemmas \ref{nullhoro} and \ref{acausalhoro}, we will call $p \in S^-_\infty$ a \emph{null point} if there is a future null $S^-_\infty$-ray from $p$. Otherwise, we call $p$ a \emph{spacelike point} of $S^-_\infty$. Note that the set of spacelike points is open in $S^-_\infty$. The points of a future horosphere are classified time-dually. 

\begin{lem} \label{tangential} Suppose that $S^-_\8$ and $S^+_\8$ are past and future horospheres, respectively, satisfying $I^+(S^+_\8) \cap I^-(S^-_\8) = \emptyset$. Then at any intersection point $p \in S^-_\8 \cap S^+_\8$, one of the following situations holds:

\ben
\item [(1)] The point $p$ is a spacelike point for both horospheres, and there is a unique future $S^-_\infty$-ray from $p$, and a unique past $S^+_\infty$-ray from $p$, both of which are timelike and join to form a timelike line. 
\item [(2)] The point $p$ is a null point for both horospheres, and there is a unique future $S^-_\infty$-ray from $p$, and a unique past $S^+_\infty$-ray from $p$, both of which are null and join to form a null line, $\b$, with $\b \subset S^-_\infty \cap S^+_\infty$.
\een
\end{lem}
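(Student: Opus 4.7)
The plan is to produce the two rays, show their concatenation is maximal, and then read off every clause of the lemma from that single fact. First, Theorem~\ref{horosthm} and its time-dual give a future $S^-_\infty$-ray $\alpha$ from $p$ and a past $S^+_\infty$-ray $\beta$ from $p$. Set $\sigma := \beta^{-1}\star\alpha$. The heart of the argument will be to show that $\sigma$ is a maximal causal curve, i.e. a causal line through $p$. Once this is done: (i) $\sigma$ is a single geodesic, so $\alpha,\beta$ have matching causal type, yielding the dichotomy; (ii) any other future $S^-_\infty$-ray $\alpha'$ from $p$ would produce another maximal line $\beta^{-1}\star\alpha'$ with the same past tangent, so by geodesic uniqueness $\alpha'=\alpha$, and dually for $\beta$; and (iii) in the timelike case (Case 1), Lemmas~\ref{nullhoro} and \ref{acausalhoro} together with the uniqueness above preclude any null $S^\pm_\infty$-ray from $p$, so $p$ is a spacelike point of both horospheres.

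For the maximality, the easy direction $d(q,r)\ge L(\beta|_{[0,s]})+L(\alpha|_{[0,t]})$ for $q=\beta(s)$, $r=\alpha(t)$ is just tracing $\sigma|_{[q,r]}$. Suppose toward contradiction a timelike $\gamma$ from $q$ to $r$ strictly beats this sum. Because $\beta$ is a past ray ending at $p\in S^+_\infty$ and $\alpha$ a future ray from $p\in S^-_\infty$, $q$ lies in $P^\pm_\infty\cup S^\pm_\infty$ while $r$ lies in $F^\pm_\infty\cup S^\pm_\infty$, so $\gamma$ is forced to cross each of $S^-_\infty$ and $S^+_\infty$, at some $z_2$ and $z_1$ respectively. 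The noncrossing hypothesis fixes the order: if $z_1$ preceded $z_2$ on $\gamma$, a strictly interior point of $\gamma|_{[z_1,z_2]}$ would sit in $I^+(z_1)\cap I^-(z_2)\subset I^+(S^+_\infty)\cap I^-(S^-_\infty)$, which is empty. So $z_2\le z_1$ along $\gamma$. Now the $S^\pm_\infty$-ray identities
$L(\beta|_{[0,s]})=d(q,S^+_\infty)\ge d(q,z_1)\ge L(\gamma|_{[q,z_1]})$
and
$L(\alpha|_{[0,t]})=d(S^-_\infty,r)\ge d(z_2,r)\ge L(\gamma|_{[z_2,r]})$,
summed and rearranged via $L(\gamma|_{[q,z_1]})+L(\gamma|_{[z_2,r]}) = L(\gamma)+L(\gamma|_{[z_2,z_1]})\ge L(\gamma)$, give $L(\beta|_{[0,s]})+L(\alpha|_{[0,t]})\ge L(\gamma)$, the desired contradiction.

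In the null case (Case 2), Proposition~\ref{sray} and its dual give $\alpha\subset S^-_\infty$ and $\beta\subset S^+_\infty$; what remains—and what I expect to be the main obstacle—is $\beta\subset S^-_\infty$ and $\alpha\subset S^+_\infty$. Fix $q\in\beta$ with $q\ne p$. Since $F^-_\infty$ is closed under future causal motion, $q\in F^-_\infty$ would force $p\in F^-_\infty$, contradicting $p\in\partial F^-_\infty=S^-_\infty$, so $q\in P^-_\infty\cup S^-_\infty$. Suppose $q\in P^-_\infty=I^-(\Pi^-_\infty)$. Using $q\in S^+_\infty$ (so $I^+(q)\subset I^+(S^+_\infty)$) together with the openness of $P^-_\infty$, I would push $q$ slightly to the future into $I^+(q)\cap I^-(\Pi^-_\infty)$; chasing the $\ll$-witness chain in $\Pi^-_\infty$ down to $S^-_\infty$ (using the approximating-sphere structure of the horosphere $S^-_\infty$, whose pasts $P^-_k=I^-(S^-_k)$ are timelike-past sets of compact-like data), I arrive at a point $q'\in I^+(S^+_\infty)\cap I^-(S^-_\infty)$, contradicting noncrossing. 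Hence $q\in S^-_\infty$, so $\beta\subset S^-_\infty$, and the dual argument gives $\alpha\subset S^+_\infty$, completing Case 2. The delicate point is precisely this chain-termination: the naive one-step push fails because, on the null line, $q$ and any $y\in\alpha\subset S^-_\infty$ are only null-related (never timelike-related), so one cannot directly place a small timelike perturbation of $q$ into $I^-(y)$—the horosphere limit structure is essential to guarantee that moving inside $P^-_\infty$ eventually registers $S^-_\infty$ as a $\ll$-witness.
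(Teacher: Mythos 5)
Your line construction, and the dichotomy and uniqueness you extract from it, are correct and essentially reproduce the paper's argument: the paper also concatenates the two rays, uses the partitions $M=P^\pm_\infty\cup S^\pm_\infty\cup F^\pm_\infty$ to force any competing curve through both horospheres, and bounds its length by the two $S^\pm_\infty$-ray identities. The only difference is cosmetic: you restrict to timelike competitors and invoke the noncrossing hypothesis to exclude the ordering in which the $S^+_\infty$-crossing strictly precedes the $S^-_\infty$-crossing, whereas the paper allows causal competitors and notes that in that ordering the intermediate segment is forced to be null and so contributes no length; either way the ``count the overlap twice'' estimate goes through, and the spacelike/null dichotomy and the uniqueness of the rays follow as you say.

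The genuine gap is exactly where you flag it: the containment of the null line in $S^-_\infty\cap S^+_\infty$ in Case (2). After correctly excluding $q\in F^-_\infty$ for a point $q\le p$ on the past half of the line, you must exclude $q\in P^-_\infty$, and your plan is to perturb $q$ futureward and land in $I^+(S^+_\infty)\cap I^-(S^-_\infty)$. For that you need a point of $S^-_\infty$ in the chronological future of the perturbed point, i.e.\ in effect the inclusion $P^-_\infty\subset I^-(S^-_\infty)$. This is not a general property of achronal boundaries, even in globally hyperbolic spacetimes: a past set can contain the entire chronological future of one of its points (for instance $P=I^-(I^+(x))$ in the slab $0<t<1$ of Minkowski space), and such a point does not lie in $I^-(\partial P)$. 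Knowing $q\in P^-_\infty=I^-(\Pi_\infty)$ only produces a $\ll$-witness in the limit of the pasts $P_k$, and promoting a witness in some $P_k=I^-(S^-_k)$ to a witness on $S^-_\infty$ requires a limit-curve/compactness argument that your sketch (``chasing the $\ll$-witness chain'') does not supply; so as written Case (2) is unproved. For comparison, the paper argues along the ray itself: the future null ray $\g$ already lies in $S^-_\infty$, so it cannot enter $I^+(S^+_\infty)$ (a point strictly between a witness $z\in S^+_\infty$ and $\g(t)$ would lie in $I^+(S^+_\infty)\cap I^-(S^-_\infty)$), and it cannot enter $I^-(S^+_\infty)$ (since $\g$ starts at $p\in S^+_\infty$, this would violate the achronality of $S^+_\infty$); hence $\g\subset S^+_\infty$, and time-dually for the past ray. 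Note that this route is shorter but still rests on the trichotomy that points off $S^+_\infty$ lie in $I^-(S^+_\infty)\cup I^+(S^+_\infty)$, i.e.\ on $P^+_\infty=I^-(S^+_\infty)$ and $F^+_\infty=I^+(S^+_\infty)$ for horospheres, which the paper treats as immediate; if you keep your route you must prove this structural fact (or the single inclusion $P^-_\infty\subset I^-(S^-_\infty)$ you actually need), and that is precisely what your sketch leaves open.
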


\begin{proof} Fix $p \in S^-_\8 \cap S^+_\8$. Then there is a future $S^-_\8$-ray $\g : [0, c) \to M$ from $p$, and a past $S^+_\8$-ray $\eta : [0, d) \to M$ from $p$, with $0 < c, d \le \infty$. Fixing any $0 < s < d$, and any $0 < t < c$, the initial segments of $\eta$ and $\g$ join to form a causal curve segment from $\eta(s)$ to $\g(t)$. Let $\s : [a,b] \to M$ be any other causal curve from $\s(a) = \eta(s)$ to $\s(b) = \g(t)$. Letting $P^\pm_\8$ and $F^\pm_\8$ be the unique past and future sets associated to $S^\pm_\8$ (as in Proposition  \ref{ABs}), we have $\eta(s) \in P^\pm_\8 \cup S^\pm_\8$ and $\g(t) \in S^\pm_\8 \cup F^\pm_\8$. Hence, $\s$ meets both horospheres. Let $\tau_-, \tau_+ \in [a,b]$ be any parameter times such that $\s(\tau_-) \in S^-_\8$ and $\s(\tau_+) \in S^+_\8$. Note that:
\beq
L(\s_{[a, \tau_+]}) \le d(\eta(s), S^+_\8) = L(\eta|_{[0,s]})  \nonumber
\eeq
\beq
L(\s_{[\tau_-, b]}) \le d(S^-_\8, \g(t)) = L(\g|_{[0,t]}) \nonumber
\eeq
Suppose first that $\tau_- \le \tau_+$. Then, by `counting $\s_{[\tau_-, \tau_+]}$ twice', we have:
\beq
L(\s) \le L(\s|_{[a,\tau_+]}) + L(\s|_{[\tau_-, b]}) \le L(\eta|_{[0,s]}) + L(\g_{[0,t]}) \nonumber\\
\eeq
Suppose now that $\tau_+ < \tau_-$. Then since $I^+(S^+_\8) \cap I^-(S^-_\8) = \emptyset$, the segment $\s|_{[\tau_+, \tau_-]}$ must be null, and does not contribute to the length of $\s$, and we have:
\beq
L(\s) = L(\s|_{[a,\tau_+]}) + \cancel{L(\s|_{[\tau_+, \tau_-]})} + L(\s|_{[\tau_-, b]}) \le L(\eta|_{[0,s]}) + L(\g_{[0,t]}) \nonumber
\eeq

It follows that $\eta$ and $\g$ join to form a line, and from this follows the uniqueness of $\eta$ and $\g$. In particular, either $\eta$ and $\g$ are both timelike, or they are both null. In the null case, $p$ is a null point for both horospheres. On the other hand, if $\eta$ and $\g$ are timelike, then by their uniqueness, $p$ must be a spacelike point for both horospheres.
What remains to show is that in the null case, $\b := - \eta + \g$ is contained in both horospheres. Since, in this case, $\g$ is a null $S^-_\8$-ray, we have $\g \subset S^-_\8$.  Moreover, if there is a point on $\g$ not in $S^+_\8$, then at some stage $\g$ must enter the timelike past of $S^+_\8$ (since by assumption it cannot enter the timelike future).  But this would violate the achronality of $S^+_\8$.  Hence, $\g \subset S^+_\8$. By a similar argument, $\eta$ is also contained in both horospheres.
\end{proof}

\begin{thm} \label{splitlem} Let $M$ be a globally hyperbolic, timelike geodesically complete spacetime, satisfying the timelike convergence condition, $\mathrm{Ric}(X,X) \ge 0$ for all timelike $X$. Let $S^-_\8$ be any past horosphere and $S^+_\8$ any future horosphere which meet at a common spacelike point $p \in S^-_\infty \cap S^+_\infty$, with $I^+(S^+_\8) \cap I^-(S^-_\8) = \emptyset$. Then $S^-_\8 = S^+_\8 =: S_\8$ is a smooth, geodesically complete spacelike Cauchy surface along which $M$ splits,
\beq
(M, g) \approx (\field{R} \times S_\8, -dt^2 + h) \nonumber
\eeq  
\end{thm}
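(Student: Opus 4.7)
The plan is to produce a timelike line through $p$, invoke the Lorentzian splitting theorem, and then use the ray structure of the two horospheres together with the disjointness hypothesis to identify both $S^-_\8$ and $S^+_\8$ with the zero slice of the splitting.

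At the spacelike intersection point $p$, Lemma \ref{tangential}(1) supplies a unique future timelike $S^-_\8$-ray $\g^+$ and a unique past timelike $S^+_\8$-ray $\g^-$ from $p$, which join to form a timelike line $\ell$ through $p$. Because $M$ is globally hyperbolic, timelike geodesically complete, and satisfies $\mathrm{Ric}(X, X) \ge 0$ for all timelike $X$, the existence of $\ell$ lets me invoke the Lorentzian splitting theorem to obtain a product structure $(M,g) \cong (\bbR \times V, -dt^2 + h)$, with $V$ a smooth, complete, connected Riemannian manifold; $\ell$ corresponds to $\bbR \times \{p_V\}$, and $p = (0, p_V)$, so $\g^+(s) = (s, p_V)$ and $\g^-(s) = (-s, p_V)$.

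Working in this splitting, I next show that $S^-_\8$ is globally the graph of a 1-Lipschitz function $\phi^- : V \to \bbR$ with $\phi^-(p_V) = 0$, and similarly for $S^+_\8$ and $\phi^+$. Let $U^- \sbst V$ be the set of $v$ for which the vertical line $\bbR \times \{v\}$ meets $S^-_\8$; by achronality the meeting point is unique, defining $\phi^-$. Achronality forces $\phi^-$ to be 1-Lipschitz with respect to the Riemannian distance $d_h$ on $V$; the edgeless $C^0$ hypersurface property of $S^-_\8$ makes $U^-$ open (locally $S^-_\8$ is a $t$-graph), while 1-Lipschitz continuity together with closedness of $S^-_\8$ in $M$ makes $U^-$ closed. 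Connectedness of $V$ then gives $U^- = V$. Now the $S^-_\8$-ray condition $d(S^-_\8, \g^+(s)) = s$ forces $d(q, \g^+(s)) \le s$ for every $q = (\phi^-(v), v) \in S^-_\8$; computing this in the product metric and letting $s \to \8$ shows $\phi^-(v) \ge 0$. Time-dually, $\phi^+(v) \le 0$. The disjointness hypothesis $I^+(S^+_\8) \cap I^-(S^-_\8) = \emptyset$, unpacked in the splitting, yields the pointwise inequality $\phi^-(v) \le \phi^+(v)$ (set $x = y = v$ in the corresponding disjointness inequality). Chaining $0 \le \phi^- \le \phi^+ \le 0$ gives $\phi^\pm \equiv 0$, so $S^-_\8 = S^+_\8 = \{0\} \times V =: S_\8$, which is the smooth spacelike Cauchy surface of the splitting, and the desired product decomposition follows.

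The main obstacle I anticipate is the last identification step---translating the abstract horosphere-ray structure into concrete pointwise constraints inside the splitting. The Lorentzian splitting theorem itself is invoked as a black box; once the product metric is in hand, the real work lies in the distance computations against $\g^+$ and $\g^-$ that, combined with the disjointness hypothesis, force $\phi^\pm$ to vanish identically.
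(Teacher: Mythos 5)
Your proposal is correct, but it takes a genuinely different route from the paper's. You build the complete timelike line through $p$ from Lemma \ref{tangential}(1) (inextendibility of the two rays plus timelike geodesic completeness makes it complete), invoke the Lorentzian splitting theorem as a black box, and then finish inside the product: both horospheres become global $1$-Lipschitz graphs $t=\phi^\pm$ over the Riemannian factor (achronality gives the Lipschitz bound and at most one intersection with each vertical line; edgelessness/closedness gives openness and closedness of the graph domain, so connectedness gives all of $V$); the $S^-_\infty$-ray and $S^+_\infty$-ray conditions against the two halves of the line yield $\phi^-\ge 0$ and $\phi^+\le 0$ in the limit $s\to\infty$; and the non-crossing hypothesis gives $\phi^-\le\phi^+$, so $\phi^\pm\equiv 0$ and both horospheres are the zero slice. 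The paper instead proves the splitting from scratch: it works with the set $U$ of spacelike intersection points, shows $S^-_\infty$ has support mean curvature $\ge -n/r$ and $S^+_\infty$ support mean curvature $\le n/r$ via comparison with the approximating spheres, applies the support maximum principle of \cite{AGH} to make $U$ a smooth maximal spacelike hypersurface open in both horospheres, uses Riccati comparison to make $U$ totally geodesic with split normal exponential image, and then proves $U$ is geodesically closed and complete, hence a Cauchy surface equal to both horospheres. The trade-off is clear: your route is much shorter but imports the full strength of the Lorentzian splitting theorem, whereas the paper's route is self-contained precisely so that the Lorentzian splitting theorem can be \emph{derived} from Theorem \ref{splitlem} in Section \ref{SecLST}; with your argument that derivation would become circular within the paper's own architecture (though not logically circular, since the splitting theorem has independent proofs in the literature).

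Two steps in your write-up need explicit justification. First, you use that the line $\ell$ corresponds to a vertical fiber $\bbR\times\{p_V\}$ with $\g^+(s)=(s,p_V)$; this is part of the splitting theorem's construction (the cross-section is a Busemann level set met orthogonally by $\ell$) and appears in standard formulations, but it does not follow from the bare statement that $M$ is a metric product, since products contain many non-vertical timelike lines, and your distance computations require verticality. Second, you rely on the product chronology and distance formulas, namely $(t_1,v)\ll(t_2,w)$ iff $t_2-t_1>d_h(v,w)$ and $d\bigl((t_1,v),(t_2,w)\bigr)=\sqrt{(t_2-t_1)^2-d_h(v,w)^2}$ in the causal case, both for the Lipschitz-graph claim and for the limits forcing $\phi^-\ge 0$, $\phi^+\le 0$; these hold because $(V,h)$ is complete and should be recorded. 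With these points made explicit, your identification $S^-_\infty=S^+_\infty=\{0\}\times V$ and the stated splitting are correct.
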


\begin{proof} Recall that, by Lemma \ref{tangential}, any point in the intersection $S^-_\infty \cap S^+_\infty$ is either a spacelike point for both horospheres, or a null point for both horospheres. Let $U \subset S^-_\8 \cap S^+_\8$ be the  subset of spacelike intersection points. Hence, $p \in U$, and both $S^-_\8$ and $S^+_\8$ are acausal near any $x \in U$.   In particular, each future $S^-_\8$-ray $\a$ near such $x$ is timelike.  Parametrizing $\a : [0,\infty) \to M$ with respect to arc length, for any $r > 0$,  the past sphere $S^-_r(\a(r))$ is smooth near $\a(0)$ and lies locally to the past of $S^-_{\infty}$.  Using the timelike convergence condition, by standard comparison techniques (see e.g. \cite[Section 1.6]{Karcher}), $S^-_r(\a(r))$ has mean curvature $\ge -\frac{n}{r}$ at $\a(0)$.   It follows that $S^-_{\infty}$ has mean curvature $\ge 0$ in the support sense near any $x \in U$. Similarly, $S^+_\8$ has support mean curvature $\le 0$ near any such $x$. It then follows from the `support' maximum principle in \cite{AGH} that $U$ is a smooth, maximal spacelike hypersurface, which is open in both $S^-_\8$ and $S^+_\8$. But furthermore, since the normal geodesics from $U$, both to the future and the past, are timelike $U$-rays, which are complete (to the future or past) by assumption, standard Riccati (Raychaudhuri) equation techniques imply that $U$ is in fact totally geodesic, with split normal exponential image $(\mathrm{exp}^\perp(U), g) \approx (\field{R} \times U, -dt^2 + h)$, where $h$ is the induced metric on $U$.

We now extend to a \emph{global} splitting of all of $M$. As a first step, we show that $U$ is `geodesically closed', that is, any geodesic initially tangent to $U$ can never leave $U$. To that end, fix any $p \in U$ and any tangential vector $X \in T_pU$, and let $\s : (-a, b) \to M$ be the unique geodesic with $\s(0) = p$ and $\s'(0) = X$ which is maximally extended in $M$, where $0 < a,b \le \8$. Because $U$ is totally geodesic, $\s$ initially remains in $U$. Fix any $0 < s_0 < b$ with $\s([0, s_0)) \subset U$. Again, since $U$ is totally geodesic, i.e., its second fundamental form $K(X,Y) = g(\D_X N, Y)$ vanishes, the future unit normal field $N$ of $U$ is parallel along $\s|_{[0,s_0)}$. By Lemma \ref{tangential}, there is a unique future $S^-_\infty$-ray $\g_x$ from each $x \in U$, which is timelike. If we give each $\g_x$ a unit speed parameterization, then $\g_x'(0) = N_x$. Whether or not $q = \s(s_0)$ lies in $U$, there is a well-defined limit vector $N_q = \lim_{s \to s_0}N_{\s(s)}$, obtained by parallel transporting $N$ on all of $\s_{[0,s_0]}$, with $N_q$ necessarily future unit timelike. Let $\g_q$ be the future-directed unit speed timelike geodesic with $\g_q'(0) = N_q$, which is necessarily complete. Since $q \in \overline{U} \subset S^-_\8 \cap S^+_\8$, $\g_q$ is a curve from $S^-_\8$. Suppose that $\g_q|_{[0, \8)}$ is not an $S^-_\8$-ray, i.e., that for some $T > 0$, there is a point $z \in S^-_\8$ with $d(z, \g_q(T)) \ge T + 2 \e$, for some $\e > 0$. But then, for some neighborhood $W$ of $\g_q(T)$, we would have $d(z, w) \ge T + \e$, for all $w \in W$, which would contradict the fact that $\g_{\s(s)}$ is an $S^-_\8$-ray for all $s \in [0, s_0)$. Hence, $\g_q|_{[0,\8)}$ is a timelike future $S^-_\8$-ray. Since $q \in S^-_\8 \cap S^+_\8$, we have $q \in U$, (cf. Lemma \ref{tangential}). This shows that $\s$ can never leave $U$, i.e., we have 
$\s : (-a, b) \to U$. 

Now we show that, in fact, $\s$ must be complete. Without loss of generality, we take $\s$ to be unit speed. Suppose to the contrary that $b < \8$, for example. Then the curve $c(s) = (-2s, \s(s))$, $c : [0, b) \to \field{R} \times U \subset M$ is a past-directed timelike geodesic in $M$, and $\s(s) = \mathrm{exp}_{c(s)}(2s\d_t)$. By timelike completeness, $c$ extends to $[0,b]$. Furthermore, the vector field $\d_t$ is parallel in $\mathrm{exp}^\perp(U)$, and hence, by parallel translating along $c$, has a limit at $c(b)$. Hence, $\s(s) = \mathrm{exp}_{c(s)}(2s\d_t)$ has a limit as $s \to b$, i.e., $\s$ extends continuously, and hence as a geodesic to $[0, b]$. But this would contradict the definition of $b$. Thus, in fact, $b = \8$, and similarly $a = \8$, and $\s$ is complete. Since $\s$ was arbitrary, we have shown that $U$ is geodesically complete. But now a standard argument, using the product structure and geodesic completeness of $U$, shows that $J(U) = \mathrm{exp}^\perp(U) \approx \field{R} \times U$, and $H^\pm(U) = \emptyset$. Hence, $U$ is a Cauchy surface for $M$, (and is thus connected). This implies $S^-_\8 = U = S^+_\8$. 
\end{proof}

\smallskip
\noindent 
{\it Remark.}  One may ask, in the context of Theorem \ref{splitlem}, what, if any, rigidity occurs in the case that the common point $p \in S^-_\8 \cap S^+_\8$ is a null point.  In this case, by part (2) of Lemma \ref{tangential}, there is a null line passing through $p$, the future half  of which is contained  in  $S^-_\8$ and the past of which is contained in $S^+_\8$.  If $M$ is null geodesically complete and satisfies the null energy condition, $\ric(X,X) \ge 0$ for all null vectors $X$, then one can apply the results of \cite{GalNMP} to show that the components of $S^-_\8$ and $S^+_\8$ through $p$, respectively, agree along a totally geodesic null hypersurface.

\section{Applications} \label{secapps}

We now explore some applications of the framework developed above. In Section \ref{SecLST}, we show that the basic Lorentzian splitting theorem follows easily from Theorem \ref{splitlem} above. In Section \ref{SecBSC}, we give a new result on the Bartnik splitting conjecture. In Section \ref{Seccausalbdy}, we explore some connections between rigidity and the causal boundary, including some results in the case of positive cosmological constant in Section \ref{Secposcosmo}. First, however, we note that Theorem \ref{splitlem} generalizes the basic `$\Lambda = 0$' splitting result in \cite{horo1}, there labelled Theorem 4.4.

\begin{thm}  \label{horosplit} Let $M$ be a globally hyperbolic, timelike geodesically complete spacetime which satisfies the timelike convergence condition. Suppose that $S^-_\infty$ is a past horosphere which is future bounded, i.e., $S^-_\infty \subset J^-(\S)$ for some Cauchy surface $\S$. If $S^-_\infty$ admits a past $S^-_\infty$-ray, then $S^-_\infty$ is a smooth, spacelike geodesically complete Cauchy surface along which $M$ splits.
\end{thm}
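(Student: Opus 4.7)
The plan is to reduce Theorem \ref{horosplit} to Theorem \ref{splitlem} by pairing $S^-_\infty$ with a companion future ray horosphere built from the given past $S^-_\infty$-ray, then verifying the non-crossing hypothesis and that the common basepoint is a spacelike intersection point. Since $S^-_\infty \subset J^-(\S)$, Theorem \ref{horosthm} gives that $S^-_\infty$ is a closed, acausal $C^0$ hypersurface. Any past $S^-_\infty$-ray $\b$ from some $p \in S^-_\infty$ must then be timelike: otherwise Proposition \ref{sray} would place $\b$ entirely inside $S^-_\infty$, producing distinct causally related points in an acausal set. After unit-speed reparametrization, timelike geodesic completeness makes $\b : [0,\infty) \to M$ a past-complete past-directed timelike ray, so the time-dual of Proposition \ref{rayhoroprop} produces a future ray horosphere $S^+_\infty(\b)$ containing $p = \b(0)$.

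The central step will be to verify the non-crossing condition $I^+(S^+_\infty(\b)) \cap I^-(S^-_\infty) = \emptyset$ required by Theorem \ref{splitlem}. I would argue by contradiction using the reverse triangle inequality. Suppose some $q$ lay in the intersection, so $a \ll q \ll b$ for some $a \in S^+_\infty(\b)$ and $b \in S^-_\infty$. By Lemma \ref{Hausdorffseq}, pick $a_k \in S^+_k(\b(k))$ with $a_k \to a$, so $\b(k) \le a_k$ and $d(\b(k), a_k) = k$. Openness of $\ll$ and continuity of $d$ in the globally hyperbolic setting yield $a_k \ll b$ and $d(a_k, b) \to d(a, b) > 0$, whence the reverse triangle inequality along $\b(k) \le a_k \le b$ gives
\beq
d(\b(k), b) \ge d(\b(k), a_k) + d(a_k, b) \ge k + \tfrac{1}{2}\, d(a, b) \nonumber
\eeq
for all sufficiently large $k$. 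On the other hand, because $\b$ is a past $S^-_\infty$-ray, $d(\b(k), S^-_\infty) = k$, and in particular $d(\b(k), b) \le k$ since $b \in S^-_\infty$. This contradicts the previous inequality and establishes the non-crossing condition.

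Finally, since $S^-_\infty$ is acausal, Lemma \ref{acausalhoro} rules out any future null $S^-_\infty$-ray from $p$, so $p$ is a spacelike point of $S^-_\infty$. With the non-crossing condition in hand and $p \in S^-_\infty \cap S^+_\infty(\b)$ a spacelike intersection point, Theorem \ref{splitlem} applies and yields $S^-_\infty = S^+_\infty(\b) =: S_\infty$ as a smooth, geodesically complete spacelike Cauchy surface along which $M$ splits as $(\field{R} \times S_\infty, -dt^2 + h)$. I expect the main obstacle to be the chaining step above, which must simultaneously control the moving sequence $(\b(k), a_k)$ and the fixed limit point $b$, while playing off two opposing bounds on $d(\b(k), b)$: a lower bound coming from the future-sphere defining condition on $a_k$ and an upper bound coming from the past $S^-_\infty$-ray condition on $\b$.
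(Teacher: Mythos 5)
Your proposal is correct and follows essentially the same route as the paper: the paper's proof is precisely the reduction to Theorem \ref{splitlem} by noting the past $S^-_\infty$-ray is timelike (via acausality from future boundedness), forming its future ray horosphere, and invoking the noncrossing condition with $\g(0)$ a spacelike intersection point. Your reverse-triangle-inequality verification of $I^+(S^+_\infty(\b)) \cap I^-(S^-_\infty) = \emptyset$ is a sound filling-in of a step the paper merely asserts.
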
 

\begin{proof} To get this from Theorem \ref{splitlem}, let $\g$ be the past $S^-_\infty$-ray in the hypotheses. Since $S^-_\infty$ is future bounded, $S^-_\infty$ is acausal, and hence $\g$ must be timelike. Constructing the associated future ray horosphere $S^+_\infty(\g)$, we have $\g(0) \in S^-_\infty \cap S^+_\infty(\g)$, with $I^-(S^-_\infty) \cap I^+(S^+_\infty(\g)) = \emptyset$, and $\g(0)$ a spacelike point. 
\end{proof}

\subsection{The Lorentzian Splitting Theorem} \label{SecLST}

We now briefly note that Theorem \ref{splitlem} gives the basic Lorentzian splitting theorem (stated below) as an easy consequence. Let $\a : (-\8, \8) \to M$ be a complete future-directed unit speed timelike geodesic line. By the `\emph{future half}' of $\a$ we mean the future ray $\a^+ := \a|_{[0,\8)}$. By the `\emph{past half}' of $\a$ we mean the past ray $\a^- := - \a|_{(-\8, 0]}$. Denote the ray horospheres associated to each half of $\a$ by $S^-_\a := S^-_\8(\a^+)$ and $S^+_\a := S^+_\8(\a^-)$.

\begin{lem} \label{rayhoropair} The past and future pair of ray horospheres, $S^-_\a$ and $S^+_\a$, associated to each half of a complete timelike line $\a$ satisfy $I^+(S^+_\a) \cap I^-(S^-_\a) = \emptyset$.
\end{lem}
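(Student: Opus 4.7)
The plan is to argue by contradiction: suppose there exists $q \in I^+(S^+_\a) \cap I^-(S^-_\a)$ and derive a violation of the maximality of $\a$. The intuition is that both timelike relations force $q$ to be strictly ``further'' than the appropriate radius in each defining sphere sequence, and then concatenating through $q$ will yield a causal chain from $\a(-j)$ to $\a(k)$ whose length exceeds $j+k$.

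First I would pin down the past and future sets associated with $S^-_\a$ and $S^+_\a$ concretely. Because $\a$ is a unit-speed maximal timelike line, a short reverse-triangle estimate shows that the sequence of pasts $I^-(S^-_k(\a(k)))$ is increasing in $k$: if $d(z, \a(k)) > k$, then $d(z, \a(m)) \geq d(z, \a(k)) + d(\a(k), \a(m)) > m$ for $m > k$. Corollary \ref{oldachlimits}(1) then identifies the past associated with $S^-_\a$ as $P^- := \bigcup_k I^-(S^-_k(\a(k)))$, and Proposition \ref{ABs} gives $I^-(S^-_\a) \subset P^-$. Time-dually, $I^+(S^+_\a) \subset F^+ := \bigcup_j I^+(S^+_j(\a(-j)))$.

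Thus there exist indices $k, j$ and witnesses $x_k \in S^-_k(\a(k))$, $y_j \in S^+_j(\a(-j))$ with $q \ll x_k$ and $y_j \ll q$. Two applications of the reverse triangle inequality along $q \ll x_k \leq \a(k)$ and $\a(-j) \leq y_j \ll q$ yield the strict bounds $d(q, \a(k)) > k$ and $d(\a(-j), q) > j$, the strictness coming from the positive Lorentzian length of the timelike legs $q x_k$ and $y_j q$. A third application along the timelike chain $\a(-j) \ll q \ll \a(k)$ then gives
$$ d(\a(-j), \a(k)) \;\geq\; d(\a(-j), q) + d(q, \a(k)) \;>\; j + k, $$
which contradicts $d(\a(-j), \a(k)) = k + j$, valid since $\a$ is a unit-speed maximal line. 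I do not foresee a serious obstacle here; the only step requiring a moment of care is the monotonicity check that lets Corollary \ref{oldachlimits} pin down $P^-$ and $F^+$ explicitly, after which the contradiction drops out of elementary reverse triangle estimates.
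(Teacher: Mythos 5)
Your argument is correct, and it reaches the paper's contradiction (violating maximality of $\a$ via the reverse triangle inequality against the sphere radii) by a somewhat different mechanism. The paper stays at the level of the Hausdorff limit itself: assuming $y \ll x$ with $x \in S^-_\a$, $y \in S^+_\a$, it uses Lemma \ref{Hausdorffseq} to approximate $x$ and $y$ by sphere points $x_{k_0} \in S^-_{k_0}(\a(k_0))$, $y_{k_0} \in S^+_{k_0}(\a(-k_0))$, transferring the timelike relation via a pair of open neighborhoods $U \ni x$, $V \ni y$ with $V \ll U$; this approximation step uses nothing about monotonicity. You instead exploit the monotone structure specific to ray horospheres: increasing pasts plus Corollary \ref{oldachlimits} and Proposition \ref{ABs} identify $I^-(S^-_\a) \subset \bigcup_k I^-(S^-_k(\a(k)))$ and time-dually, so a point $q$ of the alleged intersection lands in some $I^-(S^-_k(\a(k))) \cap I^+(S^+_j(\a(-j)))$ outright, giving witnesses $y_j \ll q \ll x_k$ (with possibly different indices, which your estimate handles fine). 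What each buys: the paper's route is the more robust one, since it would apply verbatim to horospheres built from non-monotone sphere sequences, whereas yours leans on the identification of the limiting past set, which is special to the monotone (here, ray horosphere) case; on the other hand, your route makes the past/future set structure explicit and avoids the neighborhood argument. One small point: your monotonicity check only shows that the condition $d(z,\a(k)) > k$ propagates to $d(z,\a(m)) > m$; to conclude $I^-(S^-_k(\a(k))) \subset I^-(S^-_m(\a(m)))$ you also need the standard converse step that $d(z,\a(m)) > m$ puts $z$ in $I^-(S^-_m(\a(m)))$ (take a maximal timelike geodesic from $z$ to $\a(m)$ and locate the point at distance $m$ from $\a(m)$ on it), or simply cite the increasing-pasts statement already recorded in Section \ref{secspheresandhoros} and \cite{horo1}.
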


\begin{proof} Suppose otherwise that there are points $x \in S^-_\a$ and $y \in S^+_\a$ with $y \ll x$. Let $U$ be a neighborhood of $x$ and $V$ a neighborhood of $y$ such that, for all $u \in U$ and $v \in V$, we have $v \ll u$. Recall that $x$ is the limit of a sequence $x_k \in S^-_k(\a(k))$ and $y$ is the limit of a sequence $y_k \in S^+_k(\a(-k))$. Then, for $k_0$ a large enough integer so that both $x_{k_0} \in U$ and $y_{k_0} \in V$, we have $y_{k_0} \ll x_{k_0}$. But this leads to a contradiction of the maximality of $\a$.
\end{proof}

Since $\a^+$ is a future timelike $S^-_\a$-ray, (cf. Proposition \ref{rayhoroprop}), it then follows from Lemma \ref{tangential} that $\a(0)$ is a spacelike point for $S^-_\a$ and $S^+_\a$. Theorem \ref{splitlem} thus gives the following version of the Lorentzian splitting theorem:

\begin{thm} [Lorentzian Splitting Theorem] Let $M$ be a globally hyperbolic, timelike geodesically complete spacetime, satisfying $\mathrm{Ric}(X,X) \ge 0$ for all timelike $X$. If $M$ admits a timelike line $\a$, then $M$ splits. In particular, letting $S^-_\a$ and $S^+_\a$ be the ray horospheres associated to each half of $\a$, then $S^-_\a = S^+_\a =: S_\a$ is a smooth, spacelike, geodesically complete Cauchy surface for $M$ and $(M,g) \approx (\field{R} \times S_\a, -dt^2 + h)$, where $h$ is the induced metric on $S_\a$. 
\end{thm}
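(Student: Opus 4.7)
The plan is to obtain this as an immediate application of the general horosphere splitting Theorem \ref{splitlem}, with the pair of horospheres being the ray horospheres $S^-_\a$ and $S^+_\a$ associated to the two halves of the given line. All the substantive work has already been packaged into Theorem \ref{splitlem}, Lemma \ref{tangential}, Lemma \ref{rayhoropair}, and Proposition \ref{rayhoroprop}; the proof will consist of verifying that the hypotheses of Theorem \ref{splitlem} hold at the base point $\a(0)$.

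First I would form the two ray horospheres $S^-_\a = S^-_\8(\a^+)$ and $S^+_\a = S^+_\8(\a^-)$, which are well-defined nonempty achronal $C^0$ hypersurfaces by Proposition \ref{rayhoroprop} and its time dual. That proposition also gives $\a(0) = \a^+(0) \in S^-_\a$ and, time-dually, $\a(0) = \a^-(0) \in S^+_\a$, so $\a(0)$ is a common point of the two horospheres. Moreover, $\a^+$ is itself a future $S^-_\a$-ray from $\a(0)$, which is timelike; time-dually, $\a^-$ is a timelike past $S^+_\a$-ray from $\a(0)$.

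Next I would invoke Lemma \ref{rayhoropair} to secure the noncrossing condition $I^+(S^+_\a) \cap I^-(S^-_\a) = \emptyset$; this is the one spot where maximality of the line enters in an essential way, via the argument that approximating sequences in the two horospheres would otherwise violate maximality of $\a$. With the noncrossing condition in hand, Lemma \ref{tangential} applies at $p = \a(0)$: it asserts that at such a common point there is a \emph{unique} future $S^-_\a$-ray and a \emph{unique} past $S^+_\a$-ray from $p$, which must be either both timelike or both null. Since we have already exhibited timelike such rays in $\a^\pm$, uniqueness forces $\a(0)$ to be a spacelike point of both horospheres.

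All hypotheses of Theorem \ref{splitlem} are now met: $M$ is globally hyperbolic, timelike geodesically complete, satisfies the timelike convergence condition, and we have a past and a future horosphere meeting at a common spacelike point with the noncrossing condition. Theorem \ref{splitlem} then yields $S^-_\a = S^+_\a =: S_\a$ as a smooth, geodesically complete spacelike Cauchy surface, and the splitting $(M,g) \approx (\mathbb{R} \times S_\a, -dt^2 + h)$ with $h$ the induced Riemannian metric on $S_\a$. There is really no obstacle in this step; the serious analytic content (mean curvature comparison, the maximum principle, the Riccati argument for total geodesy, and the propagation of the splitting to all of $M$) is already absorbed into Theorem \ref{splitlem}, while the maximality-based fact that the two horospheres cannot ``cross'' is absorbed into Lemma \ref{rayhoropair}. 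The only minor point to double-check is the trivial observation that $\a(0)$ indeed lies in the intersection with a \emph{timelike} common ray, guaranteeing the spacelike-point alternative of Lemma \ref{tangential}.
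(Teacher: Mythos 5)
Your proposal is correct and follows essentially the same route as the paper: construct the ray horospheres $S^-_\a$, $S^+_\a$ from the two halves of the line, use Lemma \ref{rayhoropair} for the noncrossing condition, deduce from Lemma \ref{tangential} (together with the fact that $\a^+$ is a timelike future $S^-_\a$-ray from Proposition \ref{rayhoroprop}) that $\a(0)$ is a common spacelike point, and then apply Theorem \ref{splitlem}.
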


\subsection{The Bartnik Splitting Conjecture} \label{SecBSC}

The problem of establishing a Lorentzian splitting theorem, posed by Yau in the early 80's, was in fact originally motivated by the question of rigidity in the classical singularity theorems of Hawking and Penrose. The ultimate resolution of the splitting theorem did not, however, settle this rigidity question. In \cite{Bart88}, Bartnik realized this question concretely as follows.

\begin{conj} [Bartnik Splitting Conjecture, `88] \label{Bartnikconj} Suppose that $M$ is a globally hyperbolic spacetime, with compact Cauchy surfaces, which satisfies the timelike convergence condition, $\textrm{Ric}(X,X) \ge 0$ for all timelike $X$. If $M$ is timelike geodesically complete, then $M$ splits as $(M,g) \approx (\field{R} \times \S, -dt^2 + h)$, where $\S$ is a smooth spacelike Cauchy hypersurface, with induced metric $h$.
\end{conj}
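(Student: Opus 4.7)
The plan is to apply the horosphere splitting result Theorem \ref{splitlem} to the past and future Cauchy horospheres built from the given compact Cauchy surface $\Sigma$. By Proposition \ref{Cauchyhoroprop} and its time dual, timelike geodesic completeness in both directions together with compactness of $\Sigma$ produce a nonempty past Cauchy horosphere $S^-_\infty(\Sigma) \subset J^-(\Sigma)$ and a nonempty future Cauchy horosphere $S^+_\infty(\Sigma) \subset J^+(\Sigma)$, each an acausal $C^0$ hypersurface carrying a timelike future (resp.\ past) radial ray from every point. To reach the conclusion $(M,g) \approx (\field{R} \times \Sigma, -dt^2 + h)$ via Theorem \ref{splitlem}, I need to exhibit (i) a common point $p \in S^-_\infty(\Sigma) \cap S^+_\infty(\Sigma)$, (ii) the non-crossing condition $I^+(S^+_\infty(\Sigma)) \cap I^-(S^-_\infty(\Sigma)) = \emptyset$, and (iii) that some such $p$ is a spacelike point. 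With (i)--(iii) in hand, Theorem \ref{splitlem} immediately forces $S^-_\infty(\Sigma) = S^+_\infty(\Sigma)$ to be a smooth, geodesically complete spacelike Cauchy surface along which $M$ splits, giving the conclusion unconditionally.

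For step (i), I would use compactness of $\Sigma$ to produce a basepoint at which a maximal future $\Sigma$-ray and a maximal past $\Sigma$-ray originate; by Proposition \ref{Cauchyhoroprop} such a basepoint lies simultaneously in $S^-_\infty(\Sigma)$ and $S^+_\infty(\Sigma)$. Existence of future and past $\Sigma$-rays individually is Proposition \ref{sray}; coincidence of their endpoints on $\Sigma$ should follow from upper semicontinuity of the $\Sigma$-ray length functional and a compactness argument on the unit normal bundle of $\Sigma$. Step (iii) would be handled by combining Lemma \ref{tangential} with the null splitting theorem of \cite{GalNMP}: a purely null intersection would produce a totally geodesic null hypersurface contained in both horospheres, which is incompatible with the fact that $S^-_\infty(\Sigma)$ and $S^+_\infty(\Sigma)$ lie on strictly opposite sides of $\Sigma$ unless the spacetime is already split in the direction of the null line, which can then be bootstrapped to a timelike splitting by a standard second-variation deformation.

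Step (ii) is the main obstacle, and it mirrors the key step in Lemma \ref{rayhoropair}. If there were points $y \in S^+_\infty(\Sigma)$ and $x \in S^-_\infty(\Sigma)$ with $y \ll x$, one would pass via Lemma \ref{Hausdorffseq} to approximating sequences $x_k \in S^-_k(S^+_k(\Sigma))$ and $y_k \in S^+_k(S^-_k(\Sigma))$, extract radial maximal segments of length $k$ from $y_k$ to a center $z^-_k \in S^-_k(\Sigma)$ and from $x_k$ to a center $z^+_k \in S^+_k(\Sigma)$, concatenate with a timelike curve from $y_k$ to $x_k$ of uniformly positive length, and try to produce a connecting causal curve of length strictly exceeding $2k$, contradicting the fact that the maximal $\Sigma$-based distance between $z^-_k$ and $z^+_k$ is exactly $2k$. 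The difficulty, absent in the single-line case of Lemma \ref{rayhoropair}, is that the two centers $z^-_k$ and $z^+_k$ need not lie along a common maximal $\Sigma$-ray, so the required length comparison is not direct; reconciling them demands a delicate limiting argument exploiting compactness of $\Sigma$, timelike completeness on both sides, and some control on how far the sequences $z^\pm_k$ can diverge along $\Sigma$. It is precisely this step that has kept the conjecture open since 1988, and it is why the paper of necessity settles (in Section \ref{SecBSC}) for a conditional version rather than the full statement. I would not expect the horosphere machinery alone to close this gap without a genuinely new global input --- perhaps a monotonicity property of the mean curvature along the $\Sigma$-ray foliation that persists under the Hausdorff closed limit, or a topological obstruction using the compactness of the Cauchy slices --- so my honest expectation is that this plan, while giving the cleanest route to the conclusion, cannot be executed unconditionally with the tools currently in hand.
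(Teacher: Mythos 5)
You are right that this statement is a conjecture and that your plan cannot be executed unconditionally --- the paper does not prove it either; it only obtains conditional versions (the horo-to-horo condition of Theorem \ref{horotohorothm}, or a spacelike causal boundary in Theorem \ref{Bartspacelikebdythm}). However, you have located the obstruction in the wrong place. Your step (ii) is not the difficulty: it is automatic. Since $S^-_\infty(\Sigma) \subset J^-(\Sigma)$ and $S^+_\infty(\Sigma) \subset J^+(\Sigma)$ (Proposition \ref{Cauchyhoroprop} and its time-dual), any point of $I^+(S^+_\infty(\Sigma)) \cap I^-(S^-_\infty(\Sigma))$ would yield $\sigma_1 \le y \ll z \le \sigma_2$ with $\sigma_1, \sigma_2 \in \Sigma$, violating the achronality of the Cauchy surface; so the non-crossing condition holds for free, and no analogue of the limiting argument in Lemma \ref{rayhoropair} is needed. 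Likewise your step (iii) is automatic: Cauchy horospheres are acausal with timelike radial rays, so they consist entirely of spacelike points, exactly as the paper observes at the start of the proof of Theorem \ref{horotohorothm}.

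The genuine gap is your step (i), which you dismiss too quickly. Compactness of $\Sigma$ gives a future $\Sigma$-ray and a past $\Sigma$-ray (Proposition \ref{sray}), but nothing forces them to share a basepoint, and no ``upper semicontinuity of the $\Sigma$-ray length functional'' argument is known to produce one. Note that a common basepoint $p$ would give $\eta(\epsilon) \ll p \ll \gamma(\epsilon')$, hence $I^+(\eta) \cap I^-(\gamma) \ne \emptyset$, i.e.\ the ray-to-ray condition of Theorem \ref{raytoraythm}, which already yields a timelike line and the splitting; so if your compactness argument for (i) worked, it would by itself settle the conjecture. More generally, the hard open point is to rule out that the two Cauchy horospheres are causally separated, i.e.\ that $J^+(S^+_\infty(\Sigma)) \cap J^-(S^-_\infty(\Sigma)) = \emptyset$, and this is precisely what the paper instead \emph{assumes} (for two possibly different Cauchy surfaces) in Theorem \ref{horotohorothm}, then arranges touching by shifting the past horosphere a distance $d$ and feeding the resulting pair into Theorem \ref{splitlem}. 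So your overall architecture matches the paper's conditional argument, and your final assessment that the unconditional statement is out of reach is correct, but the analysis of which ingredient fails should be inverted: (ii) and (iii) are free, (i) is the conjecture.
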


In physical terms, the conjecture roughly states that in a spatially closed, relativistic spacetime (with $\Lambda = 0$), any dynamics whatsoever will always lead to singularities. The conjecture is illustrated mathematically by the warped product case, $g = -dt^2 + f^2(t)h$, for which the timelike convergence condition implies $f'' \le 0$. 

The Bartnik conjecture has been shown to hold under various auxiliary conditions. (See for example, \cite{Bart88}, \cite{EschGal}, \cite{GalBanach}.) To our knowledge, the weakest of these include the `ray-to-ray' condition in \cite{GalBanach}, and the `max-min' condition in \cite{horo1}. While a direct comparison of these two conditions may not be obvious, we will give a condition below which is weaker than both, and under which Conjecture \ref{Bartnikconj} still holds.

In \cite{GalBanach}, the first author established the following:

\begin{thm} [\cite{GalBanach}] \label{raytoraythm} Let $M$ be a (future or past) timelike geodesically complete spacetime with compact Cauchy surface $S$. Suppose that there is a future $S$-ray $\g$ and a past $S$-ray $\eta$ such that $I^+(\eta) \cap I^-(\g) \ne \emptyset$. Then $M$ admits a timelike line. 
\end{thm}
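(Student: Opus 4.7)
The plan is to manufacture a timelike line as a limit of maximal timelike geodesic segments joining the far tails of $\eta$ and $\g$, forced by the compactness of $S$ to accumulate across it. Fix a point $x \in I^+(\eta) \cap I^-(\g)$, say with $\eta(s_0) \ll x \ll \g(t_0)$. Since $\eta$ is past-directed and $\g$ future-directed, $\eta(k) \ll x \ll \g(k)$ for all $k$ sufficiently large. By Proposition \ref{GHdist}, there is a maximal timelike geodesic $\sigma_k$ from $\eta(k)$ to $\g(k)$; since $S$ is a Cauchy surface, $\sigma_k$ meets $S$ at a unique point $p_k$. Reparametrize $\sigma_k$ by Lorentzian arc length with $\sigma_k(0) = p_k$, giving $\sigma_k : [-a_k, b_k] \to M$, where $a_k := d(\eta(k), p_k)$ and $b_k := d(p_k, \g(k))$.

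The key length estimate is that $a_k, b_k \to \infty$. The upper bounds come from the $S$-ray hypotheses: since $p_k \in S$, we have $a_k \le d(\eta(k), S) = k$ and $b_k \le d(S, \g(k)) = k$. For the lower bounds, the reverse triangle inequality at $x$ gives
\beq
a_k + b_k \,=\, d(\eta(k), \g(k)) \,\ge\, d(\eta(k), x) + d(x, \g(k)), \nonumber
\eeq
and the maximality of $\eta$ and $\g$ yields $d(\eta(k), x) \ge (k - s_0) + d(\eta(s_0), x)$ and $d(x, \g(k)) \ge d(x, \g(t_0)) + (k - t_0)$. Combining with the upper bounds forces $a_k, b_k \ge k - C$ for a constant $C$, so both diverge.

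Compactness of $S$ next yields a subsequence with $p_k \to p_\infty \in S$, and I apply a standard limit curve argument (as in \cite{BEE}): parametrizing $\sigma_k$ by an auxiliary complete Riemannian arc length gives uniformly bounded tangent vectors, and a subsequence converges locally uniformly to an inextendible causal geodesic $\sigma_\infty$ through $p_\infty$. Continuity of the Lorentzian distance function, together with the maximality of each $\sigma_k$, shows that $\sigma_\infty$ realizes the Lorentzian distance between any two of its points. To see that $\sigma_\infty$ is genuinely timelike and extends without bound in both directions, observe that $d(p_k, \sigma_k(n)) = n = d(\sigma_k(-n), p_k)$ whenever these points are defined, so continuity of $d$ transfers these equalities to the limit to give $d(p_\infty, \sigma_\infty(\pm n)) = n > 0$ for every integer $n \ge 1$. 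Thus $\sigma_\infty$ cannot be null (which would force these distances to vanish) and cannot have a bounded parameter range on either side, so it is a timelike line.

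The main obstacle is precisely this last step, i.e.\ preventing the limiting curve from degenerating to a null geodesic or a one-sided ray. This is exactly what the identities $d(p_\infty, \sigma_\infty(\pm n)) = n$ accomplish, but they must be extracted by confining the sequence $\sigma_k(\pm n)$ to a compact subset of $M$ as $k$ varies. This is where the (past or future) timelike geodesic completeness enters: it ensures the relevant past or future distance spheres around the compact Cauchy surface $S$ remain compact Cauchy surfaces, while the $S$-ray hypotheses together with the length balance $a_k + b_k \ge 2k - C$ keep the points $\sigma_k(\pm n)$ in their interior.
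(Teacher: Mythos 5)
Your construction is, in substance, the argument of the cited source: the paper itself does not reprove this theorem but refers to Theorem 4.4 of \cite{GalBanach}, and the proof there runs exactly along your lines --- maximal segments $\s_k$ from $\eta(k)$ to $\g(k)$, crossing the compact Cauchy surface at $p_k$, with the key length balance coming from the $S$-ray property ($a_k, b_k \le k$) and the reverse triangle inequality through a point of $I^+(\eta)\cap I^-(\g)$ ($a_k + b_k \ge 2k - C$), followed by a limit-curve argument in which one-sided completeness enters through the compactness of the distance spheres/slabs over $S$ (this is the content of Lemma 4.2 in \cite{GalBanach}, i.e.\ Lemma 3.15 in \cite{horo1}). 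Your identification of the degeneration to a null limit as the main obstacle, and of the slab confinement $n \le d(S,\s_k(n)) \le n + C$ (which follows from $d(S,\g(k)) = k$ and $d(\s_k(n),\g(k)) = b_k - n \ge k - C - n$) as the mechanism that defeats it, is exactly right.

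The one genuine imprecision is your two-sided use of this mechanism. The theorem assumes completeness only to the future \emph{or} only to the past, and the paper's remark after the statement stresses that the compact-sphere lemma is needed in only one direction. If, say, only future timelike completeness holds, then only the future spheres $S^+_r(S)$ are known to be compact Cauchy surfaces, so only the points $\s_k(+n)$ are confined to a fixed compact slab; your identities $d(p_\infty,\s_\infty(\pm n)) = n$ are then available only for the $+$ sign, and your stated route to ``timelike and unbounded on either side'' does not go through as written on the uncontrolled side. The patch is short and worth making explicit: (i) inextendibility of the limit curve in both directions needs no completeness at all, since $a_k, b_k \to \infty$ while causal curves confined to a compact set of a globally hyperbolic spacetime have bounded Lorentzian length (so the Riemannian parameter ranges of both halves must diverge); and (ii) once the limit is known to be an inextendible \emph{maximal} causal curve with $d(p_\infty, z) > 0$ for a single point $z$ on its future half, every segment of it containing $p_\infty$ and $z$ is a maximal causal curve of positive length, hence a timelike geodesic, and therefore the entire line is timelike --- no control of the past half is required. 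With this adjustment your argument proves the theorem as stated, rather than only the version with two-sided completeness.
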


The result above appears as Theorem 4.4 in \cite{GalBanach}, with the assumption of full timelike completeness. We note, however, that timelike completeness in either direction suffices, as the proof only involves applying Lemma 4.2 in \cite{GalBanach} in one direction. (See also Lemma 3.15 in \cite{horo1}.)

In fact, it is straightforward to see that the `ray-to-ray' condition in Theorem \ref{raytoraythm} can be weakened so that the future and past rays may be from different Cauchy surfaces. That is, the construction of the timelike line given in \cite{GalBanach} still goes through if we only assume that there are (compact) Cauchy surfaces $S$ and $\S$, and a future $S$-ray $\g$ and past $\S$-ray $\eta$ satisfying $I^+(\eta) \cap I^-(\g) \ne \emptyset$. In particular, this generalized ray-to-ray condition is sufficient to give the splitting in Conjecture \ref{Bartnikconj}.

A different condition was explored in \cite{horo1}. Suppose that $M$ is  future timelike geodesically complete. Let $S$ be a compact Cauchy surface and set $S_k := \{x \in M : d(S, x) = k\}$, that is, $S_k = S^+_k(S)$. Then each $S_k$ is itself a compact Cauchy surface. Set $M_k := \mathrm{max} \{ d(x, S_k) : x \in S\}$, and $m_k := \mathrm{min} \{ d(x, S_k) : x \in S\}$. We note that $M_k = k$, but we may have $m_k < k$. Then, as in \cite{horo1}, we say the \emph{`max-min' condition} holds on $S$ if, for some positive constant $C > 0$, we have $M_k - m_k \le C$, for all $k$. The basic practical implication of this condition is that if $S$ satisfies the max-min condition, and if $M$ is, in addition, past timelike complete, then the past Cauchy horosphere $S^-_\infty(S)$ is compact. Compactness of any horosphere is of consequence in a variety of ways, especially in the context of splitting. In particular, if $S^-_\infty(S)$ is compact, it follows that $S^-_\infty(S)$ is a Cauchy surface. Hence, $S^-_\infty(S)$ is `future-bounded' by itself, and thus under the hypotheses of the Bartnik conjecture, Theorem \ref{horosplit} applies to split $M$. This result, which we now state formally, appears as Theorem 4.9 in \cite{horo1}.

\begin{thm} [\cite{horo1}] \label{maxminthm} Let $M$ be a timelike geodesically complete spacetime which satisfies the timelike convergence condition. It $S$ is a compact Cauchy surface which satisfies the max-min condition, then its past Cauchy horosphere $S^-_\infty(S)$ is a smooth compact spacelike Cauchy surface, along which $M$ splits.
\end{thm}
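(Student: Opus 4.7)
The plan is to reduce to Theorem \ref{horosplit} applied to the past Cauchy horosphere $S^-_\infty(S)$. If I can show $S^-_\infty(S)$ is itself a compact Cauchy surface, then it is trivially future-bounded by itself (any Cauchy surface lies in its own causal past), and Proposition \ref{sray} immediately supplies a past $S^-_\infty(S)$-ray. Theorem \ref{horosplit} then yields the desired splitting. The heart of the proof is therefore establishing compactness of $S^-_\infty(S)$, which is precisely where the max-min condition enters.

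For compactness, I would show that the prehorospheres $S^-_k(S_k)$ lie in a fixed compact slab of past thickness $C$ below $S$, forcing the same containment for their Hausdorff closed limit. By Proposition \ref{Cauchyhoroprop}, $S^-_\infty(S) \subset J^-(S)$, so for any limit point it suffices to work with sequences $x_k \in S^-_k(S_k) \cap J^-(S)$. For any $s' \in S$ with $x_k \le s'$, past causal completeness of the Cauchy surface $S_k$ gives $z \in S_k$ realizing $d(s', z) = d(s', S_k) \ge m_k \ge k - C$. The reverse triangle inequality then gives
\beq
k \;=\; d(x_k, S_k) \;\ge\; d(x_k, z) \;\ge\; d(x_k, s') + d(s', z) \;\ge\; d(x_k, s') + (k - C), \nonumber
\eeq
so $d(x_k, s') \le C$. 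Taking the supremum over such $s'$ shows $d(x_k, S) \le C$. Passing to the limit via continuity (Proposition \ref{GHdistset}), every point of $S^-_\infty(S)$ lies in the set $\{y \in J^-(S) : d(y, S) \le C\}$, which is the compact slab bounded above by $S$ and below by the compact Cauchy surface $S^-_C(S)$ (the latter being compact by past timelike completeness, by the time-dual of the discussion preceding Proposition \ref{Cauchyhoroprop}). As a closed subset of a compact set, $S^-_\infty(S)$ is compact.

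By Proposition \ref{Cauchyhoroprop} together with the results of Section \ref{secachlimits}, $S^-_\infty(S)$ is a closed acausal $C^0$ hypersurface. In the globally hyperbolic setting, a nonempty compact edgeless achronal subset is necessarily a Cauchy surface: being a closed topological hypersurface, no inextendible causal curve can avoid it. Consequently $S^-_\infty(S) \subset J^-(S^-_\infty(S))$ trivially, so the horosphere is future-bounded, and Proposition \ref{sray} provides a past $S^-_\infty(S)$-ray on the compact Cauchy surface $S^-_\infty(S)$. Theorem \ref{horosplit} now applies to conclude that $M$ splits smoothly along $S^-_\infty(S)$ as $(\field{R} \times S^-_\infty(S), -dt^2 + h)$.

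The main obstacle is the compactness argument: the reverse-triangle calculation that converts the max-min hypothesis $M_k - m_k \le C$ into a uniform bound $d(\cdot, S) \le C$ on all points of the prehorospheres. Once this geometric bound is in hand, the rest of the proof is essentially formal, assembling Proposition \ref{Cauchyhoroprop}, Proposition \ref{sray}, and Theorem \ref{horosplit}.
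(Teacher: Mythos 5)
Your proposal is correct and follows essentially the same route as the paper: use the max-min condition to show $S^-_\infty(S)$ is compact, conclude it is then a Cauchy surface and hence future-bounded by itself, obtain a past $S^-_\infty(S)$-ray from Proposition \ref{sray}, and invoke Theorem \ref{horosplit} to split $M$. The only difference is that you write out the reverse-triangle-inequality estimate giving $d(\cdot,S)\le C$ on the prehorospheres, a step the paper simply cites from \cite{horo1}.
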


Hence, the Bartnik splitting conjecture holds under the additional assumption of either the (generalized) `ray-to-ray' condition, or the `max-min' condition. We now consider a kind of `horo-to-horo' condition, which, in the context of the conjecture, is implied by either of these, but still sufficient to give the splitting.

\begin{lem} \label{horotohorolem} Suppose that $M$ is timelike geodesically complete, with compact Cauchy surfaces. If either the (generalized) ray-to-ray condition, or the max-min condition holds, then there are two Cauchy surfaces $S$ and $\S$ such that $J^+(S^+_\8(\S)) \cap J^-(S^-_\8(S)) \ne \emptyset$.
\end{lem}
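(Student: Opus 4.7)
My plan is to treat the two cases separately, producing in each case an explicit point of $J^+(S^+_\infty(\Sigma)) \cap J^-(S^-_\infty(S))$. In each case the trick is to choose the Cauchy surfaces so that a point already known to lie on one horosphere also lies causally above (or below) the other.

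For the ray-to-ray case, I would first pick $p \in I^+(\eta) \cap I^-(\gamma)$, so that $\eta(s_0) \ll p \ll \gamma(t_0)$ for some $s_0, t_0 \ge 0$, and then shift both rays forward: replace the original Cauchy surfaces $S_0, \Sigma_0$ (the ones supplied by the ray-to-ray hypothesis) by $S := S^+_{t_0}(S_0)$ and $\Sigma := S^-_{s_0}(\Sigma_0)$, which are compact Cauchy surfaces by the timelike completeness hypothesis. The main step is to check that the shifted curves $\tilde\gamma(t) := \gamma(t + t_0)$ and $\tilde\eta(t) := \eta(t + s_0)$ are, respectively, a future $S$-ray and a past $\Sigma$-ray. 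This reduces to a direct reverse-triangle computation: for any $\tilde s \in S$ with $\tilde s \le \tilde\gamma(t)$ and any $s' \in S_0$ realizing $d(s', \tilde s) = t_0$, one has $t + t_0 = d(S_0, \tilde\gamma(t)) \ge d(s', \tilde\gamma(t)) \ge d(s', \tilde s) + d(\tilde s, \tilde\gamma(t)) = t_0 + d(\tilde s, \tilde\gamma(t))$, forcing $d(\tilde s, \tilde\gamma(t)) \le t$, with equality at $\tilde s = \gamma(t_0)$. Having verified this, Proposition \ref{Cauchyhoroprop} gives $\gamma(t_0) = \tilde\gamma(0) \in S^-_\infty(S)$, while its time dual gives $\eta(s_0) = \tilde\eta(0) \in S^+_\infty(\Sigma)$. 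Since $\eta(s_0) \ll \gamma(t_0)$ by construction, the point $\gamma(t_0)$ lies in $J^+(\eta(s_0)) \subset J^+(S^+_\infty(\Sigma))$ as well as in $S^-_\infty(S) \subset J^-(S^-_\infty(S))$, giving the required intersection point.

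For the max-min case, the starting observation is the one recalled just before the lemma: max-min (together with timelike completeness) forces $S^-_\infty(S)$ to be compact, and a compact achronal $C^0$ hypersurface in a globally hyperbolic spacetime is necessarily a Cauchy surface. I would simply set $\Sigma := S^-_\infty(S)$ (leaving $S$ itself unchanged). By Proposition \ref{sray}, this compact Cauchy surface admits a past $\Sigma$-ray $\eta$, and the time dual of Proposition \ref{Cauchyhoroprop} gives $\eta(0) \in S^+_\infty(\Sigma)$. But $\eta(0) \in \Sigma = S^-_\infty(S)$ by construction, so the single point $\eta(0)$ lies in $S^-_\infty(S) \cap S^+_\infty(\Sigma) \subset J^-(S^-_\infty(S)) \cap J^+(S^+_\infty(\Sigma))$.

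I do not anticipate a serious obstacle. The most substantive step is the reverse-triangle bookkeeping in the ray-to-ray case; the only technical wrinkle in the max-min case is that $\Sigma = S^-_\infty(S)$ is only a $C^0$ Cauchy surface, but this is harmless since the propositions invoked require of $\Sigma$ only that it be a compact Cauchy surface.
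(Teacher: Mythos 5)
Your argument is correct and follows essentially the same route as the paper: in the ray-to-ray case you shift the Cauchy surfaces to $S^+_{t_0}(S_0)$ and $S^-_{s_0}(\Sigma_0)$ so that the ray tails become rays from the new surfaces and their base points land on the respective Cauchy horospheres via Proposition \ref{Cauchyhoroprop}, and in the max-min case you take $\Sigma = S^-_\infty(S)$ and use Proposition \ref{sray} to get a point of $S^+_\infty(\Sigma)\cap\Sigma$. The only difference is that you spell out (via the reverse triangle inequality) the claim that the shifted tail is an $S$-ray, which the paper simply asserts.
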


\begin{proof} Suppose first that the generalized ray-to-ray condition holds, that is, that there is a Cauchy surface $S_0$ with future $S_0$-ray $\g : [0,\infty) \to M$, and a Cauchy surface $\S_0$ with past $\S_0$-ray $\eta : [0,\infty) \to M$, with both rays parameterized with respect to arc length, such that $I^+(\eta) \cap I^-(\g) \ne \emptyset$. Hence, we have $\eta(a) \ll \g(b)$, for some $0 \le a, b < \infty$. Letting $S := S^+_b(S_0)$, then $S$ is a (compact) Cauchy surface, and the tail $\g|_{[b, \infty)}$ is a future $S$-ray. Constructing the past Cauchy horosphere $S^-_\infty(S)$, we have $\g(b) \in S^-_\infty(S) \cap S$, as in Proposition \ref{Cauchyhoroprop}. Time-dually, letting $\S := S^-_a(\S_0)$, then $\S$ is a Cauchy surface, with $\eta(a) \in S^+_\infty(\S) \cap \S$. In particular, this shows that $J^+(S^+_\8(\S)) \cap J^-(S^-_\8(S)) \ne \emptyset$. 

Now suppose instead that $S$ is a Cauchy surface which satisfies the max-min condition. It follows that the past Cauchy horosphere $S^-_\infty(S)$ is a compact Cauchy surface, (see \cite{horo1} for details). Letting $\S := S^-_\infty(S)$, then the future Cauchy horosphere $S^+_\infty(\S)$ has a point in common with $\S$. That is, we have $S^+_\infty(\S) \cap S^-_\infty(S) \ne \emptyset$, from which the conclusion follows trivially. 
\end{proof}

We now show the Bartnik conjecture holds under the condition in Lemma \ref{horotohorolem}.

\begin{thm} \label{horotohorothm} Let $M$ be a timelike geodesically complete spacetime, with compact Cauchy surfaces, which satisfies the timelike convergence condition. Suppose that there are two Cauchy surfaces $S$ and $\S$ such that $J^+(S^+_\8(\S)) \cap J^-(S^-_\8(S)) \ne \emptyset$. Then $M$ splits.
\end{thm}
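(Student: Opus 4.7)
The plan is to construct a timelike line in $M$ from the given horosphere-rays, and then to apply the Lorentzian splitting theorem. By the hypothesis select $y \in S^+_\infty(\Sigma)$ and $x \in S^-_\infty(S)$ with $y \le x$. Invoking Proposition \ref{Cauchyhoroprop} and its time-dual (using timelike geodesic completeness of $M$ together with compactness of $\Sigma$ and $S$), produce a past complete timelike $S^+_\infty(\Sigma)$-ray $\eta : [0,\infty) \to M$ based at $y$ and a future complete timelike $S^-_\infty(S)$-ray $\gamma : [0,\infty) \to M$ based at $x$.

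For each integer $k \ge 1$, the chain $\eta(k) \ll y \le x \ll \gamma(k)$ gives $\eta(k) \ll \gamma(k)$, so Proposition \ref{GHdist} supplies a maximal timelike geodesic $\lambda_k$ from $\eta(k)$ to $\gamma(k)$ of length
\[
L(\lambda_k) \;=\; d(\eta(k), \gamma(k)) \;\ge\; d(\eta(k), y) + d(y, x) + d(x, \gamma(k)) \;=\; 2k + d(y,x),
\]
by the Lorentzian reverse triangle inequality and the defining ray property, which gives $d(\eta(k), y) = k = d(x, \gamma(k))$.

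A limit-curve argument then extracts a line from $\{\lambda_k\}$. Since $y \in S^+_\infty(\Sigma) \subset J^+(\Sigma)$ one has $x, \gamma(k) \in J^+(\Sigma)$, while past-inextendibility of $\eta$ forces $\eta(k) \in J^-(\Sigma)$ for all sufficiently large $k$; thus each such $\lambda_k$ meets the compact Cauchy surface $\Sigma$ at a unique point $p_k$. Compactness of $\Sigma$ yields a subsequence with $p_k \to p_0 \in \Sigma$. Parameterizing each $\lambda_k$ by unit-speed Riemannian arc length anchored at $p_k$, the standard limit-curve lemma furnishes a further subsequence converging uniformly on compact sets to a past- and future-inextendible causal geodesic $\lambda_\infty$ through $p_0$, which inherits maximality from the $\lambda_k$ and is therefore a causal line.

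The only delicate step is verifying that $\lambda_\infty$ is \emph{timelike} rather than null, since a limit of timelike maximal geodesics can in principle be null. However, the linear growth $L(\lambda_k) \ge 2k + d(y,x)$, combined with the uniformly bounded transverse displacement of the $\lambda_k$ across the fixed compact cross-section $\Sigma$, forces the Lorentzian-to-Riemannian length ratio of $\lambda_k$ to stay bounded below on a neighborhood of $p_k$, by the same device used in the proof of Theorem \ref{raytoraythm} in \cite{GalBanach}. Thus $\lambda_\infty$ is a timelike line, and applying the Lorentzian splitting theorem of Section \ref{SecLST} yields $(M,g) \approx (\field{R} \times S_\infty, -dt^2 + h)$, completing the proof.
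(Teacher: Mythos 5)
The reduction you attempt --- build a timelike line from the two rays and quote the Lorentzian splitting theorem --- founders at precisely the step you flag as ``delicate''. The limit-curve argument does give a causal line through $p_0\in\Sigma$ which inherits maximality, but your argument that it is timelike is not valid. Lorentzian arc length is only upper semicontinuous, and the length $L(\lambda_k)\ge 2k+d(y,x)$ can be accumulated arbitrarily far from the compact cross-section $\Sigma$; a sequence of maximal timelike geodesics of divergent length crossing a fixed compact set can perfectly well converge to a \emph{null} line (think of longer and longer, nearly null, maximal segments whose velocities at the crossing points tend to a null direction). So ``linear growth of $L(\lambda_k)$ plus bounded transverse displacement'' does not bound the causal character of $\lambda_\infty$ near $p_0$ from below. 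The device you invoke from the proof of Theorem \ref{raytoraythm} in \cite{GalBanach} is tied to rays to a \emph{compact Cauchy surface} together with the timelike overlap $I^+(\eta)\cap I^-(\gamma)\ne\emptyset$; your rays $\eta$ and $\gamma$ are rays to the Cauchy horospheres $S^+_\infty(\Sigma)$ and $S^-_\infty(S)$, which are only known to be closed acausal $C^0$ hypersurfaces --- neither compact nor Cauchy at this stage (that is part of what is being proved) --- so that argument does not transfer without substantial additional work, none of which appears in your proposal. Note also that the hypothesis only provides a \emph{causal} relation $y\le x$: when $d(y,x)=0$ (the horospheres merely touch or are null-related) one should not expect to be able to exhibit a timelike line by such soft limiting arguments at all; ruling out null limit lines is exactly the hard point (compare the example of \cite{EhrGal}, a timelike complete spacetime with compact Cauchy surfaces containing no timelike line).

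The paper's proof takes a different route that avoids line constructions entirely: it notes that the contact occurs in the compact region $J^+(\Sigma)\cap J^-(S)$, sets $d:=\max\{d(p,q): p\in S^+_\infty(\Sigma),\, q\in S^-_\infty(S)\}$, and when $d>0$ replaces $S^-_\infty(S)$ by the shifted horosphere $S^-_{\infty+d}(S)=S^-_d(S^-_\infty(S))$, so that the two horospheres meet without crossing at a spacelike point; Theorem \ref{splitlem} (the horosphere rigidity theorem, which uses the timelike convergence condition via the maximum principle) then yields the splitting, with the $d=0$ case handled directly by the same tangency argument (Lemma \ref{tangential}). The paper's closing remark is telling: only the case $I^+(S^+_\infty(\Sigma))\cap I^-(S^-_\infty(S))\ne\emptyset$ --- i.e.\ genuinely timelike overlap --- can alternatively be obtained from Theorem 3.7 of \cite{GalBanach}, which is the regime your strategy is implicitly assuming. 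As written, your proposal misses the general causal-contact case and does not substantiate the timelikeness of the limit line, so it does not constitute a proof.
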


\begin{proof} We recall first that the two Cauchy horospheres $S^-_\8(S)$ and $S^+_\8(\S)$ are acausal, (cf. Proposition \ref{Cauchyhoroprop}), and hence, in particular, consist entirely of spacelike points. Now, note that the nontrivial intersection occurs within the compact region $J^+(\S) \cap J^-(S)$. Hence, there is a finite distance $d := \max \{d(p,q) : p \in S^+_\infty(\S), q \in S^-_\infty(S)\}$, and points $p_0 \in S^+_\infty(\S)$ and $q_0 \in S^-_\infty(S)$, with $d(p_0, q_0) = d(S^+_\infty(\S), S^-_\infty(S)) = d$. If $d = 0$, then it follows (along the lines of Proposition \ref{sray}) that the two horospheres `meet without crossing' as in Theorem \ref{splitlem}, and we are done. If $d > 0$, the basic idea of the proof is to replace $S^-_\infty(S)$ with a modified horosphere which again gives the situation of Theorem \ref{splitlem}. More precisely, suppose that $d > 0$. Recall that $S^-_\infty(S) = \lim \{S^-_k(S^+_k(S))\}$. We then consider the modified horosphere $S^-_{\infty+d}(S) := \lim \{S^-_{k+d}(S^+_k(S))\}$, which we claim is the same as taking the past sphere from the original Cauchy horosphere, i.e., $S^-_{\infty + d}(S) = S^-_d(S^-_\infty(S))$. (See Figure \ref{horo2horo}.) Suppose for now that this holds. Since $d(p_0, S^-_\infty(S)) = d$, we have $p_0 \in S^-_d(S^-_\infty(S)) = S^-_{\infty + d}(S)$. Hence, $S^+_\infty(\S) \cap S^-_{\infty + d}(S) \ne \emptyset$. Furthermore, using the fact that $S^-_{\infty + d}(S) = S^-_d(S^-_\infty(S))$, it follows that we must have $I^+(S^+_\infty(\S)) \cap I^-(S^-_{\infty + d}(S)) = \emptyset$.  Hence, Theorem \ref{splitlem} gives the splitting.

\begin{figure}%[h]
\begin{center}
\def\svgwidth{13cm} 
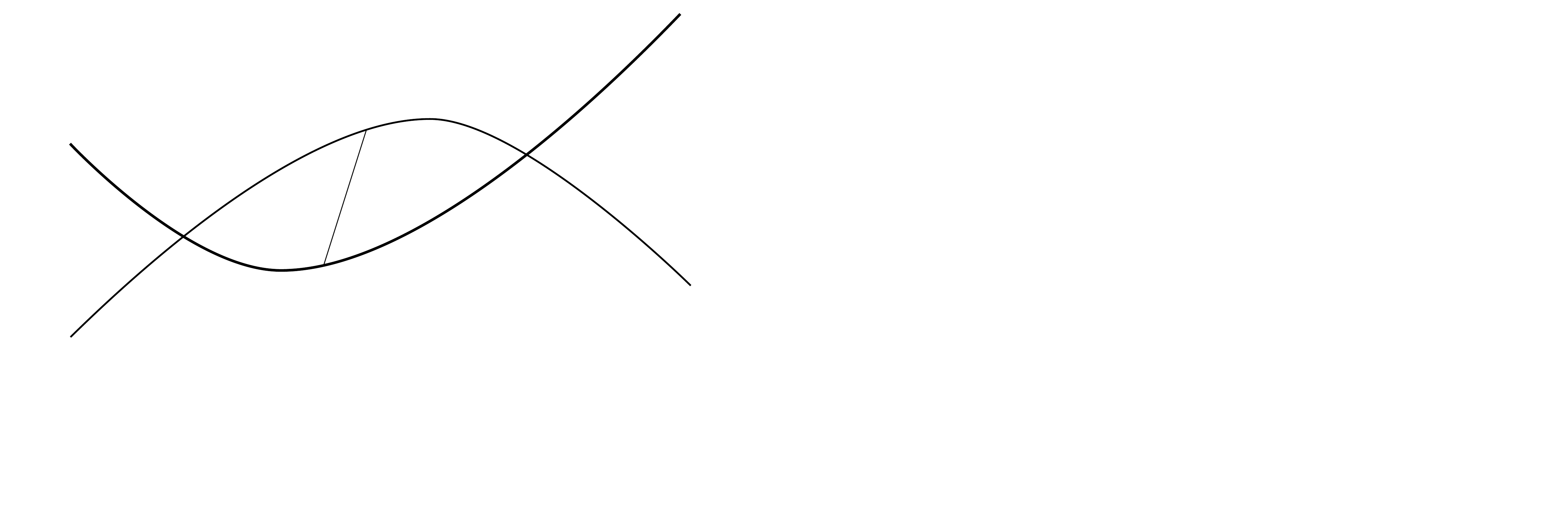
\caption[...]{The past sphere from $S^-_\infty(S)$ gives a new horosphere.}
\label{horo2horo}
\end{center}
\end{figure}

It remains to show that $S^-_{\infty+d}(S) := \lim \{S^-_{k+d}(S^+_k(S))\}$ exists, and equals $S^-_d(S^-_\infty(S))$.  With regard to the latter, we note that, as was shown in \cite{horo1}, $S^-_\infty(S)$ is past causally complete, and hence is an appropriate center for a past sphere. To simplify notation, let $C_k := S^+_k(S)$. Hence, $S^-_\infty(S) = \lim\{S^-_k(C_k)\}$ and $S^-_{\infty + d}(S) = \lim \{S^-_{k+d}(C_k)\}$. To show that the latter (exists and) equals the past sphere from the former, it suffices to show the two inclusions: (1) $\overline{\lim}\{S^-_{k+d}(C_k)\} \subset S^-_d(S^-_\infty(S))$ and (2) $S^-_d(S^-_\infty(S)) \subset \underline{\lim}\{S^-_{k+d}(C_k)\}$. The proof of these inclusions is facilitated by the fact that, as derived in \cite{horo1}, we have $S^-_k(C^-_k) \subset J^-(S)$ for all $k$, and that, moreover,
$S^-_\infty(S)$ is a `past achronal limit' as defined in \cite{horo1}, (cf.\ the last statement in Proposition 2.5 in \cite{horo1}). We leave the details to the interested reader.
\end{proof}

The proof of Theorem \ref{horotohorothm} above is a natural and immediate application of Theorem \ref{splitlem}, (cf. Figure \ref{horo2horo}). However, we note also that the case $I^+(S^+_\infty(\S)) \cap I^-(S^-_\infty(S)) \ne \emptyset$ can alternatively be obtained from Theorem 3.7 in \cite{GalBanach}.

\subsection{Connections to the causal boundary} \label{Seccausalbdy}

We conclude with an exploration of some connections between rigidity and the causal boundary of spacetime. More specifically, we consider the case that the (past or future) causal boundary of $M$ is spacelike. In Section \ref{SecBartspacelikebdy}, we show in particular that the Bartnik splitting conjecture holds in this case. In Section \ref{Secposcosmo}, we explore this condition in the setting of positive cosmological constant.

We begin with some  comments about the causal boundary of a globally hyperbolic spacetime $(M,g)$, cf. \cite{GKP, HE} for further details.  We shall focus on the past causal boundary $\mathscr{C}^-$; time-dual statements apply to the future causal boundary $\mathscr{C}^+$.  Heuristically, $\mathscr{C}^-$ consists of `ideal points' which represent the `past end points' of past inextendible timelike curves.  This is made precise in terms of indecomposable future sets (IFs).  Let $F$ be a future set, $F = I^+(S)$ for some set $S \subset M$.  Recall,  $F$ is a future set if and only if  
$I^+(F) = F$.  By definition, $F$ is an indecomposable future set if it cannot be expressed as the union of two future sets which are proper subsets of $F$.  It can be shown  \cite{HE} that there are only two types of IFs: the timelike future of a point $p$, $I^+(p)$, and the timelike future of a past inextendible timelike curve $\g$, $I^+(\g)$.  The latter sets are called terminal indecomposable future sets, or TIFs for short.  The \emph{past causal boundary} $\mathscr{C}^-$ is, by definition, the set of all TIFs (with obvious identifications).   Following the terminology of Wald and Yip \cite{WaldYip},  $\mathscr{C}^-$ is said to be \emph{spacelike} if no TIF  is properly contained in another.

\subsubsection{Ray-to-ray and the causal boundary} \label{SecBartspacelikebdy}

There is a connection between the ray-to-ray condition and the causal boundary of spacetime, which yields yet another `special case' of the Bartnik splitting conjecture.

\begin{prop}\label{r2r}  Let $(M,g)$ be a spacetime with compact Cauchy surfaces.  If the past causal boundary of $(M,g)$ is spacelike then $(M,g)$ satisfies the ray-to-ray condition.  Hence, if in addition $(M,g)$ is future timelike goedesically complete, $(M,g)$ contains a timelike line.
\end{prop}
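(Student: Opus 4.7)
The plan is to exhibit a future $S$-ray $\g$ and a past $\Sigma$-ray $\eta$, for possibly different compact Cauchy surfaces $S$ and $\Sigma$, whose chronological sets meet, and then invoke Theorem~\ref{raytoraythm} in its extended form (as discussed in the paragraph immediately following its statement) to obtain the timelike line under the added future timelike geodesic completeness.

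By Proposition~\ref{sray}, a chosen compact Cauchy surface $S$ admits a future $S$-ray $\g\colon[0,\infty)\to M$ from some $q_0=\g(0)\in S$. The key elementary observation is that whenever $\eta$ is a past-inextendible timelike curve with $\eta(0)=q_0$, the point $q_0$ automatically lies in $I^+(\eta)\cap I^-(\g)$: one has $q_0\in I^+(\eta(t))\subset I^+(\eta)$ for every $t>0$, while $q_0=\g(0)\ll\g(t)$ yields $q_0\in I^-(\g)$. So the problem reduces to producing such an $\eta$ which is in addition a past $\Sigma$-ray for some compact Cauchy surface $\Sigma\ni q_0$.

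To this end I would take $\eta$ to be a past-maximal past-inextendible timelike geodesic from $q_0$ (a past ray from $q_0$), obtained as a limit curve from maximal past geodesic segments joining $q_0$ to compact past Cauchy surfaces receding to past infinity; such a limit exists by the standard limit-curve argument in a globally hyperbolic spacetime with compact Cauchy surfaces. It then remains to exhibit a compact Cauchy surface $\Sigma \ni q_0$ for which $d(\eta(t),\Sigma)=L(\eta|_{[0,t]})$ holds for every $t\ge 0$. Here the spacelike hypothesis on $\mathscr{C}^-$ is decisive: were there a point $z\in\Sigma$ with $z\ne q_0$ and $d(\eta(t),z)>L(\eta|_{[0,t]})$, then extending the maximal past timelike geodesic from $z$ to $\eta(t)$ past-inextendibly would yield a past-inextendible timelike curve $\sigma$ whose TIF $I^+(\sigma)$ compares strictly with $I^+(\eta)$, contradicting the condition that no TIF is properly contained in another. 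The spacelike hypothesis thus forces $\eta$ to asymptote a single past ideal point cleanly enough that the desired $\Sigma$ can be built, using the compact Cauchy structure of $M$ to guarantee compactness.

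The main obstacle I expect is precisely this global construction of $\Sigma$: making rigorous the passage from the local nonexistence of competing maximizers (excluded by spacelikeness of $\mathscr{C}^-$) to a genuine compact Cauchy surface with the required maximizing property for $\eta$ along its entire length. Once this is settled, the inclusion $q_0 \in I^+(\eta)\cap I^-(\g)$ verifies the generalized ray-to-ray condition, and Theorem~\ref{raytoraythm} produces the timelike line under future timelike geodesic completeness.
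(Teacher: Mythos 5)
There is a genuine gap, and it is the one you flag yourself: your argument needs the past ray $\eta$ to emanate from the same point $q_0=\g(0)$ \emph{and} to be a past $\Sigma$-ray for some compact Cauchy surface $\Sigma$ through $q_0$, and you have no construction of such a $\Sigma$. Moreover, the way you propose to use the spacelike boundary hypothesis to rule out ``competing maximizers'' does not work as stated: if $z\in\Sigma$ realizes a longer distance to $\eta(t)$, extending the maximizer from $z$ to $\eta(t)$ into the past produces some past-inextendible timelike curve, but there is no reason its TIF is comparable (nested either way) with $I^+(\eta)$. The spacelikeness of $\mathscr{C}^-$ only forbids \emph{proper containment} of one TIF in another; it says nothing about incomparable TIFs, so no contradiction is reached. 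The argument is also circular in that $\Sigma$ is being used before it has been built.

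The paper's proof sidesteps both problems by not asking the past ray to be based at $\g(0)$. Fix a compact Cauchy surface $S$ and a future $S$-ray $\g$ from $p\in S$; take any past-inextendible timelike curve $\s$ starting at $p$, pick points $p_k$ exhausting $\s$ to the past, and let $\eta_k$ be past-directed maximizers from $S$ to $p_k$. Compactness of $S$ gives a limit curve $\eta$ which is automatically a past $S$-ray (for the \emph{same} $S$, so no new Cauchy surface is needed) and is timelike, again by compactness. The construction gives $\eta\subset\overline{I^+(\s)}$, hence the genuine inclusion $I^+(\eta)\subset I^+(\s)$ of TIFs; only now does spacelikeness of $\mathscr{C}^-$ enter, forcing $I^+(\eta)=I^+(\s)$. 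Since $p\in I^+(\s)\cap I^-(\g)$ trivially, one gets $I^+(\eta)\cap I^-(\g)\ne\emptyset$, i.e.\ the ray-to-ray condition with a single surface $S$, and Theorem \ref{raytoraythm} then yields the timelike line under future timelike completeness. In short: the spacelike hypothesis is used to upgrade an inclusion of TIFs that the limit-curve construction hands you for free, not to manufacture a maximizing Cauchy surface through a prescribed point, which is the step your proposal cannot supply.
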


This proposition is consistent with the example constructed in \cite{EhrGal}, which does not contain any timelike lines, and whose past causal boundary is $\mathscr{C^-}$ is nontimelike, but not spacelike. 

\begin{proof} [Proof of Proposition \ref{r2r}] Let $S$ be a compact Cauchy surface for $M$, and let $\g$ be a future $S$-ray starting at $p \in S$.  Let $\s$ be a past inextendible timelike curve starting at $p$.  Use $\s$ to construct a past $S$-ray in the usual manner:  Take a sequence of points $p_k$ along $\s$ that exhaust $\s$ to the past, and, for each $k$, construct a past directed maximizer to $S$,
$\eta_k$, from $S$ to $p_k$.   Since $S$ is compact, by a standard limit curve argument, a subsequence of the $\eta_k$'s converges to a past $S$-ray $\eta$, which, again by compactness, must be timelike.    By construction, $\eta \subset \overline{I^+(\s)}$, and hence by standard properties, $I^+(\eta) \subset  I^+(\s)$.  Since the past causal boundary is assumed to be spacelike, we must have $I^+(\eta) =  I^+(\s)$.  Hence, $I^-(\g) \cap I^+(\eta) = I^-(\g) \cap I^+(\s) \ne \emptyset$.
\end{proof}

Proposition \ref{r2r} thus gives the Bartnik conjecture under the additional assumption of spacelike boundary:

\begin{thm}\label{Bartspacelikebdythm}
Let $(M,g)$ be a spacetime which contains a
compact Cauchy surface and satisfies the timelike convergence condition.  Assume that either the future or past causal boundary of 
$M$ is spacelike.  If $(M,g)$ is timelike geodesically complete,
then $(M,g)$ splits as in the Bartnik conjecture.
\end{thm}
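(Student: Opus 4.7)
The plan is to reduce to the basic Lorentzian Splitting Theorem proved earlier in the paper, by using Proposition \ref{r2r} to extract a timelike line from the spacelike causal boundary hypothesis. Without loss of generality, I would first assume that the past causal boundary $\mathscr{C}^-$ of $(M,g)$ is spacelike; the case where $\mathscr{C}^+$ is spacelike follows by reversing time orientation, since all the other hypotheses (global hyperbolicity, compact Cauchy surfaces, the timelike convergence condition, and timelike geodesic completeness) are time-symmetric, and the conclusion (a metric splitting) is as well.

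Next, I would apply Proposition \ref{r2r} directly. Its hypotheses are precisely that $M$ has compact Cauchy surfaces and spacelike past causal boundary, together with future timelike geodesic completeness, which is subsumed by the full timelike completeness assumed here. The conclusion is that $(M,g)$ satisfies the ray-to-ray condition and contains a timelike line $\alpha : (-\infty,\infty) \to M$.

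With a timelike line in hand, I would simply invoke the Lorentzian Splitting Theorem established in Section \ref{SecLST}. Its hypotheses — global hyperbolicity, timelike geodesic completeness, the timelike convergence condition, and the existence of a timelike line — are all satisfied, so we conclude that $S_\alpha^- = S_\alpha^+ =: S_\alpha$ is a smooth, spacelike, geodesically complete Cauchy surface and $(M,g) \approx (\field{R} \times S_\alpha, -dt^2 + h)$, which is exactly the splitting predicted by the Bartnik conjecture.

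There is no real obstacle here: once Proposition \ref{r2r} and the Lorentzian Splitting Theorem are in place, the argument is essentially a one-line invocation. The only minor subtlety worth remarking on is that Proposition \ref{r2r} is stated in terms of future timelike completeness and spacelike past causal boundary, so one must note explicitly that the time-dual version (past completeness plus spacelike future causal boundary) also produces a timelike line, which is immediate by reversing the time orientation; under full timelike completeness either spacelike half of the causal boundary suffices.
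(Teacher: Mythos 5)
Your proposal is correct and follows the paper's own argument essentially verbatim: the paper likewise invokes Proposition \ref{r2r} (or its time-dual) to produce a timelike line and then applies the Lorentzian splitting theorem to obtain the Bartnik-type splitting. Your remark on handling the future-boundary case by time-duality matches the paper's brief "or its time-dual" as well.
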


Indeed, Proposition  \ref{r2r}, or its time-dual, implies that $M$ admits a timelike line. One can then apply the Lorentzian splitting theorem.  We note that the past causal boundary of a complete product spacetime  with compact Cauchy surfaces, as in the conclusion of the theorem, is trivially spacelike, since it consists of a single TIF. Indeed, the product structure and compactness of the Cauchy surfaces implies that $I^+(\g) = M$ for any past inextendible timelike curve, and hence  $\mathscr{C}^-$ consists of a single element.

It is natural to ask if the timelike convergence condition, or a curvature condition consistent with this (involving only weak inequalities), could be used to show, in the context of Proposition \ref{r2r}, that the past causal boundary is necessarily spacelike.  Recall \cite{BEE} that the spacetime Ricci curvature tensor evaluated on a unit timelike vector can be expressed as {\it minus} the sum of timelike sectional curvatures.  Theorem 3 in \cite{EhrGal}, which is based on a causality theorem of Harris \cite{Harris}, can be used to show the following. 

\begin{prop}\label{sectional} Let $(M, g)$ be a spacetime with compact Cauchy surfaces and with
everywhere non-positive timelike sectional curvatures, $K \le 0$.   If $(M, g )$ is past
timelike geodesically complete then the past causal boundary $\mathscr{C}^-$ is spacelike; in fact 
$\mathscr{C}^-$ consists of single element.  
\end{prop}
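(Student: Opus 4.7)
The plan is to prove the stronger conclusion that $\mathscr{C}^-$ consists of a single element; the spacelikeness of $\mathscr{C}^-$ is then vacuous, since the property ``no TIF is properly contained in another'' is automatic when there is only one TIF. Equivalently, the goal is to establish $I^+(\sigma) = M$ for every past inextendible timelike curve $\sigma$ in $M$.

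The argument proceeds in two main steps. First is a reduction to past $S$-rays. Given any past inextendible timelike curve $\sigma$, fix a compact Cauchy surface $S$ and mimic the limit-curve construction from the proof of Proposition~\ref{r2r}: choose points $p_k$ along $\sigma$ exhausting it to the past, take past-directed timelike maximizers $\eta_k$ from $S$ to $p_k$ (these exist by compactness of $S$), and extract a convergent subsequence. This produces a past timelike $S$-ray $\eta$ emanating from some point of $S$, and past timelike completeness ensures $\eta$ is indeed past inextendible. The key inclusion is $I^+(\eta) \subseteq I^+(\sigma)$: any $x \in \eta$ arises as a limit of $x_k \in \eta_k$ with $x_k \in J^+(p_k) \subseteq J^+(\sigma)$, so by openness of chronological futures, any $y \gg x$ lies in $I^+(x_k) \subseteq I^+(\sigma)$ for all sufficiently large $k$.

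Second, invoke Theorem~3 of \cite{EhrGal}, which rests on the causality theorem of Harris \cite{Harris} and exploits Alexandrov--Toponogov type comparison for timelike triangles available when $K \le 0$. Under the present hypotheses---compact Cauchy surfaces, $K \le 0$, and past timelike geodesic completeness---this comparison machinery forces past inextendible timelike geodesics to have maximally wide chronological futures; the content to be extracted is precisely $I^+(\eta) = M$ for the past $S$-ray $\eta$ produced above. Combined with the first step, this yields $I^+(\sigma) = M$. Since $\sigma$ was arbitrary, every TIF equals $M$, and $\mathscr{C}^-$ is a single element.

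The main obstacle is calibrating Theorem~3 of \cite{EhrGal} to deliver exactly the form $I^+(\eta) = M$ needed above. If the cited theorem is instead phrased as a noncontainment relation between distinct TIFs, a short supplementary argument becomes necessary: noncontainment together with the $S$-ray reduction and the compactness of the Cauchy surfaces should still suffice to identify all TIFs and collapse $\mathscr{C}^-$ to a single point. The deeper difficulty---that is, extracting a control on the ``width'' of $I^+(\eta)$ from the timelike sectional curvature hypothesis alone---is precisely what is encapsulated in the Harris/Ehrlich--Galloway machinery being invoked.
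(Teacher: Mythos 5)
There is a genuine gap at the heart of your argument: the invocation of \cite[Theorem 3]{EhrGal} as delivering ``$I^+(\eta)=M$ for a past timelike $S$-ray $\eta$'' is not something that theorem provides, and you offer no argument to bridge the difference. The content of (the time-dual of) that theorem is a statement about \emph{null} geodesics: under compact Cauchy surfaces, $K\le 0$, and past timelike geodesic completeness, every past inextendible null geodesic enters its own chronological past. It says nothing directly about the chronological future of a timelike curve being all of $M$ --- indeed, ``$I^+(\eta)=M$ for every past inextendible timelike curve'' is essentially the proposition you are trying to prove, so treating it as ``the content to be extracted'' from the citation is circular. Your closing paragraph correctly senses this calibration problem, but the supplementary argument you gesture at (``noncontainment plus the $S$-ray reduction should suffice'') is exactly the missing proof, not a routine detail.

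The missing idea is the achronal boundary argument, which is how the paper converts the null-geodesic statement into the desired conclusion. Suppose some past inextendible timelike curve $\sigma$ has $I^+(\sigma)\ne M$. Then $\partial I^+(\sigma)\ne\emptyset$, and by Penrose's structure theory \cite{Penrose} this achronal boundary is an achronal $C^0$ hypersurface ruled by null geodesics that are past inextendible (their past endpoints could only lie on $\overline{\sigma}$, which is impossible since $\sigma$ is timelike and past inextendible). Each such generator is a past inextendible null geodesic lying in the achronal set $\partial I^+(\sigma)$; but the time-dual of \cite[Theorem 3]{EhrGal} forces it to enter its own timelike past, contradicting achronality. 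Hence $I^+(\sigma)=M$ for every past inextendible timelike curve, so $\mathscr{C}^-$ is a single point. Note also that your first step (the limit-curve reduction to a past $S$-ray, as in Proposition \ref{r2r}) is sound but unnecessary here: it neither connects to the null-geodesic hypothesis of the cited theorem nor simplifies the problem, since the contradiction must in any case be extracted from the structure of $\partial I^+(\sigma)$.
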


\begin{proof} Indeed, one has $I^+(\g) = M$ for any past inextendible timelike curve.  For if this were not the case, then there would be a past inextendible timelike curve $\g$ such that $\d I^+(\g) \ne \emptyset$.  By properties of achronal boundaries \cite{Penrose}, $\d I^+(\g)$ is an achronal $C^0$ hypersurface ruled by past inextendible null geodesics.  However, by the time-dual of \cite[Theorem 3]{EhrGal}, any such null geodesic would enter its own timelike past, thereby violating the achronality of $\d I^+(\g)$.
\end{proof}

Of course, \cite[Theorem 3]{EhrGal} also shows that there can be no null lines in such a spacetime, and hence  the standard causal line construction must give rise to a timelike line.  

\subsubsection{An application with positive cosmological constant} \label{Secposcosmo}

In this section we consider spacetimes $(M^{n+1},g)$ which obey the 
Einstein equations,
\beq\label{einstein}
R_{ij} -\frac12Rg_{ij} +\Lambda g_{ij} = 8\pi T_{ij} \,, 
\eeq
with  positive cosmological constant $\Lambda$, where the energy-momentum tensor
$T_{ij}$  is assumed to satisfy  the  strong energy condition, 
\beq
(T_{ij} - \frac1{n-1}Tg_{ij})X^iX^j \ge 0 \label{SEC}
\eeq
for all timelike vectors $X$, where $T = T_i{}^i$.

After a rescaling, we may assume $\Lambda = n(n-1)/2$.
With this normalization the strong energy condition is equivalent to
\beq\label{EC}
\ric (X,X) \ge -n  \quad \text{for all unit  timelike vectors } X.
\eeq

The aim of this section is to prove the following singularity theorem and a rigidity result from which it follows.

\begin{thm}\label{FRWsing}  Let $(M,g)$ be a globally hyperbolic spacetime satisfying:
\vs{-.07in}
\ben 
\item[(1)] $(M,g)$ obeys \eqref{einstein}-\eqref{SEC}, with $\Lambda = n(n-1)/2$.

\vs{-.07in}
\item[(2)] $(M,g)$  has spacelike past  causal boundary $\mathscr{C}^-$.

\vs{-.07in}
\item[(3)] $(M,g)$ admits a noncompact geodesically complete spacelike Cauchy surface $V$ of nonpositive scalar curvature, $S \le 0$.

\vs{-.07in}
\item[(4)] The local energy density along $V$ is nonnegative, $\mu := T(u,u) = T_{ij}u^iu^j \ge 0$, where $u$ is the future directed unit normal to $V$.
\een
Finally, assume $V$  has nonnegative mean curvature at some point. Then $(M,g)$ is past timelike geodesically incomplete; in fact, some timelike geodesic orthogonal to $V$ is past incomplete.  

\end{thm}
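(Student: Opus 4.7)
My plan is to argue by contradiction: assume every past-directed timelike geodesic orthogonal to $V$ is past complete, and derive a contradiction. The first step is a pointwise analysis of the Hamiltonian constraint at the distinguished point $p \in V$ where $H(p) \ge 0$. The Einstein equations \eqref{einstein} together with the Gauss equation for $V$ (and the normalization $\Lambda = n(n-1)/2$) yield
\beq
S_V + H^2 - |K|^2 \;=\; 16\pi\mu + n(n-1), \nonumber
\eeq
where $K$ is the second fundamental form of $V$ with respect to the future unit normal and $H = \mathrm{tr}\, K$. Combining hypotheses (3) and (4) ($S_V \le 0$ and $\mu \ge 0$) with the Cauchy-Schwarz bound $|K|^2 \ge H^2/n$ yields $H(p) \ge n$, with equality only if $V$ is umbilic at $p$ with $K_p = g|_p$ and $S_V(p) = \mu(p) = 0$.

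Next, along the past-directed unit-speed timelike normal geodesic $\gamma_p$ from $p$, which is past complete by the contradiction hypothesis, the strong energy condition \eqref{EC} gives $\mathrm{Ric}(u,u) \ge -n$ on unit timelike $u$, and the Raychaudhuri equation for the past normal congruence yields
\beq
\theta'(\tau) \;\le\; n - \tfrac{1}{n}\theta(\tau)^2, \qquad \theta(0) = -H(p) \le -n. \nonumber
\eeq
In the strict case $H(p) > n$, comparison with the ODE $y' = n - y^2/n$ forces $\theta \to -\infty$ within finite past proper time $\tau^*$, producing a focal point of $V$ along $\gamma_p$. Standard focal-point theory, combined with Proposition \ref{GHdistset} (applicable since the Cauchy surface $V$ is past causally complete) and a limit-curve argument controlled by the spacelike past causal boundary hypothesis, in the spirit of Proposition \ref{r2r}, should produce a past $V$-ray along which Raychaudhuri in turn forces focusing, contradicting past completeness.

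The main obstacle is the rigidity case $H(p) = n$, where the comparison ODE admits the constant solution $y \equiv -n$ and the previous argument stalls. Here equality propagates along $\gamma_p$: one has $\theta \equiv -n$, $|\sigma| \equiv 0$, and $\mathrm{Ric}(u,u) \equiv -n$ along $\gamma_p$; at $p$, $V$ is umbilic with $K_p = g|_p$ and $S_V(p) = \mu(p) = 0$. This rigid data matches the flat (``steady state'') slicing of de Sitter, which is past geodesically incomplete as a spacetime. The plan is to invoke the horosphere machinery of Section \ref{sechorostructure}: form the future ray horosphere $S^+_\infty(\gamma_p)$ associated to the past-directed ray $\gamma_p$ (via the time-dual of Proposition \ref{rayhoroprop}), apply a mean-curvature-comparison maximum principle in the spirit of the proof of Theorem \ref{splitlem} to identify $V$ with $S^+_\infty(\gamma_p)$ on a neighborhood of $p$, and globalize this agreement via the spacelike past causal boundary. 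The resulting warped-product structure on $M$ should match the flat de Sitter model, whose past geodesic incompleteness contradicts the initial assumption.
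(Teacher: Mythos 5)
Your endgame is where the argument breaks. Under your contradiction hypothesis (every timelike geodesic orthogonal to $V$ is past complete), the structure one arrives at --- $(J^-(V),g)$ isometric to $([0,\infty)\times V,\, -dt^2+e^{-2t}h)$ with $\partial_t$ past pointing --- does \emph{not} contradict that hypothesis: in this warped product the geodesics orthogonal to $V$ are exactly the comoving curves $t\mapsto (t,x)$, and they are past complete. The ``flat slicing'' of de Sitter is past incomplete only along non-comoving timelike and null geodesics, which are not covered by your assumption; moreover you only control $J^-(V)$, not all of $M$. So ``the model is past incomplete, contradiction'' cannot close the proof. The paper's Theorem \ref{warpsplit} instead gets the contradiction from \emph{noncompactness} of $V$: after the conformal change $u=e^t-1$ the metric on $J^-(V)$ becomes a product, a spatial ray $\sigma$ in the complete noncompact $V$ yields a past inextendible achronal null geodesic $\eta(s)=(s,\sigma(s))$, and the TIF $I^+(\eta)$ is properly contained in the TIF $I^+(\beta)$ of the comoving line through its base point --- so the past causal boundary is not spacelike, contradicting hypothesis (2). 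This second, decisive use of the spacelike-boundary hypothesis against noncompactness is absent from your proposal.

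The intermediate rigidity step is also unjustified as written. From $\theta(0)=-H(p)=-n$ and $\theta'\le n-\theta^2/n$ you only get $\theta\le -n$; equality does not propagate along $\gamma_p$ unless you already know $\gamma_p$ has no focal points, i.e.\ that it maximizes distance to $V$ on arbitrarily long segments --- and you likewise cannot form the ray horosphere $S^+_\infty(\gamma_p)$ before knowing $\gamma_p$ is a ray. The paper's order of operations fixes this: first produce an honest past $V$-ray $\gamma$ via Lemma \ref{wald} (compactness of $V\cap\overline{I^+(\mu)}$ from the spacelike boundary), then use the spacelike boundary again to show the past $S^+_\infty(\gamma)$-rays are past complete so that the support estimate $H\le n\coth r\to n$ applies, and then apply the maximum principle against $H\ge n$ on $V$ --- a bound which holds on \emph{all} of $V$ by the constraint together with continuity and connectedness, not merely at $p$, a globalization your pointwise analysis skips but which the maximum-principle step needs --- to conclude $V$ coincides with the horosphere component, hence that \emph{every} past normal geodesic is a $V$-ray; only then does Raychaudhuri rigidity (Lemma \ref{lemsplit}) give the warped product. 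For the same reason your ``strict case'' stalls: a focal point along $\gamma_p$ contradicts nothing by itself, and the past $V$-ray produced by the limit-curve argument may be based at a point where $H=n$, where Raychaudhuri forces no focusing.
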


A distinctive feature of this theorem (in addition to the assumption of a spacelike past causal boundary) is that the Cauchy surface is required to be noncompact.  Theorem \ref{FRWsing} is well-illustrated by the classical dust-filled FLRW models satisfying \eqref{einstein} with $\Lambda > 0$: see, e.g., \cite[chapter 23]{dinverno}.  The spatially isotropic Cauchy surfaces in these models are, up to a time-dependent scale factor, complete simply connected spaces of constant (sectional) curvature $k = +1, 0, -1$.  If the so-called `mass parameter' is sufficiently small, the `closed' models ($k = +1$) will be past timelike geodesically complete (the limiting case being that of de Sitter space).  However, the `open' models ($k = 0, -1$), to which, in fact, our theorem applies, are all past timelike geodesically incomplete, and in fact all begin with a big bang singularity.

\smallskip
Theorem \ref{FRWsing} is a simple consequence of the following theorem.

\begin{thm}\label{warpsplit}
Let $(M,g)$ be a globally hyperbolic spacetime which satisfies the energy condition \eqref{EC}, and which has a spacelike past causal boundary $\mathscr{C}^-$.    Suppose $M$ admits a geodesically complete spacelike Cauchy surface $V$ with mean curvature $H \ge n$.  If all timelike geodesics orthogonal to $V$ are past complete then $V$ is necessarily compact, and 
$(J^-(V), g)$ is isometric to the warp product  $([0,\infty) \times V, -dt^2 + e^{-2t} h$), where $h$ is the induced metric on 
$V$. (Here $\frac{\d}{\d t}$ is past pointing.) 
\end{thm}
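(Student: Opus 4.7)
The plan is to combine a Hawking-style rigidity argument on the past-directed normal geodesic congruence from $V$ with a warp-product identification of $J^-(V)$ via the normal exponential map. The spacelike past causal boundary condition substitutes for the compactness of $V$ in the classical Hawking focal-point analysis.

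First I would set up the Raychaudhuri equation for the past-pointing normal congruence. Let $\nu$ be the past-pointing unit normal to $V$ and, for each $p \in V$, let $\eta_p \colon [0,\infty) \to M$ be the unit-speed normal geodesic $\eta_p(s) = \exp_p(s \nu_p)$, which is complete by hypothesis. Writing $\tilde\theta(s)$ for the expansion of this congruence along $\eta_p$, the initial value is $\tilde\theta(0) = -H(p) \le -n$, and the Raychaudhuri equation together with the energy condition \eqref{EC} yields
\[
\tilde\theta'(s) \;=\; -\tilde\theta(s)^2/n \,-\, |\tilde\sigma(s)|^2 \,-\, \mathrm{Ric}(\eta_p',\eta_p') \;\le\; n \,-\, \tilde\theta(s)^2/n.
\]

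Next I would compare with the Riccati ODE $\phi' = n - \phi^2/n$, whose only equilibrium that bounds $\tilde\theta$ from above is $\phi = -n$, and whose solutions starting below $-n$ blow up to $-\infty$ in finite time. Consequently, if $\tilde\theta(s_0) < -n$ at some $s_0 \ge 0$, a focal point to $V$ forms along $\eta_p$ at finite parameter time. Beyond such a focal point, $\eta_p$ would cease to maximize the past $V$-distance, and a standard limit-curve construction starting just past the focal point would produce a past-inextendible timelike curve whose TIF is strictly contained in a TIF associated to a past $V$-asymptotic ray, contradicting the hypothesis that $\mathscr{C}^-$ is spacelike. This focal-point-to-causal-boundary step is the main obstacle, as it uses the spacelike past causal boundary in place of the compactness of $V$ that the classical Hawking argument requires. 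Once carried out, it forces $\tilde\theta \equiv -n$ along every $\eta_p$, and hence $H \equiv n$ on $V$, $\tilde\sigma \equiv 0$, and $\mathrm{Ric}(\eta_p',\eta_p') \equiv -n$.

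Given this rigidity, the shape operator of each level-$s$ slice is $-\mathrm{I}$, and a standard Jacobi field computation identifies the normal exponential map $\Phi \colon [0,\infty) \times V \to M$, $(s,p) \mapsto \eta_p(s)$, as a local isometry onto its image when the domain carries the metric $-ds^2 + e^{-2s} h$, with $h$ the induced metric on $V$. Injectivity of $\Phi$ follows from the achronality of $V$ together with the absence of focal points, while surjectivity onto $J^-(V)$ follows from Proposition~\ref{GHdistset} and Proposition~\ref{sray}: every $q \in J^-(V)$ is joined to $V$ by a maximal past-directed causal geodesic, which must be timelike (else it would be a null $V$-ray contained in $V$) and orthogonal to $V$. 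Finally, compactness of $V$ is extracted from the spacelike past causal boundary inside the rigid warp structure: a divergent sequence $p_k \in V$ would produce past-normal rays $\eta_{p_k}$ whose TIFs separate into strictly nested families, again violating the spacelike nature of $\mathscr{C}^-$.
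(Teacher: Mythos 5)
The crux of your plan---that the formation of a focal point along some past normal geodesic $\eta_p$ directly contradicts the spacelike past causal boundary, via ``a standard limit-curve construction \ldots whose TIF is strictly contained in a TIF associated to a past $V$-asymptotic ray''---is not substantiated, and this is precisely the step the paper's proof is structured to avoid. A focal point only tells you that $\eta_p$ ceases to maximize distance to $V$; by itself it produces no pair of properly nested TIFs, and you give no mechanism for producing one. The paper never rules out focal points head-on. Instead it uses the spacelike boundary at three specific points: (i) the Wald--Yip Lemma \ref{wald} gives compactness of $V \cap \overline{I^+(\mu)}$, which is what allows a limit-curve argument to produce a single past-complete timelike $V$-ray $\gamma$ orthogonal to $V$; (ii) it forms the future ray horosphere $S^+_\infty(\gamma)$ and, for each past $S^+_\infty$-ray $\alpha$, uses $I^+(\alpha) \subset I^+(\gamma)$ plus spacelikeness of $\mathscr{C}^-$ to force $I^+(\alpha) = I^+(\gamma)$, hence past completeness of $\alpha$; this yields the support mean curvature bound $\le n\coth r \to n$ for $S^+_\infty$, and the maximum principle against $H \ge n$ for $V$ identifies $V$ with the component of $S^+_\infty$ through $\gamma(0)$, so that \emph{every} past normal geodesic from $V$ is a $V$-ray. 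Only after this does the Raychaudhuri rigidity (Lemma \ref{lemsplit}) enter. Note also that the $V$-ray (global maximization) property, not merely the absence of focal points, is what you need for injectivity of the normal exponential map: no focal points makes $\Phi$ an immersion, but it does not prevent two distinct past normal geodesics from crossing, and achronality of $V$ does not forbid this either.

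Your final compactness step also fails as stated. In the warped product $([0,\infty)\times V, -dt^2+e^{-2t}h)$, conformally the product $-du^2+h$, the chronological future of \emph{any} past-inextendible vertical (normal) timelike line is the whole interior of $J^-(V)$: given $(u_0,x)$, choose $u$ with $u-u_0 > d_h(x,p)$. So the TIFs attached to a divergent sequence $p_k \in V$ all coincide, and no strictly nested family arises from them. The paper's argument is different and essential here: from a Riemannian ray $\sigma$ in the complete noncompact $(V,h)$ one builds the achronal null geodesic $\eta(s) = (s,\sigma(s))$, whose TIF $I^+(\eta)$ omits the point $\sigma(0)$ by achronality yet is contained in the TIF of the vertical line through $\sigma(0)$, giving the proper inclusion that contradicts spacelikeness of $\mathscr{C}^-$ and forces $V$ to be compact. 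Your Raychaudhuri computation and the Jacobi-field identification of the warp metric are fine once $H\equiv n$, vanishing shear, and the $V$-ray property are in hand (this is essentially Lemma \ref{lemsplit}), but as written the two uses you make of the spacelike boundary---to kill focal points and to force compactness---do not go through.
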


Theorem \ref{warpsplit} may be viewed as an extension of Proposition 3.4 in \cite{AndGal}, to the case of a complete, but not necessarily compact Cauchy surface.  

Suppose that $(M,g)$ and $V$ satisfy the hypotheses of Theorem \ref{FRWsing}.  Contraction of the Gauss equation for $V \subset M$ leads to the Hamiltonian constraint,
\beq
S - 2\Lambda  - |K|^2 + H^2   = 16\pi \mu   \,, \nonumber
\eeq 
where $K$ is the second fundamental form of $V$. Using $\mu \ge 0$, 
$\Lambda = n(n-1)/2$, $S\le 0$,  and  $|K|^2 \ge H^2/n$ (by the Cauchy-Schwartz inequality) in the  above gives,
\beq
H^2  \ge n^2 -\frac{n}{n-1} S  \ge n^2 \,. \nonumber
\eeq
Since $H$ is assumed to be nonnegative somewhere, we conclude that $H \ge n$.  It follows that  $(M,g)$ and $V$ satisfy the conditions of Theorem \ref{warpsplit}.  Since $V$ is assumed to be noncompact, we see that the conclusion of  Theorem \ref{FRWsing} now follows from Theorem~\ref{warpsplit}.

We now focus attention on the proof of Theorem \ref{warpsplit}.  The proof makes essential use of the following result of Wald and Yip \cite{WaldYip}.

\begin{lem}[\cite{WaldYip}]\label{wald} 
Let $(M,g)$ be a  spacetime with Cauchy surface $S$ and with spacelike past causal boundary $\mathscr{C}^-$.  Then for any TIF $W$, $S \cap \overline{W}$ is compact.
\end{lem}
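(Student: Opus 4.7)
My plan is to argue by contradiction: assume $S\cap\overline W$ is not compact, and manufacture a TIF strictly contained in $W$, violating the spacelike hypothesis on $\mathscr{C}^-$.

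Since $S\cap\overline W$ is closed in $S$, non-compactness yields a sequence $\{p_k\}\subset S\cap\overline W$ with no convergent subsequence in $M$. Write $W=I^+(\gamma)$ for a past-inextendible timelike curve $\gamma:(-\infty,0]\to M$. The first observation is that the sequence escapes into the deep past of $\gamma$: for $p_k\in W$, pick $s_k$ with $p_k\in I^+(\gamma(s_k))$; by global hyperbolicity, $S\cap J^+(\gamma(-N))$ is compact for every $N$, so $s_k\to-\infty$, since otherwise a subsequence would lie in one such compact set, contradicting non-convergence. The case $p_k\in\partial W$ is handled analogously via the past null generator of $\partial I^+(\gamma)$ through $p_k$, which is either past-inextendible or meets $\gamma(t_k)$ with $t_k\to-\infty$.

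Next, I would pass to the indecomposable future sets $F_k:=I^+(p_k)\subseteq W$; the inclusion holds for every $p\in\overline W$ using $\mathrm{int}(\overline W)=W$, which follows since $\overline W$ is the closure of an open future set of a past-inextendible timelike curve. I would then attempt to extract a subsequential limit $F_\infty$ in the chronological topology on IFs of \cite{GKP} and decompose it into indecomposable pieces. The escape of $\{p_k\}$ on $S$ together with $s_k\to-\infty$ should force $F_\infty\subsetneq W$, and an indecomposable component of $F_\infty$, produced as $I^+(\sigma_\infty)$ for a past-inextendible timelike curve $\sigma_\infty$ obtained from a diagonal limit-curve argument anchored at a fixed compact region of $M$, then gives a TIF $W'\subsetneq W$, contradicting the spacelike hypothesis.

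The principal obstacle is establishing that the escape of the $p_k$ on $S$ genuinely produces a distinct ideal past endpoint on $\mathscr{C}^-$, and not merely the same endpoint as $\gamma$'s. The spacelike hypothesis must be used as a Hausdorff-type separation property on the chronological boundary: in a spacelike $\mathscr{C}^-$, the assignment of past-inextendible timelike curves to their ideal endpoints should be continuous in a manner that forces divergent spatial behavior of $\{p_k\}$ on $S$ to witness a distinct TIF. Anchoring the limit-curve extraction at a compact subset of $\gamma$ (say $\gamma([-1,0])$), so that the escaping $p_k$ force the limit curve to be truly future-inextendible away from the anchor, is the construction I would pursue to realize this separation cleanly; doing so rigorously, rather than appealing to the standard limit curve theorem (which requires the $p_k$ themselves to converge), is the heart of the matter.
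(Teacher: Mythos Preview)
The paper does not actually prove this lemma: it is stated with attribution to Wald--Yip and used as a black box in the proof of Theorem~\ref{warpsplit}. So there is no ``paper's own proof'' to compare against here; the relevant benchmark is the original argument in \cite{WaldYip}.

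Your overall strategy---assume $S\cap\overline W$ is noncompact, extract a divergent sequence $p_k\in S\cap\overline W$, and produce a TIF properly contained in $W$---is the correct framework, and your preliminary step is sound: for any fixed $q\in W\cap I^+(S)$, the set $J^-(q)\cap S$ is compact in a globally hyperbolic spacetime, so $q\notin I^+(p_k)$ for all large $k$. But the proof is not completed, and you say so yourself. The appeal to a ``chronological topology on IFs'' and an unspecified ``Hausdorff-type separation property'' of a spacelike $\mathscr{C}^-$ is a description of what one would \emph{like} to be true, not an argument. Concretely: the sets $F_k=I^+(p_k)$ are PIFs, not TIFs, and extending each $p_k$ by an arbitrary past-inextendible timelike curve will in general not keep the resulting TIF inside $W$; nor have you shown that any limiting future set $F_\infty$ is nonempty, is indecomposable, or misses a point of $W$. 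The anchored limit-curve extraction you propose needs a fixed convergent basepoint, but you have none---the $p_k$ diverge on $S$ and (as you note) the relevant footpoints on $\gamma$ run off to $-\infty$---so the standard limit curve lemma does not apply as stated.

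A more direct route, closer in spirit to \cite{WaldYip}, is to work with the null generators of $\partial W$ when $p_k\in\partial W$: each such generator $\eta_k$ is past-inextendible (since $\gamma$ has no past endpoint and $\gamma\subset W$), lies in $\overline W$, and $I^+(\eta_k)$ is a genuine TIF contained in $W$. The spacelike hypothesis then forces $I^+(\eta_k)=W$, hence each $\eta_k$ must enter $I^-(q)$, and one extracts a contradiction by controlling where these generators can go. You should consult \cite{WaldYip} directly for the precise endgame; the point here is only that your sketch stops exactly at the step that carries all the content.
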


This is the key consequence of assuming the past causal boundary is spacelike.   It will also be convenient for the proof of Theorem \ref{warpsplit} to single out the following lemma.

\begin{lem}\label{lemsplit}
Let $(M,g)$ be a globally hyperbolic spacetime satisfying the energy condition \eqref{EC}.   Let $V$ be a spacelike Cauchy surface for $M$ with mean curvature $H \ge n$, such that each timelike geodesic orthogonal to $V$ is past complete.  If each such geodesic is a $V$-ray then  $(J^-(V), g)$ is isometric to the warp product  $([0,\infty) \times V, -dt^2 + e^{-2t} h$), where $h$ is the induced metric on $V$.
\end{lem}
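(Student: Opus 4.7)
The plan is to analyze the past-directed normal exponential map $\Phi \colon [0,\infty) \times V \to M$, $\Phi(t,p) := \g_p(t)$, where $\g_p$ is the unit-speed past-directed geodesic orthogonal to $V$ at $p$. By the past-completeness hypothesis, $\Phi$ is defined on all of $[0,\infty)\times V$; since each $\g_p$ is a past $V$-ray, there can be no focal point along $\g_p$ (else $\g_p$ would fail to maximize past such a point), so $\Phi$ is a local diffeomorphism.

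The core step is a Riccati analysis along each $\g_p$. Let $B(t)$ be the Weingarten operator of the equidistant hypersurfaces from $V$, built from Jacobi fields along $\g_p$, and set $\theta(t) = \mathrm{tr}\, B(t)$. With initial value $B(0) = -K$, where $K$ is the Weingarten operator of $V$ with respect to the future-pointing normal, one has $\theta(0) = -H(p) \le -n$. The Raychaudhuri equation, combined with the Cauchy--Schwarz bound $\mathrm{tr}(B^2) \ge \theta^2/n$ and the energy condition $\ric(T,T) \ge -n$ (with $T = \g_p'$), yields
\beq
\theta'(t) \;\le\; -\frac{\theta(t)^2}{n} + n \;=\; -\frac{(\theta(t)-n)(\theta(t)+n)}{n} . \nonumber
\eeq
If $\theta(t_0) < -n$ anywhere, ODE comparison with the separable equation $\phi' = (n^2 - \phi^2)/n$ forces $\theta$ to blow up to $-\infty$ in finite proper time, producing a focal point and contradicting the $V$-ray hypothesis. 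Hence $H \equiv n$ on $V$, $\theta \equiv -n$ along every $\g_p$, and equality throughout the Riccati chain forces $B \equiv -I$ (umbilicity on each equidistant slice) and $\ric(T,T) \equiv -n$.

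With $B \equiv -I$, the normalized Jacobi endomorphism $A(t)$, with $A(0) = I$, satisfies $A'(t) = -A(t)$, hence $A(t) = e^{-t} I$. Pulling the metric back through $\Phi$ therefore yields $\Phi^*g = -dt^2 + e^{-2t}h$ on $[0,\infty) \times V$, so $\Phi$ is a local isometry onto its image. Surjectivity of $\Phi$ onto $J^-(V)$ follows from global hyperbolicity: every $q \in J^-(V)$ is joined to $V$ by a maximal past-directed geodesic, which must be $V$-orthogonal at its future endpoint, and hence coincides with $\g_p|_{[0,d(q,V)]}$ for some $p \in V$.

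The main remaining obstacle is injectivity. Suppose $\Phi(t_1, p_1) = \Phi(t_2, p_2) = q$ with $(t_1, p_1) \ne (t_2, p_2)$. Then $t_1 = t_2 = d(q,V) =: t^*$, and geodesic uniqueness forces $\g_{p_1}'(t^*) \ne \g_{p_2}'(t^*)$. Extend $\g_{p_1}$ slightly past $q$ to $q' := \g_{p_1}(t^* + s)$. The broken past-directed timelike curve $\g_{p_2}|_{[0,t^*]} \cup \g_{p_1}|_{[t^*, t^*+s]}$ from $p_2$ to $q'$ has a genuine corner at $q$ and length $t^* + s$; the standard Lorentzian corner-cutting argument smooths this into a strictly longer timelike curve from $p_2$ to $q'$, giving $d(q', p_2) > t^* + s$. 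But $p_2 \in V$ and the $V$-ray property of $\g_{p_1}$ give $d(q', p_2) \le d(q', V) = t^* + s$, a contradiction. With injectivity established, $\Phi$ is a global isometry between $([0,\infty) \times V, -dt^2 + e^{-2t}h)$ and $(J^-(V), g)$, as desired.
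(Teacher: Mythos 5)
Your proof is correct and follows essentially the same route as the paper's: introduce the past-directed normal exponential map (Gaussian normal coordinates) off $V$, use the $V$-ray hypothesis to rule out focal points and intersections, and run the Raychaudhuri/Riccati comparison with $\ric(T,T)\ge -n$ and $H\ge n$ to force $H\equiv n$ and vanishing shear, which yields the warped-product metric $-dt^2+e^{-2t}h$. The only difference is organizational: you spell out the injectivity (corner-cutting) and surjectivity (maximal geodesics to the Cauchy surface) of the normal exponential map, steps the paper asserts as standard.
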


\begin{proof} [Proof of Lemma \ref{lemsplit}]  
The proof technique is fairly standard.  The $V$-ray assumption implies  that there can be no focal points to $V$ along any past directed normal geodesic.  Furthermore, no two past directed normal geodesics can intersect.  It follows that global Gaussian normal coordinates can be introduced on $J^-(V)$, i.e. up to isometry, $J^-(V) = [0,\infty) \times V$, and on $J^-(V)$, 
\beq\label{metric} 
g = -dt^2 + h_t  \,, %\nonumber
\eeq 
where, for each $t \in [0,\infty)$, $h_t$ is the induced metric on $V_t = \{t\} \times V$; in local coordinates we write, $h_t = h_{ij}(t,x) dx^i dx^j$.

For each $t$, let $H = H_t$ and $K = K_t$ be the mean curvature and second fundamental form, respectively of $V_t$ defined with respect to the {\it future} unit normal  field $u = -\frac{\d}{\d t}$.  $H = H_t$ obeys the traced Riccati (Raychaudhuri) equation,
\beq\label{riccati}
\frac{\partial H}{\partial t} = \ric(u,u) + \frac{H^2}{n} + |\s|^2  \,, %\nonumber
\eeq
where $\s = \s_t$, the {\it shear tensor}, is the trace free part of $K$, $\s = K - \frac{H}{n} h$.   Since, by assumption 
$\ric(u,u) \ge -n$, we obtain the differential inequality
\beq
\frac{\partial H}{\partial t} \ge \frac{H^2}{n} - n \,, \nonumber
\eeq
where, in addition, $H(0) \ge n$.  It follows by basic comparison techniques (see e.g., \cite[Section 1.6]{Karcher}, 
\cite[Proposition  3.4]{AndGal}) that $H(t) \ge n$ for all $t \ge 0$.  If moreover, one had $H(t) > n$ at some point along a normal  geodesic to $V$, then, by the same sort of comparison techniques, $H$ would necessarily diverge to $+\infty$ in finite parameter time $t$, which would be a contradiction.    Hence, we must in fact have $H(t) \equiv n$ for all $t \ge 0$.  Using this and the Ricci curvature condition \eqref{EC} in \eqref{riccati}, we conclude that the shear $\s$ vanishes for all $t \ge 0$,  which in turn implies that  $K_t = h_t$, or in terms of coordinates, $K_{ij} = h_{ij}$ for all $t \ge 0$.   Since, in our Gaussian normal coordinates, $K_{ij} = -\frac12\frac{\partial h_{ij}}{\partial t}$, we obtain 
$h_t = e^{-2t}h$.  Insertion of this in \eqref{metric} yields the desired result.
\end{proof}

\begin{proof} [Proof of Theorem \ref{warpsplit}]   We use Lemma \ref{wald} to construct a past $V$-ray.  Let $\mu$ be a past inextendible timelike curve starting on $V$.   Consider a sequence of points $q_k$ on $\mu$, $q_{k+1} \in I^-(q_k)$, exhausting 
$\mu$ to the past.  Let $\g_k$ be a past directed timelike geodesic maximizer to $V$, from $p_k \in V$ to $q_k$; $\g_k$ meets $V$ orthogonally.   Since we are assuming the past causal boundary is spacelike,  Lemma \ref{wald} applied to the TIF $I^+(\mu)$ gives that $V \cap \overline{I^+(\mu)}$ is compact.  Since each $p_k \in V \cap \overline{I^+(\mu)}$, it follows that a subsequence of the $\g_k$'s converges to a past inextendible timelike geodesic $\g$ orthogonal $V$ starting at $p \in V$, say.  By the maximality of the $\g_k$'s , $\g$ is a past complete $V$-ray. 

Now, let  $S^+_{\infty} = S^+_{\infty}(\g)$ be the future ray horosphere associated to $\g$.   By Proposition \ref{rayhoroprop}, we know (i) $S^+_{\infty} \subset J^+(V)$, (ii) $S^+_{\infty}$ passes through $p = \g(0)$, and 
(iii) $S^+_{\infty}$ is acausal and there is a past timelike $S^+_{\infty}$-ray emanating from each point.  Consider any such $S^+_{\infty}$-ray $\a$. By the manner in which these rays are constructed (see \cite[Lemma 3.22]{horo1}), $\a \subset \overline{I^+(\g)}$, and hence $I^+(\a) \subset I^+(\g)$.  Since we are assuming the past causal boundary is spacelike, we must in fact have $I^+(\a) = I^+(\g)$. It follows that for any point on $\g$ there is a point on $\a$ in its timelike past. Hence, since $\g$ is past complete and $\a$ maximizes length to $S^+_{\infty}$, $\a$ must also be past complete. Let $\a : [0,\infty) \to M$ be parameterized with respect to arc length.  For any $r > 0$,  the future sphere $S^+_r(\a(r))$ is smooth near $\a(0)$ and lies locally to the future of $S^+_{\infty}$.  Using~\eqref{EC},  by standard comparison techniques, (\cite[Section 1.6]{Karcher}, \cite[Lemma 6.4]{horo1}) $S^+_r(\a(r))$ has mean curvature $\le n\coth r$ at $\a(0)$. It follows that $S^+_{\infty}$ has mean curvature $\le n$ in the support sense.  Let $S^+$ be the connected component of $S^+_\infty$ which contains $\g(0)$. Then $S^+ \cap V$ is non-empty and closed. Since $S^+$ meets $V$ locally to the future near any intersection point $x \in S^+\cap V$, the maximum principle in \cite{AGH} gives that, for some spacetime neighborhood $U$ of $x$, we have $V \cap U =  S^+ \cap U$. It follows that $S^+ \cap V$ is open in both $V$ and $S^+$, and hence that $V= S^+$. Consequently, the timelike past $S^+_\infty$-rays from each point of $S^+ = V$ are also $V$-rays. But these $V$-rays are precisely the past normal geodesics from $V$. 

We may now apply Lemma \ref{lemsplit} to conclude that $(J^-(V), g)$ is isometric to the warped product  $([0,\infty) \times V, -dt^2 + e^{-2t} h$), where $h$ is the induced metric on $V$.   To complete the proof, we show that if $V$ is noncompact then  the past causal boundary of $(J^-(V), g)$ (which agrees with the past causal boundary of $(M,g)$) is not spacelike, contrary to assumption.  Under the change of variable $u = e^t -1$, $g$ on $J^-(V) = \{(u,x): u \ge0, x \in V\}$ becomes $g = (u+1)^{-2}\tilde g$, where $\tilde g$ is the product metric, $\tilde g = -du^2 +h$.   Since the causal boundary is conformally invariant, we may work with $\tilde g$.  In what follows, all spacetime quantities refer to this metric. If $(V,h)$ is complete and noncompact, from any point $q \in V$, we can construct a unit speed ray $\s: [0, \infty) \to V$, $s \to \s(s)$.  Then $\eta: [0, \infty) \to J^-(S)$, defined by $\eta(s) = (s, \s(s))$ is a past inextendible achronal null geodesic. By considering a past inextendible timelike curve in $I^+(\eta)$ that asymptotes to $\eta$, we see that $I^+(\eta)$ defines a TIF.  Let $\b$ be the past directed time line, $\beta(u) = (u,q)$, $u \ge 0$.   By the product structure of $\tilde g$,  $\eta \subset I^+(\b)$, and hence $I^+(\eta) \subset I^+(\b)$. On the other hand, since $\eta$ is achronal, $q \notin I^+(\eta)$. Thus, $I^+(\eta)$ is a proper subset of $I^+(\b)$, which implies that the past causal boundary is not spacelike.  Hence, $V$ must be compact.
\end{proof}

\bibliographystyle{amsplain}
\bibliography{horosequel_arXiv}

\end{document}